\pdfoutput=1
\documentclass[11pt]{article}

\usepackage[margin=1in]{geometry}
\usepackage[T1]{fontenc}
\usepackage[utf8]{inputenc}

\usepackage{subcaption}
\usepackage[running, mathlines]{lineno}
\usepackage{braket,amsfonts,amssymb}
\usepackage{enumitem,mathtools}
\usepackage{xcolor}
\usepackage{colortbl}
\usepackage{amsmath}
\usepackage{amsthm}
\usepackage{array}
\usepackage{wrapfig}
\usepackage{pgfplots}
\usepackage{algorithmic}
\usepackage{tikz}
\usepackage{graphicx,epstopdf}
\usepackage{mathrsfs}
\usepackage{amsopn}
\usepackage{xspace}
\usepackage{bold-extra}
\usepackage[most]{tcolorbox}
\usepackage{caption}
\usepackage{color}
\usepackage{etoolbox}
\usepackage{titlesec}
\usepackage[colorlinks=true,linkcolor=blue,citecolor=blue,urlcolor=blue]{hyperref}
\usepackage[nameinlink,capitalize]{cleveref}
\usepackage{listings}

\AtBeginDocument{
\addtolength{\abovedisplayskip}{-0.8ex}
\addtolength{\abovedisplayshortskip}{-0.8ex}
\addtolength{\belowdisplayskip}{-0.8ex}
\addtolength{\belowdisplayshortskip}{-0.8ex}
}

\makeatletter
\def\thm@space@setup{\thm@preskip=3pt \thm@postskip=3pt}
\makeatother

\titlespacing*{\section}{0pt}{0.6\baselineskip}{0.2\baselineskip}
\titlespacing*{\subsection}{0pt}{0.6\baselineskip}{0.2\baselineskip}

\numberwithin{equation}{section}
\numberwithin{table}{section}
\numberwithin{figure}{section}

\newcommand{\EQ}{\begin{equation}}
\newcommand{\EN}{\end{equation}}
\newcommand{\EQS}{\begin{equation*}}
\newcommand{\ENS}{\end{equation*}}
\newcommand{\EQA}{\begin{eqnarray}}
\newcommand{\ENA}{\end{eqnarray}}
\newcommand{\EQAS}{\begin{eqnarray*}}
\newcommand{\ENAS}{\end{eqnarray*}}
\newcommand{\AL}{\begin{align}}
\newcommand{\AN}{\end{align}}
\newcommand{\ALS}{\begin{align*}}
\newcommand{\ANS}{\end{align*}}

\newcommand{\myblue}{\color{black}}

\definecolor{lightblue}{rgb}{0.3, 0.0, 0.9}

\newcommand{\Ebb}{\mathbb{E}}

\newcommand{\Rbb}{\mathbb{R}}

\newtheorem{theorem}{Theorem}[section]
\newtheorem{lemma}[theorem]{Lemma}

\newtheorem{assumption}{Assumption}[section]
\newtheorem{definition}{Definition}[section]

\theoremstyle{remark}
\newtheorem{remark}[theorem]{Remark}

\crefname{hypothesis}{Hypothesis}{Hypotheses}
\Crefname{hypothesis}{Hypothesis}{Hypotheses}
\Crefname{ALC@unique}{Line}{Lines}

\colorlet{texcscolor}{blue!50!black}
\colorlet{texemcolor}{red!70!black}
\colorlet{texpreamble}{red!70!black}
\colorlet{codebackground}{black!25!white!25}

\lstdefinestyle{siamlatex}{%
  style=tcblatex,
  texcsstyle=*\color{texcscolor},
  texcsstyle=[2]\color{texemcolor},
  keywordstyle=[2]\color{texemcolor},
  moretexcs={cref,Cref,maketitle,mathcal,text,email,url},
}

\tcbset{%
  colframe=black!75!white!75,
  coltitle=white,
  colback=codebackground,
  colbacklower=white,
  fonttitle=\bfseries,
  arc=0pt,outer arc=0pt,
  top=1pt,bottom=1pt,left=1mm,right=1mm,middle=1mm,boxsep=1mm,
  leftrule=0.3mm,rightrule=0.3mm,toprule=0.3mm,bottomrule=0.3mm,
  listing options={style=siamlatex}
}

\newtcblisting[use counter=example]{example}[2][]{%
  title={Example~\thetcbcounter: #2},#1}

\newtcbinputlisting[use counter=example]{\examplefile}[3][]{%
  title={Example~\thetcbcounter: #2},listing file={#3},#1}

\DeclareTotalTCBox{\code}{ v O{} }
{
  fontupper=\ttfamily\color{black},
  nobeforeafter,
  tcbox raise base,
  colback=codebackground,colframe=white,
  top=0pt,bottom=0pt,left=0mm,right=0mm,
  leftrule=0pt,rightrule=0pt,toprule=0mm,bottomrule=0mm,
  boxsep=0.5mm,
  #2}{#1}

\patchcmd\newpage{\vfil}{}{}{}
\title{Convergence of Neural Network Policies for Risk--Reward Optimization}

\author{
Chang Chen\thanks{School of Mathematics and Physics, The University of Queensland, St Lucia, Brisbane 4072, Australia (\href{mailto:chang.chen1@student.uq.edu.au}{chang.chen1@student.uq.edu.au}).}
\and
Duy-Minh Dang\thanks{School of Mathematics and Physics, The University of Queensland, St Lucia, Brisbane 4072, Australia (\href{mailto:duyminh.dang@uq.edu.au}{duyminh.dang@uq.edu.au}).}
}

\date{}

\begin{document}
\maketitle

\begin{abstract}
We develop a neural-network framework for multi-period risk--reward
stochastic control problems with constrained two-step feedback policies that
may be discontinuous in the state.
We allow a broad class of objectives built on a
finite-dimensional performance vector, including terminal and path-dependent
statistics, with risk functionals admitting auxiliary-variable optimization
representations (e.g.\ Conditional Value-at-Risk and buffered probability of
exceedance) and optional moment dependence.
Our approach parametrizes the two-step policy using two coupled feedforward
networks with constraint-enforcing output layers, reducing the constrained control
problem to unconstrained training over network parameters.
Under mild regularity conditions, we prove that the
empirical optimum of the NN-parametrized objective
converges in probability to the true optimal value as network capacity and
training sample size increase. The proof is modular, separating
policy approximation, propagation through the controlled recursion, and
preservation under the scalarized risk--reward objective.
Numerical experiments confirm the predicted convergence-in-probability behavior, show close agreement between learned and reference control heat maps, and demonstrate out-of-sample robustness on a large \mbox{independent scenario set.}
\end{abstract}

\vspace{0.5em}
\noindent\textbf{Keywords.}
neural networks, policy approximation, risk--reward optimization, convergence analysis

\medskip
\noindent\textbf{MSC2020.}
93E20, 68T07, 90C15, 62G05

\section{Introduction}
\label{sec:intro}
Discrete-intervention stochastic control problems arise whenever decisions are made at a finite set of intervention times and the system evolves stochastically between interventions. In many applications, decisions are subject to pointwise constraints (e.g.\ feasibility, budgets, or operational limits),
and the objective function encodes a trade-off between a notion of reward and a notion of risk. This includes settings with two-step constrained interventions at each time (e.g.\ a pre-decision adjustment followed by a post-decision allocation), as well as terminal and discrete-time path-dependent performance criteria. Such problems arise broadly in finance and insurance, economics, and engineering.

Neural networks (NNs) provide a flexible framework for approximating feedback policies in settings where dynamic-programming discretizations become expensive or infeasible. An increasingly studied approach is to parameterize the policy directly and optimize an empirical objective via stochastic gradient methods. This idea underlies a rapidly developing literature on NN-based policy approximation for stochastic control and related problems \cite{buehler2019deep, li2019data, reppen2023deepstochastic, reppen2023deep, vanstaden2024global, hure2021deep, bachouch2022deep, han2016deep}.
Following the terminology in \cite{hu2024recent}, these approaches can be divided into two categories: global-in-time and local-in-time. In a global-in-time approach,
the control over all decision times is determined by solving a single training problem at the initial time (even if the parametrization uses multiple time-indexed subnetworks), as in, e.g.\ \cite{buehler2019deep, li2019data, reppen2023deepstochastic, reppen2023deep, vanstaden2024global}. In contrast, local-in-time methods follow a backward-induction structure and train time-indexed approximations sequentially, typically using a separate network (or parameter set) at each decision time \cite{hure2021deep, bachouch2022deep}. In this sense, local-in-time implementations are often described as stacked NN schemes, since they stack one approximator per decision time and fit them step-by-step (see also \cite{tsang2020deep}). From a broader viewpoint, both paradigms are instances of policy function approximation for stochastic control \cite{powell2023universal}.

%A central question for these approaches concerns the convergence pipeline: how approximation of feedback maps by NNs propagates through the controlled state recursion and, in turn, through a risk--reward objective, and whether the resulting empirical training problem is consistent, i.e.\ converges to the true optimum as the training sample size and NN capacity increase.
%A key technical difficulty is that learned policies are evaluated at moving inputs, since the controlled state is itself generated by the learned policy.
%A standard strategy in existing convergence pipelines is to assume global continuity of the optimal feedback rule in the state variable and invoke uniform approximation over the feature domain. A representative example is the discrete-time global-in-time convergence analysis in \cite{vanstaden2024global}, which uses these assumptions to control evaluation at network-dependent state inputs.

A central question for these approaches concerns the convergence pipeline: how approximation of feedback maps by NNs propagates through the controlled state recursion and, in turn, through a risk--reward objective, and whether the resulting empirical training problem is consistent, i.e.\ converges to the true optimum as the training sample size and NN capacity increase.
A key technical difficulty is that learned policies are evaluated at moving inputs, since the controlled state is itself generated by the learned policy.
{\myblue{A convenient sufficient route in discrete-time convergence analyses is to assume global continuity of the optimal feedback rule in the state variable and to invoke uniform approximation over the relevant feature domain; see, for example, \cite{vanstaden2024global}.}}

While appropriate when continuity is structurally justified, this strategy does not directly accommodate constrained controls that naturally exhibit discontinuities, a common feature of practical intervention problems. Even in low-dimensional models, binding constraints can induce threshold/bang--bang feedback rules, so sup-norm uniform approximation is not the right notion for discontinuous targets
and continuity at moving inputs can fail on the discontinuity set.

This paper develops a convergence framework for discrete-intervention risk--reward control that remains valid in the presence of such discontinuities, and separates the approximation, propagation, and objective layers of the argument. Our setting accommodates a broad class of scalarized risk--reward objectives, including path-dependent criteria, auxiliary-variable (multi-level) risk representations,
as well as moment dependence.  As is standard in global-in-time policy approximation,
our analysis is formulated at the initial time. For objectives that are separable in the dynamic-programming sense, this yields the time-consistent optimal strategy; for nonseparable risk--reward criteria, it yields the corresponding optimal \mbox{pre-commitment strategy.}

Our main contributions are as follows.
\begin{itemize}[leftmargin=*, noitemsep, topsep=2pt]
\item
We formulate a discrete-intervention control problem with a two-step feedback policy evaluated at pre- and post-decision times. This captures settings where a pre-decision adjustment (e.g.\ withdrawal/consumption/liquidation) is coupled with a post-decision allocation step, with both actions subject to pointwise constraints.

\item
We represent a broad class of risk-reward objectives via a finite-dimensional performance vector extracted from the controlled recursion, allowing terminal and discrete-time path-dependent statistics. The risk functional can be specified through auxiliary-variable optimization representations (e.g.\ Conditional Value-at-Risk and buffered probability of exceedance), as well as path-based measures such as realized quadratic variation, with optional higher-moment dependence. This yields a modular objective class within our convergence framework.

\item
We parameterize the two policy components by feedforward NNs equipped with constraint-enforcing output layers (e.g.\ interval and simplex maps), so feasibility holds by construction while the optimization is unconstrained in the NN weights.

\item
We replace global continuity requirements with a weaker ``null discontinuity'' condition: the optimal feedback maps may be discontinuous provided their discontinuity sets are hit with probability zero under the optimal controlled state at intervention times. Using a probabilistic moving-input stability argument (based on the Portmanteau theorem \cite{Billingsley1999}), we propagate NN approximation through the controlled recursion without requiring global continuity of the optimizer.

\item
Under compact-domain regularity conditions, we prove that the empirical optimum of the NN-parametrized objective converges in probability to the true optimal value as NN capacity and training sample size increase. The argument is modular, separating (i) approximation within the admissible policy class, (ii) propagation through the controlled recursion under moving-input stability, (iii) preservation under a general scalarized risk--reward functional, and (iv) a uniform law of large numbers for the empirical objective.
\end{itemize}
Numerical experiments on a representative Defined Contribution decumulation problem, in which withdrawals and allocations are constrained and the computed optimal withdrawal policy exhibits bang--bang structure, are benchmarked against a provably convergent low-dimensional grid-based reference value.
In addition to the predicted convergence-in-probability behavior as NN capacity and training sample size increase, the learned withdrawal and allocation heat maps show excellent agreement with the reference policies, and these conclusions remain stable under evaluation on a large independent out-of-sample scenario set.

\section{Problem formulation}
\label{sec:formulation}

\subsection{Probability space and intervention times}
We work on a complete filtered probability space
$\big(\Omega,\mathcal{F},\{\mathcal{F}_t\}_{0\le t\le T},\mathbb{P}\big)$,
satisfying the usual conditions on a finite horizon $T>0$.
Within this probabilistic framework, decisions are made at a predetermined, equally spaced set of intervention times
\begin{equation}
\label{eq: T_M}
\mathcal{T}=\{t_m:\ t_m=m\Delta t,\ m=0,\ldots,M\},\qquad \Delta t=T/M,
\end{equation}
where $t_0=0$ is the initial time.\footnote{Equal spacing is assumed for notational simplicity; the analysis extends directly to any fixed finite set of intervention times.}
At each $t_m\in\mathcal{T}$, we distinguish a pre-decision time $t_m^-$ and a post-decision time $t_m^+$.
With the shorthand $t^-:=t-\epsilon$ and $t^+:=t+\epsilon$ (as $\epsilon\to 0^+$), we write
$f_{m^-}:=\lim_{\epsilon\to 0^+} f(t_m-\epsilon)$ and
$f_{m^+}:=\lim_{\epsilon\to 0^+} f(t_m+\epsilon)$.

\subsection{Exogenous input process}
Fix an integer $d_a\ge 1$.
Let $Y_m:=(Y_m^{(i)})_{i=1}^{d_a}\in\Rbb^{d_a}$ denote an exogenous input (or shock) vector at each $t_m\in\mathcal{T}$.
Then $\{Y_m\}_{m=1}^{M}$ defines an $\Rbb^{d_a}$-valued discrete-time stochastic process adapted to
$\{\mathcal{F}_{t_m^-}\}_{m=1}^{M}$, where $Y_m$ is realized over $[t_{m-1}^+,t_m^-]$ and observed at time $t_m^-$.
We write this process as $Y$, where
\begin{equation}
\label{eq: Y_R}
%\mathbf{Y}:=\{Y_m\}_{m=1}^{M},\qquad Y_m:=(Y_m^i)_{i=1}^{d_a}.
Y:=\{Y_m\}_{m=1}^{M},\qquad Y_m:=(Y_m^{(i)})_{i=1}^{d_a}.
\end{equation}
Assume $\Ebb[\|Y_m\|]<\infty$ for each $m\in \{1,\ldots,M\}$,
where $\|\cdot\|$ denotes a \mbox{fixed norm on $\Rbb^{d_a}$.}

\subsection{State, admissible controls, and controlled dynamics}
We consider a scalar controlled state process $\{W(t)\}_{0\le t\le T}$.
As feature (state) variables, we use time and the current state value,
\[
\phi(t):=(t,W(t)),\qquad \mathcal{D}_\phi:=[0,T]\times\mathbb{R}.
\]
A two-step feedback control is a pair $\mathcal{P}=(q,p)$ of Borel measurable maps
\[
q:\mathcal{D}_\phi\to\mathbb{R},
\qquad
p:\mathcal{D}_\phi\to\mathbb{R}^{d_a},
\]
where $q$ is applied at the pre-decision times $t_m^-$ for $m=0,\ldots,M$, and
$p$ is applied at the post-decision times $t_m^+$ for $m=0,\ldots,M-1$
(i.e.\ no post-decision action is taken at $t_M^+$). In many applications, $q$ represents a pre-decision state adjustment (injection or extraction) and $p$ an allocation of a scalar resource among $d_a$ components, but our analysis only uses measurability and the pointwise constraints below.

\medskip
\noindent\textit{Pointwise constraints.}
We impose (i) an interval-type constraint on the pre-decision action $q$ and (ii) a simplex-type constraint on the post-decision action $p$.
For given constants $q_{\min}\le q_{\max}$, define the admissible pre-decision set map
\begin{equation}
\label{eq: Z_q}
\mathcal{Z}_q(w)
=
\begin{cases}
\left[q_{\min},\,q_{\max}\right], & \text{if } w\ge q_{\max},\\[2pt]
\left[q_{\min},\,w\right], & \text{if } q_{\min}< w< q_{\max},\\[2pt]
\left\{q_{\min}\right\}, & \text{if } w\le q_{\min}.
\end{cases}
\end{equation}
Admissible post-decision actions lie in the simplex
\begin{equation}
\label{eq: Z_p}
\mathcal{Z}_p=\big\{z\in\mathbb{R}^{d_a}:\ z_i\ge 0,\ \sum_{i=1}^{d_a}z_i=1\big\}.
\end{equation}
Accordingly, define the admissible policy classes
\begin{align}
\mathcal{G}_{q}
&=
\left\{
q:\mathcal{D}_{\phi}\to \mathbb{R}
\ \middle|\
q \text{ is Borel measurable and } q(t,w)\in \mathcal{Z}_q(w)\ \forall (t,w)\in\mathcal{D}_\phi
\right\},
\label{eq: G_q}
\\
\mathcal{G}_{p}
&=
\left\{
p:\mathcal{D}_{\phi}\to \mathbb{R}^{d_a}
\ \middle|\
p \text{ is Borel measurable and } p(t,w)\in \mathcal{Z}_{p}\ \forall (t,w)\in\mathcal{D}_\phi
\right\},
\label{eq: G_p}
\end{align}
and the admissible control pairs
\begin{equation}
\label{eq: G}
\mathcal{G}
=
\left\{
\mathcal{P}=(q,p):\ q\in\mathcal{G}_{q},\ p\in\mathcal{G}_{p}
\right\}.
\end{equation}

\noindent\textit{Controlled recursion at intervention times.}
Given $\mathcal{P}\in\mathcal{G}$ and the exogenous input process $Y$, let
$\{W(t;\mathcal{P},Y)\}_{0\le t\le T}$ denote the induced controlled state process.
When the dependence on $(\mathcal{P},Y)$ is understood, we write $W(t) \equiv W(t;\mathcal{P},Y)$. We also write $W(T^+;\mathcal{P},Y)$ for the terminal post-decision state immediately after the final
pre-decision action at time $T$ (if any).

Fix measurable update maps
\[
U_q:[0,T]\times\mathbb{R}\times\mathbb{R}\to\mathbb{R},
\qquad
U_p:[0,T]\times\mathbb{R}\times\mathbb{R}^{d_a}\times\mathbb{R}^{d_a}\to\mathbb{R}.
\]
Given $\mathcal{P}\in\mathcal{G}$ and $Y$, we define the induced controlled state sequence recursively~by
\begin{equation}
\label{eq:rr_skeleton_updates}
W(t_m^+;\mathcal{P},Y)
=
U_q\left(t_m,\;W(t_m^-;\mathcal{P},Y),\;q\left(t_m^-,W(t_m^-;\mathcal{P},Y)\right)\right),
\end{equation}
for $m=0,\ldots,M$, and
\begin{equation}
\label{eq:rr_skeleton_evolution}
W(t_{m+1}^-;\mathcal{P},Y)
=
U_p\left(t_m,\;W(t_m^+;\mathcal{P},Y),\;p\left(t_m^+,W(t_m^+;\mathcal{P},Y)\right),\;Y_{m+1}\right),
\end{equation}
for $m=0,\ldots,M-1$, with initial condition $W(t_0^-)=w_0$.
We interpret $W(T^+;\mathcal{P},Y)$ as the terminal post-decision state after the final pre-decision action at $T=t_M$.
We write $W(t_m^\pm):=W(t_m^\pm;\mathcal{P},Y)$ whenever no confusion can arise.
\begin{remark}[Well-posedness of the controlled recursion]
\label{rem:well_posed_recursion}
Fix $\mathcal{P}\in\mathcal{G}$ and $Y$.
Starting from $W(t_0^-)=w_0$, the update rules \eqref{eq:rr_skeleton_updates}--\eqref{eq:rr_skeleton_evolution}
define recursively a unique finite sequence $\{W(t_m^-),W(t_m^+)\}_{m=0}^{M}$ whenever the maps $U_q$ and $U_p$ are measurable.
The convergence analysis later only uses measurability and continuity of these updates on bounded sets, and it allows piecewise-defined dynamics (e.g.\ regime switching) provided the realized recursion is continuous in the current state and action variables on the relevant bounded domain.
\end{remark}
\subsection{Risk--reward objective}
Throughout, $\mathbb{E}^{w_0,t_0^-}_{\mathcal{P}}[\cdot]$ denotes expectation under $\mathbb{P}$ with respect to the probability law induced by applying the control pair $\mathcal{P}$, conditional on the initial condition $W(t_0^-)=w_0$. The only source of randomness is the exogenous process $Y$.

Because the horizon is finite and actions are applied only at the finite intervention set $\mathcal{T}$, we consider a broad class of risk--reward objectives that can be expressed in terms of a finite-dimensional random performance vector constructed from the controlled state sequence and the realized actions at the intervention times.
Specifically, for any $\mathcal{P}$ and $Y$, we define a finite-dimensional performance vector
$S(\mathcal{P},Y)\in\mathbb{R}^d$ by collecting a chosen subset of the controlled state and action variables at the intervention times, for example
\begin{equation}
\label{eq:rr_perf_vector}
\begin{aligned}
S(\mathcal{P},Y)
:=
\Big(&
\big(W(t_m^-;\mathcal{P},Y)\big)_{m=0}^M,\;
\big(W(t_m^+;\mathcal{P},Y)\big)_{m=0}^M,\;
\\
&
\big(q(t_m^-,W(t_m^-;\mathcal{P},Y))\big)_{m=0}^M,\;
\big(p(t_m^+,W(t_m^+;\mathcal{P},Y))\big)_{m=0}^{M-1}
\Big),
\end{aligned}
\end{equation}
where $d<\infty$ depends on the selected components (in particular, $d\le 3(M+1)+M d_a$ for the collection displayed in \eqref{eq:rr_perf_vector}).\footnote{The specific contents of $S$ can be adapted to the application; the only requirement for the convergence pipeline is that $S$ is a finite-dimensional measurable functional of the controlled recursion.}

%\medskip
%\noindent\textit{Reward.}
\subsubsection{Reward}
Given the performance vector $S(\mathcal{P},Y)$, the reward component of the objective is defined as the expected value of a measurable function of $S$.
Let $\mathcal{R}:\mathbb{R}^d\to\mathbb{R}$ be measurable and define the reward functional
\begin{equation}
\label{eq:rr_reward}
\mathcal{J}_{\mathcal{R}}(w_0,t_0^-;\mathcal{P})
:=
\mathbb{E}^{w_0,t_0^-}_{\mathcal{P}}\!\big[\,\mathcal{R}(S(\mathcal{P},Y))\,\big].
\end{equation}
Here $S(\mathcal{P},Y)\in\mathbb{R}^d$ is a user-chosen finite-dimensional vector of terminal and/or path-dependent statistics of the controlled state/action sequence (e.g.\ terminal state and selected intermediate values).
This template covers, for example:
\begin{itemize}[noitemsep, topsep=2pt, leftmargin=*]
\item \emph{Terminal rewards: }
Choose $S(\mathcal{P},Y):=W(T^+)\in\mathbb{R}$.
Let $\mathcal{R}:\mathbb{R}\to\mathbb{R}$ be defined by $\mathcal{R}(s)=U(s)$,
where $U:\mathbb{R}\to\mathbb{R}$ is a given measurable payoff/utility function.
Then $\mathcal{R}\big(S(\mathcal{P},Y)\big) =
U\!\left(W(T^+)\right)$.
As a special case, taking $U(s)=s$ yields the linear terminal reward
$\mathcal{R}\big(S(\mathcal{P},Y)\big)=W(T^+)$.

\item \emph{Cumulative adjustment rewards:}
Choose $S(\mathcal{P},Y):=\left(q\!\left(t_m^-,W(t_m^-)\right)\right)_{m = 0}^M\in\mathbb{R}^{M+1}$.
Let $\mathcal{R}:\mathbb{R}^{M+1}\to\mathbb{R}$ be defined by
$\mathcal{R}(s)=\sum_{m=0}^M s_m$.
Then
$\mathcal{R}\big(S(\mathcal{P},Y)\big)=\sum_{m=0}^M q_m(\cdot)$.
In a pension setting, $q_m(\cdot)$ may represent the withdrawal at time $t_m$, so $\mathbb{E}\!\big[\sum_{m=0}^M q_m(\cdot)\big]$ is the cumulative expected withdrawals over the horizon.

\item \emph{Quadratic / one-sided quadratic criteria \cite{zhou2000, DM2016semi}:}
Choose $S(\mathcal{P},Y):=W(T^+)\in\mathbb{R}$.
Let $\mathcal{R}:\mathbb{R}\to\mathbb{R}$ be defined by
$\mathcal{R}(s)=-(s-\kappa)^2$ or
$\mathcal{R}(s)=-(\min\{s-\kappa,0\})^2+\lambda s$,
with target level $\kappa\in\mathbb{R}$ and weight $\lambda\ge 0$.
Then $\mathcal{R}\big(S(\mathcal{P},Y)\big)
=-(W(T^+)-\kappa)^2$
or
$\mathcal{R}\big(S(\mathcal{P},Y)\big)
=-(\min\{W(T^+)-\kappa,0\})^2+\lambda\,W(T^+)$.
\end{itemize}

%\medskip
%\noindent\textit{Risk via an auxiliary variable (optional moment dependence).}
\subsubsection{Risk}
We represent a broad class of risk measures using an auxiliary variable $\xi\in\Xi$ (e.g.\ a threshold in tail-risk criteria), where $\Xi\subseteq\mathbb{R}$ is an auxiliary-variable domain. Some risk measures also depend on moments of the performance vector; to accommodate this, we allow an optional moment mapping. Specifically, let $\Psi:\mathbb{R}^d\to\mathbb{R}^{d_\Psi}$ be measurable and define the (finite-dimensional) moment vector
\begin{equation}
\label{eq:rr_moment_vector}
\bar{S}(\mathcal{P})
:=
\mathbb{E}^{w_0,t_0^-}_{\mathcal{P}}\!\big[\,\Psi(S(\mathcal{P},Y))\,\big]\in\mathbb{R}^{d_\Psi}.
\end{equation}
(If moment-dependence is not needed, take $d_\Psi=0$ and suppress $\bar{S}$ below.)
Let $\mathcal{L}:\Xi\times\mathbb{R}^d\times\mathbb{R}^{d_\Psi}\to\mathbb{R}$ be measurable. Define the risk functional by
\begin{equation}
\label{eq:rr_risk}
\mathcal{J}_{\mathcal{L}}(w_0,t_0^-;\mathcal{P})
:=
\sup_{\xi\in\Xi}\,
\mathbb{E}^{w_0,t_0^-}_{\mathcal{P}}\!\big[\,\mathcal{L}(\xi,S(\mathcal{P},Y),\bar{S}(\mathcal{P}))\,\big].
\end{equation}
This template includes, for example:
\begin{itemize}[noitemsep, topsep=2pt, leftmargin=*]
\item \emph{Rockafellar--Uryasev CVaR of the terminal state \cite{RT2000}:}
\mbox{Choose $S(\mathcal{P},Y):=W(T^+)\in\mathbb{R}$} and take $\Xi=\mathbb{R}$.
No moment dependence is needed (take $d_\Psi=0$ and suppress $\bar S$).
Let $\mathcal{L}:\Xi\times\mathbb{R}\to\mathbb{R}$ be defined by
$\mathcal{L}(\xi,s)=\xi+\frac{1}{\alpha}\min\{s-\xi,0\}$ for $\alpha\in(0,1)$.
Then
$\mathcal{J}_{\mathcal{L}}(w_0,t_0^-;\mathcal{P})
=
\sup_{\xi\in\mathbb{R}}\,
\Ebb^{w_0,t_0^-}_{\mathcal{P}}\!\left[
\xi+\frac{1}{\alpha}\min\!\left\{W(T^+)-\xi,\,0\right\}
\right]$, corresponding to $\mathrm{CVaR}_{\alpha}\!\big(W(T^+)\big)$ (under the gain-based convention).
%Combining this choice with the linear terminal reward $\mathcal{R}(s)=s$ yields a mean--CVaR objective.

\item \emph{Buffered probability of exceedance (bPoE) \cite{dang2026multi, bpoe2018}:}
Choose $S(\mathcal{P},Y):=W(T^+)\in\mathbb{R}$ and fix a threshold level $D\in\mathbb{R}$.
Take $\Xi:=(D,\infty)$ and no moment dependence (set $d_\Psi=0$).
Define $\mathcal{L}:\Xi\times\mathbb{R}\to\mathbb{R}$ by
$\mathcal{L}(\xi,s)
:=
-\max\!\left(
1-\frac{s-D}{\xi-D},\,0
\right)$ for $\xi\in\Xi$.
Then
$\mathcal{J}_{\mathcal{L}}(w_0,t_0^-;\mathcal{P})
=
\sup_{\xi\in(D,\infty)}
\Ebb^{w_0,t_0^-}_{\mathcal{P}}\!\big[
-\max\!\big(
1-\tfrac{W(T^+)-D}{\xi-D},\,0
\big)
\big]$, which corresponds to $-\mathrm{bPoE}_{D}(W(T^+))$ via the standard $\inf$-representation in \cite{dang2026multi,bpoe2018}.

\item \emph{Quadratic variation risk \cite{PMVS2019}:}
Choose $S(\mathcal{P},Y):=\big(W(t_m^+)\big)_{m=0}^{M-1}\,\cup\,\big(W(t_{m+1}^-)\big)_{m=0}^{M-1}\in\mathbb{R}^{2M}$, i.e.,
$S(\mathcal{P},Y)=(s_0,\ldots,s_{2M-1})$ with $s_m=W(t_m^+)$ and $s_{M+m}=W(t_{m+1}^-)$ for $m=0,\ldots,M-1$.
Take $\Xi=\{0\}$ and  set $d_\Psi=0$.
Define $\mathcal{L}:\Xi\times\mathbb{R}^{2M}\to\mathbb{R}$ by
$\mathcal{L}(0,s)=-\sum_{m=0}^{M-1}\big(s_{M+m}-s_m\big)^2$.
Then $\mathcal{J}_{\mathcal{L}}(w_0,t_0^-;\mathcal{P})
=\Ebb^{w_0,t_0^-}_{\mathcal{P}}\!\big[\mathcal{L}(0,S(\mathcal{P},Y))\big]$.

\item \emph{Variance (via an auxiliary variable):}
Choose $S(\mathcal{P},Y):=W(T^+)\in\mathbb{R}$ and take $\Xi=\mathbb{R}$.
Let $\mathcal{L}:\Xi\times\mathbb{R}\to\mathbb{R}$ be defined by $\mathcal{L}(\xi,s)=-(s-\xi)^2$.
Then
\[
\mathcal{J}_{\mathcal{L}}(w_0,t_0^-;\mathcal{P})
=
\sup_{\xi\in\mathbb{R}}\Ebb^{w_0,t_0^-}_{\mathcal{P}}\!\big[-(W(T^+)-\xi)^2\big]
%=
%-\Ebb^{w_0,t_0^-}_{\mathcal{P}}\!\big[\big(W(T^+)-
%\Ebb^{w_0,t_0^-}_{\mathcal{P}}[W(T^+)]\big)^2\big]
=
-\mathrm{Var}\!\big(W(T^+)\big).
\]

\item \emph{Semi-variance around the mean:}
Choose $S(\mathcal{P},Y):=W(T^+)\in\mathbb{R}$ and take $\Xi=\{0\}$.
Let $\Psi:\mathbb{R}\to\mathbb{R}$ be defined by $\Psi(s)=s$, so that
$\bar S(\mathcal{P})=\Ebb^{w_0,t_0^-}_{\mathcal{P}}\!\big[W(T^+)\big]$.
Define $\mathcal{L}:\Xi\times\mathbb{R}\times\mathbb{R}\to\mathbb{R}$ by
$\mathcal{L}(0,s,\bar s)= -\min\{s-\bar s,0\}^2$.
Then
\[
\mathcal{J}_{\mathcal{L}}(w_0,t_0^-;\mathcal{P})
=
\Ebb^{w_0,t_0^-}_{\mathcal{P}}\!\left[-\min\!\left\{W(T^+)-\bar S(\mathcal{P}),\,0\right\}^2\right],
\quad
\bar S(\mathcal{P})=\Ebb^{w_0,t_0^-}_{\mathcal{P}}\!\big[W(T^+)\big].
\]

\end{itemize}

\subsubsection{Scalarized risk--reward objective}
Fix a scalarization parameter $\gamma>0$ and define the scalarized criterion function
\[
\mathcal{H}(\xi,s,\bar s):=\mathcal{R}(s)+\gamma\,\mathcal{L}(\xi,s,\bar s),
\qquad
(\xi,s,\bar s)\in\Xi\times\mathbb{R}^d\times\mathbb{R}^{d_\Psi}.
\]
For $(\xi,\mathcal{P})\in\Xi\times\mathcal{G}$, define the corresponding scalarized objective
\begin{equation}
\label{eq:rr_objective}
V\left(w_0,t_0^-;\xi,\mathcal{P}\right)
:=
\mathbb{E}^{w_0,t_0^-}_{\mathcal{P}}\left[
\mathcal{H}\left(\xi,S(\mathcal{P},Y),\bar{S}(\mathcal{P})\right)
\right],
\end{equation}
and the value function
\begin{equation}
\label{eq:rr_value}
V(w_0,t_0^-)
:=
\sup_{\mathcal{P}\in\mathcal{G},\ \xi\in\Xi}\,
V\left(w_0,t_0^-;\xi,\mathcal{P}\right).
\end{equation}
\begin{remark}[Pre-commitment vs.\ time-consistent formulations]
The criterion \eqref{eq:rr_objective}--\eqref{eq:rr_value} is posed at the initial time $t_0^-$: the decision maker selects $(\xi,\mathcal{P})\in\Xi\times\mathcal{G}$ to maximize $V(w_0,t_0^-;\xi,\mathcal{P})$ and then applies the resulting feedback maps $\mathcal{P}=(q,p)$ at all subsequent intervention times.

If the objective is separable in the dynamic-programming sense (i.e.\ it admits a recursive Bellman representation), this formulation coincides with the usual time-consistent optimal strategy. Otherwise, the objective is time-inconsistent and the solution corresponds to the optimal pre-commitment strategy.
\end{remark}
%\begin{remark}[Pre-commitment formulation]
%The criterion \eqref{eq:rr_objective}--\eqref{eq:rr_value} is a \emph{pre-commitment} risk--reward objective: at the initial time $t_0^-$, the decision maker selects $(\xi,\mathcal{P})\in\Xi\times\mathcal{G}$ to maximize $V(w_0,t_0^-;\xi,\mathcal{P})$, and then commits to the resulting feedback policy $\mathcal{P}$ at all subsequent intervention times. In particular, the actions applied at later times are determined by the fixed feedback maps in $\mathcal{P}$, rather than by re-optimizing the objective along the realized trajectory.
%\end{remark}

\subsection{Standing assumptions for the risk--reward control problem}
\label{ssc:control_assump}
This subsection collects regularity conditions on the risk-reward control problem
\eqref{eq:rr_objective}-\eqref{eq:rr_value}
used in the convergence analysis.
\begin{assumption}[Regularity conditions]
\label{ass:rr_regularity}
The following conditions hold.
\begin{enumerate}[noitemsep, topsep=2pt, label=(R\arabic*), leftmargin=*]
\item {(Bounded state.)}
There exist finite constants $w_{\min} < w_{\max}$ such that, for every admissible $\mathcal{P}\in\mathcal{G}$,
\[
w_{\min}\le W(t_m^\pm;\mathcal{P},Y)\le w_{\max}
\quad\text{a.s.},\qquad m=0,\ldots,M.
\]

\item (Continuity of the update maps.)
The map $U_q$ is continuous on
\[
[0,T]\times[w_{\min},w_{\max}]\times[q_{\min},q_{\max}]
\]
and, for every $r>0$, the map $U_p$ is continuous on
\[
[0,T]\times[w_{\min},w_{\max}]\times \mathcal{Z}_p \times \{y\in\mathbb{R}^{d_a}:\ \|y\|\le r\}.
\]

\item (Existence of an optimizer.)
There exist $(\mathcal{P}^\ast,\xi^\ast)\in\mathcal{G}\times\Xi$ attaining the supremum in \eqref{eq:rr_value}, with $\mathcal{P}^\ast=(q^\ast,p^\ast)$.

\item (Almost-sure continuity under the optimal policy.)
For each $m$, define
\begin{align*}
D_{q,m}
&:= \big\{ w \in [w_{\min},w_{\max}] : w \mapsto q^\ast(t_m^{-},w)\ \text{is discontinuous at } w \big\},
%\label{eq:D_qm_def}
\\
& \qquad \qquad \qquad \qquad \qquad \qquad m=0,\ldots,M,
\nonumber
\\
D_{p,m}
&:= \big\{ w \in [w_{\min},w_{\max}] : w \mapsto p^\ast(t_m^{+},w)\ \text{is discontinuous at } w \big\},
%\label{eq:D_pm_def}
\\
& \qquad \qquad \qquad \qquad \qquad \qquad m=0,\ldots,M-1,
\nonumber
\end{align*}
with discontinuity for $p^\ast$ understood componentwise.
Assume that these discontinuity sets are
$\mathbb{P}$-null under the optimal state at the intervention times, i.e.\
\begin{align*}
\mathbb{P}\!\left(W(t_m^-;\mathcal{P}^\ast,Y)\in D_{q,m}\right)&=0,\quad m=0,\ldots,M, \quad \text{ and }
\\
\mathbb{P}\!\left(W(t_m^+;\mathcal{P}^\ast,Y)\in D_{p,m}\right)&=0,\quad m=0,\ldots,M-1.
\end{align*}

\item (Moment map regularity.)
Assume $\Psi$ is continuous on $\mathcal{Z}$,
where $\mathcal{Z}\subset\mathbb{R}^d$
is a compact set such that the vector $S(\mathcal{P},Y)\in\mathcal{Z}$ a.s.\ for all $\mathcal{P}\in\mathcal{G}$
(see Section~\ref{ssc:rr_regularity_discussion} for a sufficient construction of $\mathcal{Z}$).

\item (Compactness of the auxiliary-variable domain.)
Assume that $\Xi\subset\mathbb{R}$ is nonempty and compact.

\item (Regularity of scalarized criterion function.)
Let $\bar{\mathcal{Z}}:=\mathrm{conv}(\Psi(\mathcal{Z}))\subset\mathbb{R}^{d_\Psi}$, where $\mathcal{Z}$ is the compact set from (R5) and $\mathrm{conv}(\cdot)$ denotes the convex hull.

Assume the functional $\mathcal{H}:\Xi\times\mathbb{R}^d\times\mathbb{R}^{d_\Psi}\to\mathbb{R}$ is continuous in $(\xi,s,\bar s)$ on $\Xi\times\mathcal{Z}\times\bar{\mathcal{Z}}$.

In particular, since $\Xi\times\mathcal{Z}\times\bar{\mathcal{Z}}$ is compact, $\mathcal{H}$ is bounded on $\Xi\times\mathcal{Z}\times\bar{\mathcal{Z}}$.
\end{enumerate}
\end{assumption}

%\noindent\textbf{Convergence pipeline and role of Assumption~\ref{ass:rr_regularity}.}
%The convergence analysis proceeds by propagating NN approximation through successive layers of the control problem:
%\[
%\text{NN feedback}
%\rightarrow
%\text{state sequence}
%\rightarrow
%\text{performance and moment vectors}
%\rightarrow
%\text{scalarized criterion function $\mathcal{H}$ for fixed $\xi$}
%\rightarrow
%\text{scalarized objective $\mathcal{H}$ for fixed $\xi$}
%\rightarrow
%\text{value function $V$}
%%(\text{stability of} \sup_{\xi\in\Xi})
%\]
\subsection{Discussion of Assumption~\ref{ass:rr_regularity}}
\label{ssc:rr_regularity_discussion}
Assumption~\ref{ass:rr_regularity} ensures that convergence of the NN feedback controls propagates through the controlled recursion via continuity of the update maps, induces convergence of the finite-dimensional performance/moment vectors, and is preserved under the scalarized objective through continuity and compact-domain uniformity. Most conditions are local/compact-domain regularity statements, with compactness used mainly to obtain boundedness and uniform continuity needed for uniform laws of large numbers (ULLN).

\medskip
\noindent\textit{Regarding (R1)}.
The uniform bounds on $W(t_m^\pm;\mathcal{P},Y)$ are strong but often natural in applications with physical or contractual state constraints (e.g.\ storage with capacity limits, state processes with reflecting/truncating mechanisms, or state updates that enforce admissible ranges by construction). In settings where the state is not globally bounded, one can sometimes replace (R1) by straightforward localization arguments \cite{tsang2020deep, hure2021deep}, but this may complicate the continuity and uniform-convergence steps in subsequent proofs.

\medskip
\noindent\textit{Regarding (R2).}
Continuity is needed in all arguments of the update maps $U_q$ and $U_p$ that enter the recursion. In particular, the post-decision update \eqref{eq:rr_skeleton_evolution}
depends explicitly on the exogenous input $Y_{m+1}$. The continuity requirement for $U_p$ does not assume that $Y_m$ is bounded; instead, it ensures continuity at all finite realizations of $Y_{m+1}$. If, in a particular application, one has an a.s.\ bound $\|Y_m\|\le r_0$ for some finite $r_0>0$, then it suffices to assume continuity only on the set $\{y\in\mathbb{R}^{d_a}:\ \|y\|\le r_0\}$. Moreover, if $U_q$ or $U_p$ are piecewise-defined (e.g.\ due to regime switching), it suffices that the realized recursion is continuous in the current state and action variables on the relevant bounded regime.

\medskip
\noindent\textit{Regarding (R3).}
Existence of an optimizer $(\mathcal{P}^\ast,\xi^\ast)$ is a standard standing assumption in convergence results for value functions: it lets us compare the NN value against an attained optimum. If an optimizer fails to exist, many arguments can be adapted by working with an $\varepsilon$-optimal sequence $(\mathcal{P}^{\varepsilon},\xi^{\varepsilon})$ \cite{bertsekas1996stochastic}. We keep (R3) to avoid additional bookkeeping.
%
%for instance, by definition of the supremum one may select $(\mathcal{P}^{\varepsilon},\xi^{\varepsilon})$ such that $V(w_0,t_0^-;\xi^{\varepsilon},\mathcal{P}^{\varepsilon})\ge V(w_0,t_0^-)-\varepsilon$, and then take $\varepsilon\downarrow0$ to obtain a near-optimal maximizing sequence. Such $\varepsilon$-optimal policies/selections are standard in discrete-time stochastic control; see, e.g., \cite{bertsekas1978stochastic}.

\medskip
\noindent\textit{Regarding (R4).}
This condition permits discontinuous optimal feedback maps (e.g.\ threshold/bang--bang rules) provided the optimal state does not hit their discontinuity sets with positive probability at intervention times. It is strictly weaker than assuming continuity of $q^\ast(t_m^-,\cdot)$ and $p^\ast(t_m^+,\cdot)$ on $[w_{\min},w_{\max}]$. Technically, it is the minimal condition needed to apply extended continuous-mapping arguments in the decision and state convergence proofs (see Lemmas~\ref{lem: convergence of q amount}--\ref{lem: convergence of W}).

\medskip
\noindent\textit{Regarding (R5).}
A simple sufficient way to ensure the existence of a compact set $\mathcal{Z}$ in (R5) is to combine:
(i) the uniform state bounds $W(t_m^\pm)\in[w_{\min},w_{\max}]$ a.s.\ for all $m$ and all $\mathcal{P}\in\mathcal{G}$, and
(ii) the action constraints $q(t,w)\in \mathcal{Z}_q(w)\subset[q_{\min},q_{\max}]$ and $p(t,w)\in\mathcal{Z}_p$ as stated in \eqref{eq: Z_q}-\eqref{eq: Z_p}.
Because $S(\mathcal{P},Y)$ is finite-dimensional and built from finitely many state/action components at the intervention times, these bounds imply that every coordinate of $S(\mathcal{P},Y)$ is a.s.\ bounded uniformly over $\mathcal{P}\in\mathcal{G}$, hence $S(\mathcal{P},Y)$ takes values in a compact product set, which can be taken as $\mathcal{Z}$.

\medskip
\noindent\textit{Regarding (R6).}
Compactness of $\Xi$ is mainly a technical convenience: combined with continuity of $\mathcal{H}$ on $\Xi\times\mathcal{Z}\times\bar{\mathcal{Z}}$, it implies (i) boundedness of $\mathcal{H}$ on this domain and (ii) uniform continuity properties (e.g.\ uniform continuity in the $\bar s$-argument uniformly over $(\xi,s)\in\Xi\times\mathcal{Z}$) that simplify subsequent proofs.
In many standard auxiliary-variable representations, it is  without loss of optimality to take $\Xi$ compact, since boundedness of the relevant performance statistic typically allows restricting $\xi$ to a compact interval.

In the examples below, $X$ denotes the scalar performance statistic entering the auxiliary-variable representation (e.g.\ a coordinate of $S(\mathcal{P},Y)$, hence a.s.\ bounded once (R5) holds). For the Rockafellar--Uryasev representation of $\mathrm{CVaR}(X)$, any optimizer $\xi^\ast$ lies in $[x_{\min},x_{\max}]$, so one may take $\Xi=[x_{\min},x_{\max}]$. Likewise, for the auxiliary-variable representation of variance,
$\sup_{\xi\in\mathbb{R}}\mathbb{E}[-(X-\xi)^2]$, the optimizer is $\xi^\ast=\mathbb{E}[X]\in[x_{\min},x_{\max}]$, so the search may be restricted to a compact interval.
For bPoE-type representations with the constraint $\xi>D$, one can similarly restrict $\Xi$ to a compact interval $[\,\underline\xi,\,\overline\xi\,] \subset(D, \infty)$
 that contains an optimizer.

More generally, it suffices that for each $(s,\bar s)\in\mathcal{Z}\times\bar{\mathcal{Z}}$ the map $\xi\mapsto \mathcal{H}(\xi,s,\bar s)$ is coercive, in the sense that $\mathcal{H}(\xi,s,\bar s)\to -\infty$ as $|\xi|\to\infty$ uniformly over $(s,\bar s)$ on compact sets. In this case the maximizer over $\xi$ must lie in some bounded interval, and one may replace $\Xi$ by a compact interval containing all optimizers \cite{rockafellar1998variational}.

\medskip
\noindent{\textit{Regarding (R7).}
Recall the moment vector $\bar{S}(\mathcal{P}):=\mathbb{E}[\Psi(S(\mathcal{P},Y))]$ defined in \eqref{eq:rr_moment_vector}. Then $\bar{S}(\mathcal{P})\in\bar{\mathcal{Z}}$ for all $\mathcal{P}\in\mathcal{G}$, where $\bar{\mathcal{Z}}:=\mathrm{conv}(\Psi(\mathcal{Z}))$, so the third argument of $\mathcal{H}$ ranges over $\bar{\mathcal{Z}}$.
To see this, note that (R5) gives $S(\mathcal{P},Y)\in\mathcal{Z}$ a.s.\ for each $\mathcal{P}\in\mathcal{G}$, hence $\Psi(S(\mathcal{P},Y))\in\Psi(\mathcal{Z})$ a.s. Since $\Psi$ is continuous on the compact set $\mathcal{Z}$, the image $\Psi(\mathcal{Z})$ is compact. In finite dimensions, $\mathrm{conv}(\Psi(\mathcal{Z}))$ is also compact, and because it is convex and contains $\Psi(\mathcal{Z})$, it contains the expectation of any $\Psi(\mathcal{Z})$-valued random vector; in particular, $\bar{S}(\mathcal{P})=\mathbb{E}[\Psi(S(\mathcal{P},Y))]\in\bar{\mathcal{Z}}$.

With (R6)--(R7), continuity of $\mathcal{H}$ on the compact set $\Xi\times\mathcal{Z}\times\bar{\mathcal{Z}}$ yields boundedness and the uniform continuity properties used in the subsequent proofs involving ULLN.

\section{A neural network approach}
\label{sec:NN}
We approximate the two-step feedback control $\mathcal{P}=(q,p)$ by two interacting fully connected FNNs: a scalar network for the pre-decision action $q$ and a vector network for the post-decision action $p$.
The pointwise constraints \eqref{eq: Z_q}--\eqref{eq: Z_p} are enforced by customized output-layer maps, yielding an unconstrained optimization problem over the network parameters.

\subsection{Preliminaries}
We use standard fully connected FNNs (multilayer perceptrons), i.e.\ compositions of affine maps and nonlinear activations.

\begin{definition}[Feedforward neural network]
\label{defn: FNN}
Consider a fully connected multilayer FNN with $L$ hidden layers. Layers are indexed by
$l \in\{0,1,\ldots,L+1\}$, where $l=0$ is the input layer and $l=L+1$ is the output
layer. Let $\nu_{l}\in\mathbb{N}$ be the number of nodes in layer $l$. The function
$F:\mathbb{R}^{\nu_0}\to\mathbb{R}^{\nu_{L+1}}$ computed by the FNN is
\begin{equation}
\label{eq: FNN}
F
=
A_{L+1}\circ (\psi_{L}\circ A_{L})\circ \cdots \circ (\psi_{1}\circ A_{1}),
\end{equation}
where $A_l(x)=\beta_l x+b_l$ is affine with weights $\beta_l\in\mathbb{R}^{\nu_l\times \nu_{l-1}}$
and bias $b_l\in\mathbb{R}^{\nu_l}$, and $\psi_l:\mathbb{R}^{\nu_l}\to\mathbb{R}^{\nu_l}$ is the
activation function for $l=1,\ldots,L$. No activation is applied at the output layer.

We collect all trainable parameters (weights and biases) as
\mbox{$\theta=(\beta_l,b_l)_{l=1,\ldots,L+1}\in\mathbb{R}^{\eta}$,}
where $\eta$ denotes the total number of trainable parameters of the network.
\end{definition}

For notational simplicity, we take all hidden layers to have the same width $\nu$ and use the sigmoid activation
in each hidden layer. Let $\{\nu^{(n)}\}_{n\in\mathbb{N}}$ be a strictly increasing sequence with $\nu^{(n)}\to\infty$ as $n\to\infty$. For each $n$, let $\mathcal{Q}_{n}$ denote the class of FNNs
of the form \eqref{eq: FNN} with hidden-layer width $\nu^{(n)}$ (and fixed depth $L$). Then
$\mathcal{Q}_{n}\subseteq \mathcal{Q}_{n+1}$.

The next theorem states a universal approximation property in the probabilistic form convenient for our analysis.
Throughout, $\|\cdot\|$ denotes a fixed norm on $\Rbb^d$ (with ambient dimension clear from context).

\begin{theorem}[Universal approximation for a random input {\cite[Thm.~2.4 and Cor.~2.7]{hornik1989multilayer}}]
\label{thm: convergence of F}
Let $X$ be an $\mathbb{R}^{\nu_0}$-valued random variable and let
$f:\mathbb{R}^{\nu_0}\to\mathbb{R}^{d}$ be Borel measurable. Then there exists a sequence
$\{F_{n}\}_{n\in\mathbb{N}}$, where $F_{n}=F(\cdot;\theta_{n})\in \mathcal{Q}_{n}$, such that
for all $\varepsilon>0$,
\[
\lim_{n\to\infty}\mathbb{P}\!\left(\big\|F_{n}(X)-f(X)\big\|>\varepsilon\right)=0.
\]
\end{theorem}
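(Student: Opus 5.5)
The plan is to deduce the statement from Hornik's result that single-hidden-layer sigmoid networks are dense in the space of Borel measurable functions with respect to the metric of convergence in probability. Let $\mu$ be the law of $X$ on $\mathbb{R}^{\nu_0}$. For $\mathbb{R}$-valued $g,h$ define the Ky Fan type metric
\[
\rho_\mu(g,h):=\inf\{\varepsilon>0:\ \mu(|g-h|>\varepsilon)<\varepsilon\}.
\]
\cite[Thm.~2.4]{hornik1989multilayer} states that for any squashing activation (the sigmoid qualifies) and any finite Borel measure $\mu$, the class of single-hidden-layer networks is $\rho_\mu$-dense in the measurable functions. \cite[Cor.~2.7]{hornik1989multilayer} extends this to vector-valued targets $f=(f_1,\dots,f_d)$ by approximating each coordinate with its own network and stacking the hidden units.

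The first step is to reduce from $\mathbb{R}^d$-valued to scalar targets. Since all norms on $\mathbb{R}^d$ are equivalent, $\|F_n(X)-f(X)\|>\varepsilon$ implies $|F_n^{(i)}(X)-f_i(X)|>\varepsilon/(cd)$ for some coordinate $i$, where $c$ is a norm-equivalence constant. A union bound then shows that it suffices to obtain coordinatewise convergence in probability. For each $i$ and each $k\in\mathbb{N}$, Theorem~2.4 gives a single-hidden-layer network $g_{i,k}$ with $\rho_\mu(g_{i,k},f_i)<1/k$. Stacking the $d$ networks gives an $\mathbb{R}^d$-valued network $G_k$ with $\mathbb{P}(\|G_k(X)-f(X)\|>c'/k)<c'/k$, which has some finite width $N_k$.

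The second step, which I expect to be the main obstacle, is to fit these approximants into the prescribed classes $\mathcal{Q}_n$, which have fixed depth $L$ and width $\nu^{(n)}$. Width is the easy part. Because $\nu^{(n)}\uparrow\infty$ and $\mathcal{Q}_n\subseteq\mathcal{Q}_{n+1}$ (unused units get zero outgoing weights), we can define $F_n:=G_{k(n)}$, where $k(n)$ is the largest $k\le n$ with $N_k\le\nu^{(n)}$. We may take $G_{k(n)}:=0$ (zero output weights) if no such $k$ exists, and then $k(n)\to\infty$.

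Depth is the harder part when $L>1$. Here I would pass the first hidden layer's output through the remaining $L-1$ sigmoid layers in a nearly identity-preserving way. Scale each subsequent affine map so that the sigmoid acts in its near-linear regime around $0$: $\sigma(\delta z)\approx \sigma(0)+\sigma'(0)\delta z$, then rescale by $1/(\delta\sigma'(0))$ and remove the constant with the bias. This yields an approximation of the identity on compact sets. Since $G_k(X)$ is a.s.\ finite, one first picks a compact set $K$ with $\mathbb{P}(X\notin K)$ small. On $K$ the intermediate features are bounded, so the identity approximation is uniform there, and the extra error goes to zero in probability as $\delta\to0$.

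Finally, a diagonal choice of $\delta$ together with $k(n)$ gives $\mathbb{P}(\|F_n(X)-f(X)\|>\varepsilon)\to0$ for each fixed $\varepsilon>0$, since $c'/k(n)\to0$ and the depth-padding error also vanishes. Written as $F_n=F(\cdot;\theta_n)\in\mathcal{Q}_n$, this completes the argument.
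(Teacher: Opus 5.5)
Your proposal is correct and takes the same route as the paper. The paper gives no argument beyond citing the $\rho_\mu$-density of sigmoid networks from Hornik--Stinchcombe--White (Thm.~2.4 and Cor.~2.7), and your coordinatewise reduction, the $k(n)$ embedding into the nested classes $\mathcal{Q}_n$, and the near-identity depth padding make explicit the bookkeeping it leaves implicit. (The compact set $K$ in the padding step is unnecessary: the first hidden layer's outputs lie in $(0,1)^{N_k}$ for every input, so the padding error is already uniform in $x$.)
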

To encode constraints, we apply (measurable) customized maps at the output layer. The next lemma records that convergence in probability is preserved under composition with mappings that are continuous at the limit point almost surely.
\begin{lemma}[Composition with (a.s.-continuous) activations]
\label{lem: NN composition}
Let $X$ be an $\mathbb{R}^{\nu_0}$-valued random variable. Let $f:\mathbb{R}^{\nu_0}\to\mathbb{R}^{d}$ and
$\psi:\mathbb{R}^{d}\to\mathbb{R}^{k}$ be measurable. Define the set of discontinuity points of $\psi$ by
\[
D_{\psi}
:=
\Big\{ y\in\mathbb{R}^{d}:\ y\mapsto \psi(y)\ \text{is discontinuous at } y \Big\},
\]
and assume that $\mathbb{P}\!\left(f(X)\in D_{\psi}\right)=0$.
Then there exists a sequence $\{F_{n}\}_{n\in\mathbb{N}}$ with
$F_{n}=F(\cdot;\theta_{n})\in \mathcal{Q}_{n}$ such that, for all $\varepsilon>0$,
\[
\lim_{n\to\infty}
\mathbb{P}\!\left(
\big\|\psi\!\left(F_{n}(X)\right)-\psi\!\left(f(X)\right)\big\|>\varepsilon
\right)=0.
\]
\end{lemma}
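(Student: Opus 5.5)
We take the sequence $\{F_n\}$ supplied by Theorem~\ref{thm: convergence of F} for the pair $(X,f)$ and show it already works after composition with $\psi$. That theorem says $Z_n:=F_n(X)$ converges in probability to $Z:=f(X)$. The claim is then an extended continuous-mapping statement: if $Z_n\to Z$ in probability and $\psi$ is continuous at $Z$ almost surely, then $\psi(Z_n)\to\psi(Z)$ in probability. The work is to prove this extension carefully, since $\psi$ is only measurable and may be discontinuous.

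\textbf{Step 1 (subsequence argument).} We use the characterization that $\xi_n\to\xi$ in probability if and only if every subsequence has a further subsequence converging almost surely. Fix an arbitrary subsequence $(n_k)$. Since $Z_{n_k}\to Z$ in probability, there is a further subsequence $(n_{k_j})$ with $Z_{n_{k_j}}\to Z$ almost surely. Let $\Omega_1$ be the event on which this convergence holds, and let $\Omega_2:=\{Z\notin D_\psi\}$; both have probability one by hypothesis. (Measurability of $D_\psi$: the discontinuity set of any function between metric spaces is an $F_\sigma$ set, hence Borel, so $\{f(X)\in D_\psi\}$ is an event.) For $\omega\in\Omega_1\cap\Omega_2$, continuity of $\psi$ at $Z(\omega)$ gives $\psi(Z_{n_{k_j}}(\omega))\to\psi(Z(\omega))$. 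Hence $\psi(Z_{n_{k_j}})\to\psi(Z)$ almost surely.

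\textbf{Step 2 (conclusion).} Since every subsequence of $\psi(Z_n)$ has a further subsequence converging almost surely to $\psi(Z)$, the whole sequence converges in probability. That is, for every $\varepsilon>0$, $\mathbb{P}(\|\psi(F_n(X))-\psi(f(X))\|>\varepsilon)\to0$. This step requires $\psi(F_n(X))$ to be a random variable, which holds because $F_n$ is continuous and $\psi$ is Borel.

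\textbf{Alternative route and main obstacle.} One could instead argue directly with a Portmanteau-type bound. Fix $\varepsilon>0$, and for $\delta>0$ let $B_\delta$ be the set of $z$ for which some $z'$ with $\|z'-z\|<\delta$ has $\|\psi(z')-\psi(z)\|>\varepsilon$. Then
\[
\mathbb{P}\big(\|\psi(Z_n)-\psi(Z)\|>\varepsilon\big)\le \mathbb{P}(Z\in B_\delta)+\mathbb{P}(\|Z_n-Z\|\ge\delta).
\]
The sets $B_\delta$ decrease as $\delta\downarrow0$ to a subset of $D_\psi$, which gives the result by continuity of measure. The delicate point on this route is the measurability of $B_\delta$, which is why the subsequence route above is preferable. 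The remaining point to check is simply that Theorem~\ref{thm: convergence of F} already yields the same sequence $F_n\in\mathcal{Q}_n$; nothing beyond it is needed.
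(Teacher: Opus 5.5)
Your proposal is correct and follows the paper's proof. The paper likewise takes the sequence from Theorem~\ref{thm: convergence of F}, so that $F_n(X)\to f(X)$ in probability, and then applies the extended continuous mapping theorem using $\mathbb{P}(f(X)\in D_\psi)=0$. The only difference is that you prove that theorem inline via the subsequence/almost-sure characterization, and you also check measurability of $D_\psi$ and of $\psi(F_n(X))$; the paper simply cites it.
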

A proof of Lemma~\ref{lem: NN composition} is given in  Appendix~\ref{app: NN composition}.
%\begin{proof}[Proof of Lemma~\ref{lem: NN composition}]
%By Theorem~\ref{thm: convergence of F}, we have $F_{n}(X)\xrightarrow{P} f(X)$ as $n\to\infty$.
%By assumption, $\mathbb{P}\!\left(f(X)\in D_\psi\right)=0$, i.e.\ $\psi$ is continuous at $f(X)$ almost surely.
%Hence, by the (extended) continuous mapping theorem \cite{Billingsley1995,kallenberg2002},
%$\psi(F_{n}(X))\xrightarrow{P}\psi(f(X))$.
%\end{proof}

In practice, it is often convenient to use standard activations with open output ranges (e.g.\ sigmoid, softmax), which cannot represent boundary values exactly. The next lemma justifies approximation when the target range includes boundary points.
\begin{lemma}[Boundary approximation via open-range activations]
\label{lem: activation function}
Let $X$ be an $\mathbb{R}^{\nu_0}$-valued random variable and let $g:\mathbb{R}^{\nu_0}\to[a,b]^N$ be measurable.
Let $\psi:\mathbb{R}^N\to(a,b)^N$ be continuous and assume that $\psi$ admits a measurable right inverse
$\psi^{-1}_r:(a,b)^N\to\mathbb{R}^N$, i.e.\ $\psi\circ\psi^{-1}_r=\mathrm{Id}_{(a,b)^N}$.
Then, there exists a sequence $\{F_{n}\}_{n\in\mathbb{N}}$ with
$F_{n}=F(\cdot;\theta_{n})\in \mathcal{Q}_{n}$ such that, for all $\varepsilon>0$,
\[
\lim_{n\to\infty}\mathbb{P}\!\left(
\big\|\psi\!\left(F_{n}(X)\right)-g(X)\big\|>\varepsilon
\right)=0.
\]
\end{lemma}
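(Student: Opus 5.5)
The idea is to pull the boundary target $g(X)$ slightly into the open box $(a,b)^N$, invert $\psi$ there with the right inverse, approximate the inverted target by networks via Theorem~\ref{thm: convergence of F}, and push back through $\psi$ using continuity.

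\textbf{Step 1 (interior shift).} For $k\in\mathbb{N}$, define the contraction toward the center $c:=\tfrac{a+b}{2}\mathbf{1}$ by
\[
g_k(x):=c+\bigl(1-\tfrac1k\bigr)\bigl(g(x)-c\bigr).
\]
Each $g_k$ is measurable and takes values in $(a,b)^N$, and $\|g_k(x)-g(x)\|\le \tfrac1k\|g(x)-c\|\le C/k$ uniformly in $x$, since $[a,b]^N$ is bounded.

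\textbf{Step 2 (invert and approximate).} Set $f_k:=\psi^{-1}_r\circ g_k:\mathbb{R}^{\nu_0}\to\mathbb{R}^N$. It is measurable because $\psi^{-1}_r$ is measurable, and $\psi(f_k(X))=g_k(X)$ exactly. By Theorem~\ref{thm: convergence of F}, there is a sequence $F^{(k)}_n\in\mathcal{Q}_n$ with $F^{(k)}_n(X)\to f_k(X)$ in probability as $n\to\infty$. Since $\psi$ is continuous everywhere, its discontinuity set is empty. Lemma~\ref{lem: NN composition}, or directly the continuous mapping theorem applied to this sequence, then gives $\psi(F^{(k)}_n(X))\to g_k(X)$ in probability.

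Note that Lemma~\ref{lem: NN composition} as stated only asserts the existence of some sequence. I will instead apply the continuous-mapping argument directly to the sequence from Theorem~\ref{thm: convergence of F}: extract an a.s.\ convergent subsequence of any subsequence, and use that $\psi$ is continuous at every limit point.

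\textbf{Step 3 (diagonalization).} This is the main technical step, since the final sequence must be indexed by $n$ alone and lie in $\mathcal{Q}_n$. For each $k$, Step~2 lets me choose $n_k$ (strictly increasing in $k$) such that
\[
\mathbb{P}\bigl(\|\psi(F^{(k)}_n(X))-g_k(X)\|>1/k\bigr)<1/k\quad\text{for all } n\ge n_k.
\]
For $n_k\le n<n_{k+1}$, set $F_n:=F^{(k)}_n\in\mathcal{Q}_n$. For $n<n_1$, take any element of $\mathcal{Q}_n$; these finitely many terms do not affect the limit.

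Fix $\varepsilon>0$ and consider $n$ with $n_k\le n<n_{k+1}$, where $k$ is large enough that $1/k+C/k<\varepsilon$. By the triangle inequality and the uniform bound from Step~1,
\[
\mathbb{P}\bigl(\|\psi(F_n(X))-g(X)\|>\varepsilon\bigr)\le \mathbb{P}\bigl(\|\psi(F^{(k)}_n(X))-g_k(X)\|>1/k\bigr)<1/k.
\]
As $n\to\infty$ we have $k\to\infty$, so this probability tends to $0$, which is the claim.

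The only delicate point is the one already handled in Step~3: the $n$-dependence must be uniform within each block, which is why each $n_k$ is chosen to control all $n\ge n_k$ rather than a single index. Nestedness $\mathcal{Q}_n\subseteq\mathcal{Q}_{n+1}$ is not needed, because each $F^{(k)}_n$ already lies in $\mathcal{Q}_n$.
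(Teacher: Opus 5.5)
Your proposal is correct and follows essentially the same route as the paper's proof. Both push $g$ into the open box, invert with $\psi^{-1}_r$, approximate the inverted target via Theorem~\ref{thm: convergence of F}, and return through $\psi$ by the continuous mapping theorem. Both then diagonalize with each threshold chosen to control all later indices; the paper's $j(n)=\max\{j:\ n(j)\le n\}$ is exactly your block index. The only difference is cosmetic: you contract toward the center $c$, whereas the paper clips coordinatewise to $[a+\delta,b-\delta]^N$ via $\Pi_\delta$, and both give a vanishing interiorization error.
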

For a proof of Lemma~\ref{lem: activation function}, see Appendix~\ref{app: activation function}

\subsection{Pre-decision network}
Let $\widetilde{z}(\cdot;\theta_{q,n}):\mathcal{D}_\phi\to\Rbb$ be an FNN of the form \eqref{eq: FNN}, where
$\theta_{q,n}$ denotes its trainable parameter vector. To enforce the state-dependent interval constraint
$q(t,w)\in \mathcal{Z}_q(w)$ from \eqref{eq: Z_q}, we use the customized output map
\begin{equation}
\label{eq: psi_q}
\psi_q(w,z)
:=
q_{\min}
+\mathrm{range}(w)\,\sigma(z),
\quad
\mathrm{range}(w):=\max\!\big(\min(q_{\max},w)-q_{\min},\,0\big),
\end{equation}
where $\sigma(z):=(1+e^{-z})^{-1}$ is the sigmoid function.
The resulting \mbox{pre-decision network is}
\begin{equation}
\label{eq: q NN}
\widehat{q}(t,w;\theta_{q,n})
:=
\psi_q\!\big(w,\widetilde{z}(t,w;\theta_{q,n})\big).
\end{equation}
By construction, $\widehat{q}(t,w;\theta_{q,n})\in \mathcal{Z}_q(w)$ for all $(t,w)\in\mathcal{D}_\phi$.

The next theorem records a probabilistic approximation result for $\widehat q$ at a random feature input.
\begin{theorem}
\label{thm: convergence of q}
Let $X=(t,w)$ be a $\mathcal{D}_\phi$-valued random variable. Then there exists a sequence of networks
$\{\widehat{q}(\cdot;\theta_{q,n}^\ast)\}_{n\in\mathbb{N}}$
such that, for all $\varepsilon>0$,
\[
\lim_{n\to\infty}
\mathbb{P}\!\left(\left|\widehat{q}(X;\theta_{q,n}^\ast)-q^\ast(X)\right|>\varepsilon\right)=0,
\]
where $q^\ast$ is any admissible target map (in particular, the optimal pre-decision control).
\end{theorem}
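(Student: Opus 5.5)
We must show that $\widehat q(X;\theta)=q_{\min}+\mathrm{range}(w)\,\sigma(\widetilde z(X;\theta))$ approximates $q^\ast(X)$ in probability. The plan is to reduce this to Lemma~\ref{lem: activation function}, applied to a normalized target in $[0,1]$, and then undo the normalization. Write $X=(t,w)$. For $(t,w)\in\mathcal{D}_\phi$, set
\[
g(t,w):=\begin{cases}\dfrac{q^\ast(t,w)-q_{\min}}{\mathrm{range}(w)}, & \mathrm{range}(w)>0,\\[4pt] 0, & \mathrm{range}(w)=0.\end{cases}
\]
Admissibility gives $q^\ast(t,w)\in\mathcal{Z}_q(w)$, so $g$ takes values in $[0,1]$. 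It is Borel measurable, because $q^\ast$ and $\mathrm{range}$ are measurable and $\{\mathrm{range}>0\}$ is a Borel set. Moreover,
\[
q^\ast(t,w)=q_{\min}+\mathrm{range}(w)\,g(t,w)
\]
holds identically. When $\mathrm{range}(w)=0$, the constraint set is $\mathcal{Z}_q(w)=\{q_{\min}\}$ (or $[q_{\min},q_{\min}]$ in the degenerate case $q_{\max}=q_{\min}$), so the identity still holds there.

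Next, apply Lemma~\ref{lem: activation function} with $N=1$, $(a,b)=(0,1)$, $\psi=\sigma$ and the target $g$. The sigmoid is continuous into $(0,1)$ and has the measurable (indeed continuous) right inverse $\sigma^{-1}(u)=\log\big(u/(1-u)\big)$. The lemma then yields networks $\widetilde z(\cdot;\theta^\ast_{q,n})\in\mathcal{Q}_n$ such that
\[
\sigma\big(\widetilde z(X;\theta^\ast_{q,n})\big)\to g(X)\quad\text{in probability}.
\]
Define $\widehat q(\cdot;\theta^\ast_{q,n})$ by \eqref{eq: q NN} using these parameters.

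Finally, transfer the convergence to $\widehat q$. For every $(t,w)$,
\[
\big|\widehat q(X;\theta^\ast_{q,n})-q^\ast(X)\big| = \mathrm{range}(w)\,\big|\sigma(\widetilde z(X;\theta^\ast_{q,n}))-g(X)\big| \le (q_{\max}-q_{\min})\,\big|\sigma(\widetilde z(X;\theta^\ast_{q,n}))-g(X)\big|,
\]
since $0\le \mathrm{range}(w)\le q_{\max}-q_{\min}$. If $q_{\max}=q_{\min}$, both sides vanish and there is nothing to prove. Otherwise, for any $\varepsilon>0$,
\[
\mathbb{P}\big(|\widehat q-q^\ast|>\varepsilon\big)\le \mathbb{P}\big(|\sigma(\widetilde z)-g(X)|>\varepsilon/(q_{\max}-q_{\min})\big)\to 0,
\]
which is the claim.

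The only delicate step is the handling of the boundary and degenerate cases. The set where $\mathrm{range}(w)=0$ (that is, $w\le q_{\min}$) is absorbed by the choice of $g$ there. The target values $0$ and $1$ lie outside the open range of $\sigma$, which is exactly the situation Lemma~\ref{lem: activation function} is designed to handle. The remaining requirement is measurability of $g$, which follows from its piecewise definition.
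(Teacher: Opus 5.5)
Your proposal is correct and follows the paper's proof essentially step for step. You use the same normalized target (the paper's $u^\ast$) and the same application of Lemma~\ref{lem: activation function} with $\psi=\sigma$ on $[0,1]$, and then undo the normalization. The only cosmetic difference is in that last step: you use the explicit bound $|\widehat q-q^\ast|\le (q_{\max}-q_{\min})\,|\sigma(\widetilde z)-g|$, whereas the paper applies the continuous mapping theorem to $(w,u)\mapsto q_{\min}+\mathrm{range}(w)\,u$; you also check the identity $q^\ast=q_{\min}+\mathrm{range}(w)\,g$ in the degenerate cases, which the paper leaves implicit.
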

A full detail proof of Theorem~\ref{thm: convergence of q} is given in Appendix~\ref{app: convergence of q}.

\subsection{Post-decision network}
Let $\widetilde{p}(\cdot;\theta_{p,n}):\mathcal{D}_\phi\to\Rbb^{d_a}$ be an FNN of the form \eqref{eq: FNN}.
To enforce the simplex constraint \eqref{eq: Z_p}, we apply the softmax map $\psi_p=(\psi_p^i)_{i=1}^{d_a}$,
defined componentwise by
\begin{equation}
\label{eq: psi_p}
\psi_p^{i}(z)
:=
\frac{e^{z_i}}{\sum_{j=1}^{d_a}e^{z_j}},
\qquad z\in\mathbb{R}^{d_a},\ i=1,\ldots,d_a.
\end{equation}
The resulting post-decision network is
\begin{equation}
\label{eq: p NN}
\widehat{p}(t,w;\theta_{p,n})
:=
\psi_p\!\left(\widetilde{p}(t,w;\theta_{p,n})\right).
\end{equation}
By construction, $\widehat p(t,w;\theta_{p,n})\in\mathcal{Z}_p$ for all $(t,w)\in\mathcal{D}_\phi$.

\begin{theorem}
\label{thm: convergence of p}
Let $X$ be a $\mathcal{D}_\phi$-valued random variable. Then there exists a sequence of networks
$\{\widehat{p}(\cdot;\theta_{p,n}^\ast)\}_{n\in\mathbb{N}}$
such that, for all $\varepsilon>0$,
\[
\lim_{n\to\infty}
\mathbb{P}\!\left(\big\|\widehat{p}(X;\theta_{p,n}^\ast)-p^\ast(X)\big\|>\varepsilon\right)=0,
\]
where $p^\ast$ is any admissible target map (in particular, \mbox{the optimal post-decision control).}
\end{theorem}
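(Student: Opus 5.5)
The plan is to deduce the result from Lemma~\ref{lem: activation function}, applied with $N=d_a$, $(a,b)=(0,1)$, $\psi=\psi_p$, and target $g(X):=p^\ast(X)$. Admissibility gives $p^\ast(X)\in\mathcal{Z}_p\subset[0,1]^{d_a}$, so $g$ is a measurable map into $[0,1]^{d_a}$. The softmax map $\psi_p$ is continuous and takes values in $(0,1)^{d_a}$. However, its image is only the relative interior of the simplex, not all of $(0,1)^{d_a}$, so a right inverse on $(0,1)^{d_a}$ does not exist. The lemma therefore cannot be applied verbatim, and I would instead rerun its argument with the simplex interior playing the role of $(a,b)^N$.

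The key steps are as follows.

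\begin{enumerate}[noitemsep, topsep=2pt, leftmargin=*]
\item \emph{Right inverse on the interior.} Let $\mathrm{ri}\,\mathcal{Z}_p$ denote the relative interior of the simplex. Define $\psi_p^{-1}(z):=(\log z_i)_{i=1}^{d_a}$ for $z\in\mathrm{ri}\,\mathcal{Z}_p$. This map is continuous, and $\psi_p(\log z)=z/\sum_j z_j=z$.

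\item \emph{Pushing the target into the interior.} For $k\in\mathbb{N}$, set $p_k^\ast:=(1-\tfrac1k)p^\ast+\tfrac1k\,\tfrac{1}{d_a}\mathbf{1}$. Then $p_k^\ast$ is measurable, takes values in $\mathrm{ri}\,\mathcal{Z}_p$, and satisfies $\sup\|p_k^\ast-p^\ast\|\le C/k$ deterministically.

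\item \emph{Approximating the logits.} Apply Theorem~\ref{thm: convergence of F} to the measurable map $f_k:=\psi_p^{-1}\circ p_k^\ast:\mathcal{D}_\phi\to\mathbb{R}^{d_a}$. This yields networks $F_{k,n}\in\mathcal{Q}_n$ with $F_{k,n}(X)\to f_k(X)$ in probability.

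\item \emph{Passing through softmax.} Since $\psi_p$ is continuous everywhere, Lemma~\ref{lem: NN composition} (or simply the continuous mapping theorem) gives $\psi_p(F_{k,n}(X))\to\psi_p(f_k(X))=p_k^\ast(X)$ in probability as $n\to\infty$.
\end{enumerate}

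The main obstacle is the diagonal extraction needed to produce a single sequence indexed by $n$. For each $k$, I would choose $n_k$, strictly increasing in $k$, such that
\[
\mathbb{P}\bigl(\|\psi_p(F_{k,n_k}(X))-p_k^\ast(X)\|>1/k\bigr)<1/k .
\]
For $n_k\le n<n_{k+1}$, I would define $\theta_{p,n}^\ast$ by embedding $F_{k,n_k}$ into width $\nu^{(n)}$ using the nesting $\mathcal{Q}_{n_k}\subseteq\mathcal{Q}_n$; padding with zero outgoing weights leaves the function unchanged. For the finitely many $n<n_1$, any choice is acceptable.

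It remains to check the convergence. Given $\varepsilon>0$, take $k$ large enough that $1/k+C/k<\varepsilon$. The triangle inequality then gives
\[
\mathbb{P}\bigl(\|\widehat p(X;\theta_{p,n}^\ast)-p^\ast(X)\|>\varepsilon\bigr)\le 1/k\to0 .
\]
This is the same boundary-approximation mechanism used for Lemma~\ref{lem: activation function}, and it is the only step where care is needed: the nesting of the classes $\mathcal{Q}_n$ must be invoked to justify reindexing the networks by $n$.
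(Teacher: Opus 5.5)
Your proposal is correct and follows essentially the same route as the paper's proof. The paper also interiorizes the target, via $\Pi_\delta(y)=(y+\delta\mathbf{1})/(1+d_a\delta)$, which is the same convex combination with the barycenter as your $p_k^\ast$. It then uses the log right inverse of softmax on the open simplex, the universal approximation theorem and the continuous mapping theorem, and the same diagonal block reindexing justified by nesting of $\mathcal{Q}_n$. Like you, the paper does not cite Lemma~\ref{lem: activation function} verbatim but reruns its argument directly for the simplex.
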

A full detail proof of Theorem~\ref{thm: convergence of p} is given in Appendix~\ref{app: convergence of p}.

\subsection{NN-parametrized objective}
For a fixed architecture index $n$, corresponding to hidden-layer width $\nu^{(n)}$, we introduce the
NN parametrization of control pairs
\begin{equation}
\label{eq:NNpara_control}
\widehat{\mathcal{P}}(\Theta_{n})
:=
\left(\widehat{q}(\cdot;\theta_{q,n}),\widehat{p}(\cdot;\theta_{p,n})\right),
\quad \text{where} \quad
\Theta_{n}:=(\theta_{q,n},\theta_{p,n})\in\mathbb{R}^{\vartheta^{(n)}}.
\end{equation}
Here, $\widehat{q}(\cdot)$ and $\widehat{p}(\cdot)$ are defined in \eqref{eq: q NN} and \eqref{eq: p NN}, respectively. The total number of parameters
is $\vartheta^{(n)}:=\eta_q^{(n)}+\eta_p^{(n)}$,
where $\eta_q^{(n)}$ and $\eta_p^{(n)}$ denote the numbers of trainable parameters in the withdrawal and allocation networks, respectively.
Let $\widehat{\mathcal{G}}_{n}$ be the resulting class of
NN control pairs, given by
\begin{equation}
\label{eq: G_hat NN}
\widehat{\mathcal{G}}_{n}
:=
\left\{
\widehat{\mathcal{P}}(\Theta_{n}) : \Theta_{n}\in\mathbb{R}^{\vartheta^{(n)}}
\right\}.
\end{equation}

Given $\Theta_{n}$ and $Y$, let $\{W(t_m^-;\Theta_{n},Y),W(t_m^+;\Theta_{n},Y)\}_{m=0}^M$
be the state sequence induced by applying $\widehat{\mathcal{P}}(\Theta_{n})$ in the recursion
\eqref{eq:rr_skeleton_updates}--\eqref{eq:rr_skeleton_evolution}, i.e.\ with $q$ and $p$ replaced by $\widehat q$ and $\widehat p$.
Define the corresponding performance vector
\[
S(\Theta_{n},Y)
:= S(\widehat{\mathcal{P}}(\Theta_{n}),Y)
\]
as in \eqref{eq:rr_perf_vector}. For moment dependence, define also
\begin{equation}
\label{eq:barZ}
\bar{S}(\Theta_{n})
:=
\mathbb{E}^{w_0,t_0^-}_{\widehat{\mathcal{P}}(\Theta_{n})}\!\big[\,\Psi(S(\Theta_{n},Y))\,\big]
\in\mathbb{R}^{d_\Psi},
\end{equation}
as in \eqref{eq:rr_moment_vector} (and if $d_\Psi=0$, suppress $\bar{S}$ and $\Psi$).

Replacing $\mathcal{P}$ by $\widehat{\mathcal{P}}(\Theta_{n})$ in \eqref{eq:rr_objective} yields the NN objective
\begin{equation}
\label{eq:rr_objective_NN}
V_{NN}(w_0,t_0^-;\xi,\Theta_{n})
:=
\mathbb{E}^{w_0,t_0^-}_{\widehat{\mathcal{P}}(\Theta_{n})}\!\big[
\mathcal{H}\big(\xi,S(\Theta_{n},Y),\bar{S}(\Theta_{n})\big)
\big],
\end{equation}
and the associated NN value function
\begin{equation}
\label{eq:rr_value_NN}
V_{NN}(w_0,t_0^-)
:=
\sup_{\Theta_{n}\in\mathbb{R}^{\vartheta^{(n)}},\ \xi\in\Xi}
V_{NN}(w_0,t_0^-;\xi,\Theta_{n}).
\end{equation}
Since the constraints are enforced by the output maps \eqref{eq: psi_q} and \eqref{eq: psi_p}, the optimization in
\eqref{eq:rr_value_NN} is unconstrained over the parameter space.

%\subsection{Sample-based objective}
\subsection{Empirical objective}
In computation, expectations are approximated by sample averaging over i.i.d.\ realizations of the exogenous input process.
Let $Y^{(1)},\ldots,Y^{(K)}$ be i.i.d.\ copies of $Y$,
where, for each $k \in \{ 1, \ldots, K\}$,
$Y^{(k)} = \{Y_m^{(k)}\}_{m=1}^{M}$, is a full exogenous path.
We define the dataset
\begin{equation}
\label{eq: Y data}
\mathcal{Y}_K=\{Y^{(k)}:\ k=1,\ldots,K\}.
\end{equation}
For each $k$, let $S^{(k)}(\Theta_{n}) := S(\Theta_{n},Y^{(k)})$ denote the realized performance vector. We estimate $\bar{S}$ by the sample average
\begin{equation}
\label{eq:widehat_barZ}
\widehat{\bar{S}}_K(\Theta_{n})
:=
\frac{1}{K}\sum_{k=1}^K \Psi\!\left(S^{(k)}(\Theta_{n})\right)
\in\mathbb{R}^{d_\Psi}.
\end{equation}
The empirical objective is
\begin{align}
\widehat{V}_{NN}(w_0,t_0^-;\xi,\Theta_{n},\mathcal{Y}_K)
&:=
\frac{1}{K}\sum_{k=1}^{K}
\mathcal{H}\!\left(\xi,\ S^{(k)}(\Theta_{n}),\ \widehat{\bar{S}}_K(\Theta_{n})\right),
\nonumber
\\
&= \frac{1}{K}\sum_{k=1}^{K}
\Big(
\mathcal{R}\!\left(S^{(k)}(\Theta_{n})\right)
+\gamma\,\mathcal{L}\!\left(\xi,\ S^{(k)}(\Theta_{n}),\ \widehat{\bar{S}}_K(\Theta_{n})\right)
\Big),
\label{eq:rr_objective_sample}
\end{align}
and the corresponding empirical value function is
\begin{equation}
\label{eq:rr_value_sample}
\widehat{V}_{NN}(w_0,t_0^-;\mathcal{Y}_K)
:=
\sup_{\Theta_{n}\in\mathbb{R}^{\vartheta^{(n)}},\ \xi\in\Xi}
\widehat{V}_{NN}(w_0,t_0^-;\xi,\Theta_{n},\mathcal{Y}_K).
\end{equation}

\begin{remark}[Training algorithm]
\label{rm: NN training algorithm}
In the convergence analysis below, we assume that the chosen training algorithm attains a global optimizer of
\eqref{eq:rr_value_sample}. In practice, gradient-based methods may converge only to stationary points; we adopt this standing assumption to focus on approximation and estimation errors rather than optimization error.
\end{remark}

\section{Convergence analysis}
\label{sec:conv_analysis}
This section analyzes how NN parametrization of the optimal two-step feedback control propagate through the controlled recursion and into the generalized risk--reward objective. 
%The analysis separates into:
%(i) \emph{approximation error} (restricting to an NN control class with increasing width), and
%(ii) \emph{estimation error} (replacing population expectations by sample averages).
Throughout, we work under Assumption~\ref{ass:rr_regularity}. Let $(\mathcal{P}^\ast,\xi^\ast)\in\mathcal{G}\times\Xi$ be an optimizer as in
Assumption~\ref{ass:rr_regularity}~(R3), where $\mathcal{P}^\ast=(q^\ast,p^\ast)$.
For each architecture index $n$, let
\mbox{$\Theta_{n}^\ast=(\theta_{q,n}^\ast,\theta_{p,n}^\ast)$}
denote NN parameters producing a pre-decision action network $\widehat q(\cdot;\theta_{q,n}^\ast)$ and a post-decision action network $\widehat p(\cdot;\theta_{p,n}^\ast)$ that approximate the optimal controls $q^\ast$ and $p^\ast$ in probability at random feature inputs, as provided by Theorems~\ref{thm: convergence of q}--\ref{thm: convergence of p}.

We adopt the shorthand
\begin{equation}
\label{eq:w_shorthand}
W^\ast(t_m^\pm):=W(t_m^\pm;\mathcal{P}^\ast,Y),
\qquad
W^{(n)}(t_m^\pm):=W(t_m^\pm;\Theta_{n}^\ast,Y),
\end{equation}
for the optimal and NN-induced state variables at intervention times.

\subsection{Convergence of optimal actions}
\label{ssc:conv_withdrawals_general}
Lemmas~\ref{lem: convergence of q amount} and \ref{lem: convergence of p allocation}
below show that, at decision times, if the NN-induced state process converges to the optimal controlled state process, then the corresponding NN actions converge to the optimal actions.
\begin{lemma}
\label{lem: convergence of q amount}
Suppose Assumption~\ref{ass:rr_regularity} holds.
Let $\{\widehat{q}(\cdot;\theta_{q,n}^\ast)\}_{n\in\mathbb{N}}$ be the pre-decision action network sequence from
Theorem~\ref{thm: convergence of q}. Fix $m\in\{0,1,\ldots,M\}$.
If
\[
W^{(n)}(t_m^-)\xrightarrow{P}W^\ast(t_m^-)
\qquad\text{as }n\to\infty,
\]
then
\[
\widehat{q}\big(t_m^-,\,W^{(n)}(t_m^-);\theta_{q,n}^\ast\big)
\xrightarrow{P}
q^\ast\!\left(t_m^-,\,W^\ast(t_m^-)\right)
\qquad\text{as }n\to\infty .
\]
\end{lemma}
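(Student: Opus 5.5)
The plan is to split the error at the moving input $W^{(n)}(t_m^-)$ into an approximation part and a continuity part. Abbreviate $q^\ast_m(w):=q^\ast(t_m^-,w)$ and $\widehat q_{n,m}(w):=\widehat q(t_m^-,w;\theta^\ast_{q,n})$. Then
\[
\big|\widehat q_{n,m}(W^{(n)}(t_m^-))-q^\ast_m(W^\ast(t_m^-))\big|
\le
\big|\widehat q_{n,m}(W^{(n)}(t_m^-))-q^\ast_m(W^{(n)}(t_m^-))\big|
+\big|q^\ast_m(W^{(n)}(t_m^-))-q^\ast_m(W^\ast(t_m^-))\big| ,
\]
and it suffices to show that each term tends to zero in probability.

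\textbf{Continuity term.} The second term is handled by the extended continuous mapping theorem (a Portmanteau argument). By hypothesis $W^{(n)}(t_m^-)\xrightarrow{P}W^\ast(t_m^-)$. By (R1) both variables take values in $[w_{\min},w_{\max}]$. By (R4), $\mathbb{P}(W^\ast(t_m^-)\in D_{q,m})=0$. Fix $\varepsilon>0$ and $\delta>0$. Since $q^\ast_m$ is continuous at $W^\ast(t_m^-)$ almost surely, the random radius
\[
\rho:=\sup\{r\in(0,1]:\ |q^\ast_m(w)-q^\ast_m(W^\ast(t_m^-))|\le\varepsilon \text{ whenever } |w-W^\ast(t_m^-)|<r\}
\]
is a.s.\ positive. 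Choose $r_0>0$ with $\mathbb{P}(\rho<r_0)<\delta$. Then
\[
\mathbb{P}\big(|q^\ast_m(W^{(n)})-q^\ast_m(W^\ast)|>\varepsilon\big)\le\delta+\mathbb{P}\big(|W^{(n)}-W^\ast|\ge r_0\big)\to\delta .
\]
Since $\delta$ is arbitrary, the second term tends to zero in probability.

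\textbf{Approximation term (main obstacle).} Theorem~\ref{thm: convergence of q} controls $\widehat q_{n,m}-q^\ast_m$ only at a \emph{fixed} random input $X$, whereas here the input $W^{(n)}(t_m^-)$ moves with $n$. I would not try to invoke that theorem with $X=W^{(n)}(t_m^-)$, since doing so is circular. Instead I would use how the networks $\theta^\ast_{q,n}$ are built in the proof of Theorem~\ref{thm: convergence of q}, namely a universal approximation of the pre-activation target composed with $\psi_q$, and upgrade the approximation to a statement that is uniform on neighbourhoods.

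Given $\varepsilon,\delta$, I would proceed as follows.
\begin{enumerate}
\item Take the compact set $K_\delta\subset[w_{\min},w_{\max}]\setminus D_{q,m}$ with $\mathbb{P}(W^\ast(t_m^-)\in K_\delta)\ge1-\delta$, and take $r>0$ so that the oscillation of $q^\ast_m$ on the $r$-neighbourhood $K_\delta^r$ is below $\varepsilon$. This uses the a.s.\ continuity together with compactness, and is the step that needs care.
\item On $K_\delta^r$, approximate $q^\ast_m$ uniformly within $2\varepsilon$ by a continuous function $g$.
\item By Hornik's uniform-on-compacts approximation, together with continuity of $\psi_q$ and the argument of Lemma~\ref{lem: activation function} at the boundary values, choose the networks so that $\sup_{K_\delta^r}|\widehat q_{n,m}-g|\to0$.
\end{enumerate}
The sequence $\theta^\ast_{q,n}$ is then obtained by a diagonal choice over $\varepsilon,\delta\downarrow0$, and it remains compatible with the fixed-input conclusion of Theorem~\ref{thm: convergence of q}.

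On the event $\{W^\ast(t_m^-)\in K_\delta\}\cap\{|W^{(n)}-W^\ast|<r\}$ we have $W^{(n)}(t_m^-)\in K_\delta^r$. Hence the first term is at most $\sup_{K_\delta^r}|\widehat q_{n,m}-g|+3\varepsilon$ there. The complementary event has probability at most $\delta+o(1)$ by the hypothesis $W^{(n)}(t_m^-)\xrightarrow{P}W^\ast(t_m^-)$. Letting $n\to\infty$ and then $\varepsilon,\delta\downarrow0$ gives convergence in probability of the first term.

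Combining the two terms through the triangle inequality and a union bound on the events $\{\cdot>\varepsilon/2\}$ yields the claim. The delicate point is the neighbourhood-uniform construction in the approximation term, specifically obtaining a single $r$ that works on a compact set of high $W^\ast$-probability away from $D_{q,m}$. I would first try to obtain it from the upper semicontinuity of the oscillation function of $q^\ast_m$: the set where the oscillation is at least $\varepsilon$ is closed, so its complement is open and carries probability close to one.
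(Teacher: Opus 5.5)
\textbf{Comparison with the paper.} Your decomposition and your treatment of the continuity term agree with the paper, which simply cites the extended continuous mapping theorem. You also correctly see that Theorem~\ref{thm: convergence of q} at a fixed input cannot be applied at the moving input, so the networks must be built with uniform control. Your localization is different, though. The paper takes $K_j=[w_{\min},w_{\max}]\setminus U_j$ with $U_j\supset D_{q,m}$ open, $\mu_m(U_j)<1/j$ and $\mu_m(\partial U_j)=0$. It approximates $q^\ast_m$ uniformly on $K_j$, where $q^\ast_m$ is exactly continuous, and controls the moving input by Portmanteau, $\mathbb{P}(W^{(n)}(t_m^-)\in U_j)\to\mu_m(U_j)$, with no radius at all. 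The $K_j$ can be taken increasing and the target on each of them is $q^\ast_m$ itself. So a network accurate on $K_j$ is accurate on every earlier $K_i$, and passing to a single sequence costs nothing there.

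\textbf{The gap.} In your route, that passage to a single sequence is where the argument fails. For fixed $(\varepsilon,\delta)$ you control the adapted networks only on $K_\delta^{r}$ with $r=r(\varepsilon,\delta)$, and in general $r(\varepsilon_j,\delta_j)\to0$. For instance, with one jump at $w_1$ and $\mu_m$ having a density there, $K_\delta$ must come within $O(\delta)$ of $w_1$. After the diagonal choice, the $n$-th network is controlled only on $K_{\delta_{j(n)}}^{r_{j(n)}}$. Because the radii shrink, these sets need not be nested, so nothing gives accuracy on a fixed $K_\delta^{r}$. The only bound available is $\delta_{j(n)}+\mathbb{P}(|W^{(n)}(t_m^-)-W^\ast(t_m^-)|\ge r_{j(n)})$, and its last term need not vanish. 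Slowing $j(n)$ down does not help: the network sequence must be fixed before $W^{(n)}$ is, and in Lemma~\ref{lem: convergence of W} $W^{(n)}$ is generated by these very networks. So ``let $n\to\infty$, then $\varepsilon,\delta\downarrow0$'' is not justified for any single sequence.

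\textbf{How to repair it.} What is missing is accuracy that persists across stages. Two options:
\begin{itemize}
\item Build $g_j$ with $|g_j-q^\ast_m|\le 2\varepsilon_i$ on $K_{\delta_i}^{r_i}$ for all $i\le j$ simultaneously. In the partition-of-unity construction, choose the ball around each point $y$ disjoint from every $\overline{K_{\delta_i}^{r_i}}$ that does not contain $y$.
\item Use the inf-convolutions $f_k(w)=\inf_y\{q^\ast_m(y)+k|w-y|\}$. They satisfy $f_{k_n}(w_n)\to q^\ast_m(w)$ whenever $k_n\to\infty$, $w_n\to w$ and $w$ is a continuity point, so one extended continuous mapping argument handles both of your terms at once.
\end{itemize}

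\textbf{A smaller point.} In step 3, $\widehat q_{n,m}(w)\in[q_{\min},q_{\min}+\mathrm{range}(w)]$, so it cannot converge uniformly to an arbitrary continuous $g$. Clip $g$ to this band first; this does not increase $|g-q^\ast_m|$.

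\textbf{What your route buys.} Your closed null sets $\{\omega\ge\varepsilon\}$, with $\omega$ the oscillation of $q^\ast_m$, make the localization work for any $\mu_m$-null $D_{q,m}$. The paper's $U_j$ cannot exist if $D_{q,m}$ is dense in $[w_{\min},w_{\max}]$, since then $\partial U_j\supseteq[w_{\min},w_{\max}]\setminus U_j$ has positive mass.
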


\begin{proof}[Proof of Lemma~\ref{lem: convergence of q amount}]
Fix $m$ and write $W_m^{(n)}:=W^{(n)}(t_m^-)$ and $W_m^\ast:=W^\ast(t_m^-)$.
By the triangle inequality,
\begin{equation}
\label{eq:fix1_split_q_general}
\begin{aligned}
\big|
\widehat{q}(t_m^-,W_m^{(n)};\theta_{q,n}^\ast)
-
q^\ast(t_m^-,W_m^\ast)
\big|
&\le
\big|
\widehat{q}(t_m^-,W_m^{(n)};\theta_{q,n}^\ast)
-
q^\ast(t_m^-,W_m^{(n)})
\big|
\\
&\qquad +
\big|
q^\ast(t_m^-,W_m^{(n)})
-
q^\ast(t_m^-,W_m^\ast)
\big|.
\end{aligned}
\end{equation}
By Assumption~\ref{ass:rr_regularity}~(R4),
$\mathbb{P}(W_m^\ast\in D_{q,m})=0$, i.e.\ $w\mapsto q^\ast(t_m^-,w)$ is continuous at $W_m^\ast$ almost surely.
Since $W_m^{(n)}\xrightarrow{P}W_m^\ast$ by assumption, the extended continuous mapping theorem yields
$q^\ast(t_m^-,W_m^{(n)})\xrightarrow{P}q^\ast(t_m^-,W_m^\ast)$.
So the second term on the rhs of \eqref{eq:fix1_split_q_general} vanishes in probability.

For the first term on the rhs, we handle the $n$-dependent input as follows.
Let $\mu_m$ denote the law of $W_m^\ast$ under $\mathbb{P}$. Since $\mu_m(D_{q,m})=0$, by outer regularity of $\mu_m$
there exist open sets $\{U_j\}_{j\in\mathbb{N}}$ such that
\[
D_{q,m}\subset U_j,
\qquad
\mu_m(U_j)<\frac{1}{j},
\qquad
\mu_m(\partial U_j)=0.
\]
Set $K_j:=[w_{\min},w_{\max}]\setminus U_j$, which is compact. By construction $K_j\cap D_{q,m}=\emptyset$,
hence $w\mapsto q^\ast(t_m^-,w)$ is continuous on $K_j$.

Fix $j\in\mathbb{N}$. By a deterministic universal approximation theorem in the
\mbox{$\sup$ norm \cite{cybenko1989approximation},}
we may choose the approximation sequence in Theorem~\ref{thm: convergence of q} so that
\begin{equation}
\label{eq:fix1_uniform_on_Kj_general}
\sup_{w\in K_j}
\big|
\widehat{q}(t_m^-,w;\theta_{q,n}^\ast)-q^\ast(t_m^-,w)
\big|
\;\longrightarrow\;0
\qquad\text{as }n\to\infty.
\end{equation}
In particular, for any $\varepsilon>0$ there exists $n(\varepsilon,j)$ such that for all $n\ge n(\varepsilon,j)$,
\[
\sup_{w\in K_j}
\big|
\widehat{q}(t_m^-,w;\theta_{q,n}^\ast)-q^\ast(t_m^-,w)
\big|
\le \varepsilon.
\]
Hence, for such $n$, using that $W_m^{(n)}\in[w_{\min},w_{\max}]$ a.s.\ by Assumption~\ref{ass:rr_regularity}~(R1),
\[
\mathbb{P}\!\left(
\big|
\widehat{q}(t_m^-,W_m^{(n)};\theta_{q,n}^\ast)-q^\ast(t_m^-,W_m^{(n)})
\big|>\varepsilon
\right)
\le
\mathbb{P}\!\left(W_m^{(n)}\in U_j\right).
\]
Since $W_m^{(n)}\xrightarrow{P}W_m^\ast$ implies $W_m^{(n)}\Rightarrow W_m^\ast$ and $\mu_m(\partial U_j)=0$,
the Portmanteau theorem \cite{Billingsley1999}[Sec.\ 2] yields
\[
\mathbb{P}(W_m^{(n)}\in U_j)\longrightarrow \mu_m(U_j)<\frac{1}{j}.
\]
Letting $n\to\infty$ and then $j\to\infty$ shows that the first term in \eqref{eq:fix1_split_q_general} also vanishes in probability.
Combining both terms yields the claim.
\end{proof}

\begin{lemma}
\label{lem: convergence of p allocation}
Suppose Assumption~\ref{ass:rr_regularity} holds.
Let $\{\widehat{p}(\cdot;\theta_{p,n}^\ast)\}_{n\in\mathbb{N}}$
be the post-decision action network sequence identified in Theorem~\ref{thm: convergence of p}.
Fix $m\in\{0,1,\ldots,M-1\}$.
If
\[
W^{(n)}(t_m^+)\xrightarrow{P}W^\ast(t_m^+)
\qquad\text{as }n\to\infty,
\]
then
\[
\widehat{p}\!\left(t_m^+,\,W^{(n)}(t_m^+);\theta_{p,n}^\ast\right)
\xrightarrow{P}
p^\ast\!\left(t_m^+,\,W^\ast(t_m^+)\right)
\qquad\text{as }n\to\infty .
\]
\end{lemma}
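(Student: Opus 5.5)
The proof follows the same two-term decomposition as Lemma~\ref{lem: convergence of q amount}, now for the vector-valued map $p^\ast$ and the softmax-composed network $\widehat p$. Fix $m\in\{0,\ldots,M-1\}$ and write $W_m^{(n)}:=W^{(n)}(t_m^+)$ and $W_m^\ast:=W^\ast(t_m^+)$. By the triangle inequality,
\[
\big\|\widehat p(t_m^+,W_m^{(n)};\theta_{p,n}^\ast)-p^\ast(t_m^+,W_m^\ast)\big\|
\le
\big\|\widehat p(t_m^+,W_m^{(n)};\theta_{p,n}^\ast)-p^\ast(t_m^+,W_m^{(n)})\big\|
+\big\|p^\ast(t_m^+,W_m^{(n)})-p^\ast(t_m^+,W_m^\ast)\big\|,
\]
and it suffices to show that each term tends to zero in probability.

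For the second term, we use (R4): $\mathbb{P}(W_m^\ast\in D_{p,m})=0$. Here $D_{p,m}$ is the union over components of the coordinate discontinuity sets, so $w\mapsto p^\ast(t_m^+,w)$ is continuous as a vector map at $W_m^\ast$ almost surely. Since $W_m^{(n)}\xrightarrow{P}W_m^\ast$, the extended continuous mapping theorem gives convergence in probability of the second term. Norm equivalence on $\mathbb{R}^{d_a}$ lets us work componentwise if needed.

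For the first term, let $\mu_m$ be the law of $W_m^\ast$. By outer regularity, choose open sets $U_j\supset D_{p,m}$ with $\mu_m(U_j)<1/j$ and $\mu_m(\partial U_j)=0$. The boundary condition can be arranged by taking unions of intervals whose endpoints avoid the countably many atoms of $\mu_m$. Set $K_j:=[w_{\min},w_{\max}]\setminus U_j$; this set is compact, and $p^\ast(t_m^+,\cdot)$ is continuous on it with values in $\mathcal{Z}_p$. We then need a uniform approximation of $p^\ast$ on $K_j$ by $\psi_p\circ\widetilde p$.

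\textbf{Main obstacle.} Softmax maps into the open simplex interior, and $p^\ast$ may take boundary values (e.g.\ $0$ in some component), so there is no continuous inverse to pull back through. I will handle this with an interior shrinkage: for $\delta\in(0,1)$ set $p_\delta:=(1-\delta)p^\ast+\delta\,\mathbf{1}/d_a$. This map is continuous on $K_j$, takes values in a compact subset of the open simplex, and satisfies $\|p_\delta-p^\ast\|\le C\delta$ uniformly. On the interior, the map $z\mapsto\log z$ (composed with softmax) is a continuous right inverse, so $g_\delta:=\log p_\delta(t_m^+,\cdot)$ is continuous on $K_j$. By the sup-norm universal approximation theorem \cite{cybenko1989approximation}, there are networks $\widetilde p$ approximating $g_\delta$ uniformly on $K_j$. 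Since softmax is Lipschitz, $\sup_{K_j}\|\widehat p-p_\delta\|\to0$, and hence $\limsup_n\sup_{K_j}\|\widehat p-p^\ast\|\le C\delta$.

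Exactly as in the proof of Lemma~\ref{lem: convergence of q amount}, a diagonal choice over $(j,\delta)$ lets the sequence from Theorem~\ref{thm: convergence of p} be selected with this property. For $\varepsilon>C\delta$ and all large $n$, using (R1) we then obtain
\[
\mathbb{P}\big(\|\widehat p(t_m^+,W_m^{(n)};\theta_{p,n}^\ast)-p^\ast(t_m^+,W_m^{(n)})\|>\varepsilon\big)\le\mathbb{P}(W_m^{(n)}\in U_j).
\]
By the Portmanteau theorem, the right-hand side tends to $\mu_m(U_j)<1/j$. Letting $n\to\infty$, then $j\to\infty$ and $\delta\downarrow0$, shows that the first term tends to zero in probability. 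Combining this with the second-term argument yields the claim.
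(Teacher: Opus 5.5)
\paragraph{Verdict.} Your argument follows the paper's route: its proof of this lemma just reruns the proof of Lemma~\ref{lem: convergence of q amount} with $D_{p,m}$ and $\|\cdot\|$. Your interior shrinkage with the $\log$ right inverse is a real improvement, since the paper's sup-norm step leaves implicit that softmax never attains boundary points of $\mathcal{Z}_p$. There is, however, a gap that you share with the paper.

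\paragraph{The gap.} It is the choice of open sets $U_j\supseteq D_{p,m}$ with $\mu_m(U_j)<1/j$ and $\mu_m(\partial U_j)=0$. Since $\overline{D_{p,m}}\subseteq\overline{U_j}=U_j\cup\partial U_j$, these two conditions force $\mu_m(\overline{D_{p,m}})<1/j$ for every $j$, that is, $\mu_m(\overline{D_{p,m}})=0$. Your ``intervals with non-atomic endpoints'' construction works exactly in that case: cover the compact set $\overline{D_{p,m}}$ by finitely many such intervals. But (R4) only gives $\mu_m(D_{p,m})=0$, and a dense discontinuity set is compatible with (R3)--(R4). For example, perturb an optimal $p^\ast(t_m^+,\cdot)$ at an enumeration $r_k$ of $\mathbb{Q}\cap[w_{\min},w_{\max}]$ by amounts $\epsilon_k\downarrow0$ (Thomae-style, staying in $\mathcal{Z}_p$). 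If $\mu_m$ has no atoms at the rationals, this leaves the optimal trajectory unchanged a.s., so the new map is still optimal and satisfies (R4), yet every open $U\supseteq D_{p,m}$ is dense. Then Portmanteau only gives $\limsup_n\mathbb{P}(W_m^{(n)}\in U_j)\le\mu_m(\overline{U_j})=1$. Choosing a different compact $K_j$ does not help either. Uniform approximation on $K_j$ needs $p^\ast|_{K_j}$ continuous, which forces $\mathrm{int}\,K_j\cap D_{p,m}=\emptyset$, hence $\mathrm{int}\,K_j=\emptyset$, and the same bound is vacuous.

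\paragraph{How to close it.} One option is to strengthen (R4) to $\mu_m(\overline{D_{p,m}})=0$ (e.g.\ finitely many thresholds, as in the bang--bang example); under that assumption your proof is complete. The other is to replace the $U_j$/Portmanteau step by continuous convergence at continuity points, as follows.
\begin{itemize}
\item By Baire's theorem, choose continuous $g_k^i$ decreasing to the upper semicontinuous envelope of $p^{\ast,i}(t_m^+,\cdot)$ on $[w_{\min},w_{\max}]$.
\item If $x_k\to x\notin D_{p,m}$, then for each $N$ we have $\limsup_k g_k(x_k)\le g_N(x)$, and $g_N(x)\downarrow p^\ast(t_m^+,x)$ because the envelope agrees with $p^\ast$ at continuity points. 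Conversely, $g_k(x_k)\ge p^\ast(t_m^+,x_k)\to p^\ast(t_m^+,x)$ componentwise. So $g_k(x_k)\to p^\ast(t_m^+,x)$.
\item Compose $g_k$ with the Euclidean projection onto $\mathcal{Z}_p$, then your shrinkage with $\delta_k\downarrow0$, then $\log$.
\item Approximate the result uniformly on $\bigcup_m\{t_m\}\times[w_{\min},w_{\max}]$ to within $1/k$, use that softmax is Lipschitz, and diagonalize into the nested classes $\mathcal{Q}_n$.
\end{itemize}
The resulting networks satisfy $\widehat p_n(t_m^+,x_n)\to p^\ast(t_m^+,x)$ whenever $x_n\to x\notin D_{p,m}$. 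Passing to a.s.-convergent sub-subsequences of $W_m^{(n)}$ then gives the claim directly on the full-measure event $\{W_m^\ast\notin D_{p,m}\}$, with no triangle split or Portmanteau argument needed.
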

\begin{proof}[Proof of Lemma~\ref{lem: convergence of p allocation}]
The proof follows the same argument as Lemma~\ref{lem: convergence of q amount}, with $D_{q,m}$ replaced by $D_{p,m}$ and $|\cdot|$ replaced by $\|\cdot\|$.
\end{proof}

\subsection{Convergence of the controlled state sequence}
\label{ssc:conv_state_general}
We now show that, when the optimal feedback controls $(q^\ast,p^\ast)$ are approximated by the NNs from Theorems~\ref{thm: convergence of q}--\ref{thm: convergence of p}, the induced controlled state sequence converges in probability to the optimal state sequence at all intervention times.
\begin{lemma}
\label{lem: convergence of W}
Suppose Assumption~\ref{ass:rr_regularity} holds.
Let
$\{\widehat{q}(\cdot;\theta_{q,n}^\ast)\}_{n\in\mathbb{N}}$ and
$\{\widehat{p}(\cdot;\theta_{p,n}^\ast)\}_{n\in\mathbb{N}}$
be the NN sequences identified in Theorems~\ref{thm: convergence of q} and~\ref{thm: convergence of p}.
Recalling the notational convention \eqref{eq:w_shorthand}, the NN-induced controlled state satisfies
\[
W^{(n)}(t_m^\pm)
~\xrightarrow{P}~
W^\ast(t_m^\pm),
\quad \text{as } n\to\infty,
\qquad m=0,\ldots,M,
\]
where $t_M^+=T^+$ and $W(T^+;\cdot,Y)$ denotes the terminal post-decision state.
\end{lemma}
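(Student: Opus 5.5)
We argue by forward induction over the interleaved sequence of pre- and post-decision times
\[
t_0^-,\ t_0^+,\ t_1^-,\ t_1^+,\ \ldots,\ t_M^-,\ t_M^+=T^+ .
\]
Each inductive step combines one of the action-convergence results (Lemma~\ref{lem: convergence of q amount} or Lemma~\ref{lem: convergence of p allocation}) with continuity of the relevant update map from (R2).

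\emph{Base case.} We have $W^{(n)}(t_0^-)=W^\ast(t_0^-)=w_0$ deterministically, so convergence in probability is trivial.

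\emph{Pre- to post-decision step.} Suppose $W^{(n)}(t_m^-)\xrightarrow{P}W^\ast(t_m^-)$ for some $m\in\{0,\ldots,M\}$. Lemma~\ref{lem: convergence of q amount} gives
\[
\widehat q\bigl(t_m^-,W^{(n)}(t_m^-);\theta_{q,n}^\ast\bigr)\xrightarrow{P}q^\ast\bigl(t_m^-,W^\ast(t_m^-)\bigr).
\]
Convergence in probability of each component implies joint convergence in probability of the pair
\[
\Bigl(W^{(n)}(t_m^-),\ \widehat q\bigl(t_m^-,W^{(n)}(t_m^-);\theta_{q,n}^\ast\bigr)\Bigr)
\]
to its optimal counterpart. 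By (R1) and the admissibility of both $\widehat q$ and $q^\ast$, all these variables lie a.s.\ in the compact set $[w_{\min},w_{\max}]\times[q_{\min},q_{\max}]$. On that set $U_q(t_m,\cdot,\cdot)$ is continuous, hence uniformly continuous. The continuous mapping theorem, applied directly or via the $\varepsilon$--$\delta$ bound from uniform continuity, then yields $W^{(n)}(t_m^+)\xrightarrow{P}W^\ast(t_m^+)$ through \eqref{eq:rr_skeleton_updates}. For $m=M$ this is exactly the terminal claim for $T^+$.

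\emph{Post- to next pre-decision step.} Suppose $W^{(n)}(t_m^+)\xrightarrow{P}W^\ast(t_m^+)$ for some $m\le M-1$. Lemma~\ref{lem: convergence of p allocation} gives convergence in probability of the allocations $\widehat p$ to $p^\ast$, both taking values in the compact simplex $\mathcal{Z}_p$. The new ingredient here is the exogenous input $Y_{m+1}$, which is common to both recursions but possibly unbounded. We localize as follows. Since $\Ebb\|Y_{m+1}\|<\infty$, for any $\eta>0$ there is $r$ with $\mathbb{P}(\|Y_{m+1}\|>r)<\eta$. On the compact set
\[
[0,T]\times[w_{\min},w_{\max}]\times\mathcal{Z}_p\times\{\|y\|\le r\}
\]
the map $U_p$ is uniformly continuous by (R2). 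For a given $\varepsilon>0$, choose $\delta$ from this uniform continuity. Then
\[
\mathbb{P}\bigl(|W^{(n)}(t_{m+1}^-)-W^\ast(t_{m+1}^-)|>\varepsilon\bigr)
\le \eta+\mathbb{P}\bigl(|W^{(n)}(t_m^+)-W^\ast(t_m^+)|\ge\delta\bigr)+\mathbb{P}\bigl(\|\widehat p-p^\ast\|\ge\delta\bigr).
\]
Letting $n\to\infty$ and then $\eta\downarrow0$ gives $W^{(n)}(t_{m+1}^-)\xrightarrow{P}W^\ast(t_{m+1}^-)$. Alternatively, one can apply the continuous mapping theorem to the joint vector $(W,p,Y_{m+1})$, which converges in probability because its last component is identical in both recursions.

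\emph{Main obstacle.} The only delicate point is that the hypotheses of Lemmas~\ref{lem: convergence of q amount}--\ref{lem: convergence of p allocation} must hold at every stage with the \emph{same} network sequence $\Theta_n^\ast$. The proof of Lemma~\ref{lem: convergence of q amount} chose the approximating sequence depending on $m$ and on the sets $K_j$. I would therefore note that one sequence can be fixed once for all finitely many $m$ by a diagonal choice. Concretely, take networks uniformly approximating $q^\ast(t_m^-,\cdot)$ on the compact sets $K_j^{(m)}$ for all $m\le M$ and $j\le J_n$, where $J_n\to\infty$ slowly, and do the same for $p^\ast$. This is legitimate because the time input $t$ separates the finitely many decision times. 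With this understood, the induction closes after $2M+2$ steps and proves the lemma.
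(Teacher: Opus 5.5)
Your proposal is correct and follows essentially the same route as the paper: a forward induction that alternates Lemma~\ref{lem: convergence of q amount} and Lemma~\ref{lem: convergence of p allocation} with continuity of $U_q$ and $U_p$ from (R2) and the continuous mapping theorem. The paper handles $Y_{m+1}$ simply by applying the mapping theorem jointly, which your localization argument makes explicit, and your point that one network sequence $\Theta_n^\ast$ must serve all intervention times at once (secured by a diagonal choice) is a worthwhile refinement of something the paper leaves implicit, not a departure from its argument.
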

%{\myblue{The proof is based on ...
%A full detail proof of Theorem~\ref{thm: convergence of q} is given in
%Appendix~\ref{app: convergence of q}.}}
\begin{proof}[Proof of Lemma~\ref{lem: convergence of W}]
We prove the claim by induction on $m=0,1,\ldots,M-1$.

\medskip
\noindent\emph{Base case ($m=0$).}
This is trivial. By construction,
$W^{(n)}(t_0^-)=W(t_0^-;\Theta_{n}^\ast,Y)=w_0
= W(t_0^-;\mathcal{P}^\ast,Y)=W^\ast(t_0^-)$.

\medskip
\noindent\emph{Induction step.}
Fix $m\in\{0,\ldots,M-1\}$ and assume that
\[
W^{(n)}(t_m^-)\xrightarrow{P} W^\ast(t_m^-),
\quad \text{as } n\to\infty.
\]
By Lemma~\ref{lem: convergence of q amount},
\[
\widehat{q}\!\left(t_m^-,W^{(n)}(t_m^-);\theta_{q,n}^\ast\right)
~\xrightarrow{P}~
q^\ast\!\left(t_m^-,W^\ast(t_m^-)\right).
\]
Using the pre-decision update \eqref{eq:rr_skeleton_updates},
\begin{align*}
W^{(n)}(t_m^+)
&=
U_q\Big(t_m,\;W^{(n)}(t_m^-),\;\widehat{q}\!\left(t_m^-,W^{(n)}(t_m^-);\theta_{q,n}^\ast\right)\Big),
\\
W^\ast(t_m^+)
&=
U_q\Big(t_m,\;W^\ast(t_m^-),\;q^\ast\!\left(t_m^-,W^\ast(t_m^-)\right)\Big).
\end{align*}
By Assumption~\ref{ass:rr_regularity}~(R2), $U_q$ is continuous on the relevant bounded domain, hence the continuous mapping theorem yields
\[
W^{(n)}(t_m^+)\xrightarrow{P} W^\ast(t_m^+).
\]
Next, by Lemma~\ref{lem: convergence of p allocation},
\[
\widehat{p}\!\left(t_m^+,W^{(n)}(t_m^+);\theta_{p,n}^\ast\right)
~\xrightarrow{P}~
p^\ast\!\left(t_m^+,W^\ast(t_m^+)\right).
\]
Using the post-decision evolution update \eqref{eq:rr_skeleton_evolution},
\begin{align*}
W^{(n)}(t_{m+1}^-)
&=
U_p\Big(t_m,\;W^{(n)}(t_m^+),\;\widehat{p}\!\left(t_m^+,W^{(n)}(t_m^+);\theta_{p,n}^\ast\right),\;Y_{m+1}\Big),
\\
W^\ast(t_{m+1}^-)
&=
U_p\Big(t_m,\;W^\ast(t_m^+),\;p^\ast\!\left(t_m^+,W^\ast(t_m^+)\right),\;Y_{m+1}\Big).
\end{align*}
Again by continuity of $U_p$ on bounded sets  Assumption~\ref{ass:rr_regularity}~(R2), the continuous mapping theorem (applied jointly with $Y_{m+1}$, which does not depend on $n$) implies
\[
W^{(n)}(t_{m+1}^-)\xrightarrow{P} W^\ast(t_{m+1}^-).
\]
This completes the induction. Finally, applying Lemma~\ref{lem: convergence of q amount} at $m=M$ and the update \eqref{eq:rr_skeleton_updates} yields
\[
W^{(n)}(T^+)=W^{(n)}(t_M^+)\xrightarrow{P} W^\ast(t_M^+)=W^\ast(T^+),
\]
which completes the proof.
\end{proof}

\subsection{Convergence of performance vector and objective function}
\label{ssc:conv_objective_general}
We next show that, for any fixed auxiliary variable value $\xi$,
the scalarized objective evaluated under the NN approximations of the optimal control converges to the objective under the optimal control. This uses:
(i) convergence of the NN-induced controlled recursion (Lemma~\ref{lem: convergence of W}),
(ii) finiteness of the performance vector $S(\mathcal{P},Y)$,
(iii) continuity/boundedness of the scalarized criterion function $\mathcal{H}$ on the relevant bounded domain (Assumption~\ref{ass:rr_regularity}~(R7)).

\begin{lemma}
\label{lem: convergence of objective}
Suppose Assumption~\ref{ass:rr_regularity} holds.
Let
$\{\widehat{q}(\cdot;\theta_{q,n}^\ast)\}_{n\in\mathbb{N}}$ and
$\{\widehat{p}(\cdot;\theta_{p,n}^\ast)\}_{n\in\mathbb{N}}$
be the NN sequences identified in Theorems~\ref{thm: convergence of q} and~\ref{thm: convergence of p}.
Then, for any fixed $\xi\in\Xi$,
\[
\lim_{n\to\infty}
V_{NN}(w_0,t_0^-;\xi,\Theta_{n}^\ast)
=
V(w_0,t_0^-;\xi,\mathcal{P}^\ast),
\]
where $V$ is the objective \eqref{eq:rr_objective} and $V_{NN}$ is its NN-parametrized counterpart \eqref{eq:rr_objective_NN}.
\end{lemma}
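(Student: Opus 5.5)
The plan is to push the state convergence of Lemma~\ref{lem: convergence of W} through the performance vector, then the moment vector, and finally the scalarized criterion. Bounded convergence is used at each step, with Assumption~\ref{ass:rr_regularity}~(R5)--(R7) supplying the compactness.

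\emph{Step 1: convergence of the performance vector.} By Lemma~\ref{lem: convergence of W}, every state coordinate $W^{(n)}(t_m^\pm)$ converges in probability to $W^\ast(t_m^\pm)$. Lemmas~\ref{lem: convergence of q amount} and~\ref{lem: convergence of p allocation} then show that the realized action coordinates also converge: $\widehat q(t_m^-,W^{(n)}(t_m^-);\theta^\ast_{q,n})\xrightarrow{P} q^\ast(t_m^-,W^\ast(t_m^-))$, and similarly for $\widehat p$. The vector $S(\Theta_n^\ast,Y)$ has finitely many coordinates, and coordinatewise convergence in probability implies joint convergence. Hence $S(\Theta_n^\ast,Y)\xrightarrow{P} S(\mathcal{P}^\ast,Y)$ in $\mathbb{R}^d$, and by (R5) both sides take values in the compact set $\mathcal{Z}$ almost surely.

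\emph{Step 2: convergence of the moment vector.} By (R5), $\Psi$ is continuous on $\mathcal{Z}$, so the continuous mapping theorem gives $\Psi(S(\Theta_n^\ast,Y))\xrightarrow{P}\Psi(S(\mathcal{P}^\ast,Y))$. Since $\Psi(\mathcal{Z})$ is compact, these random vectors are uniformly bounded. Convergence in probability together with uniform boundedness yields $L^1$ convergence, for example via the subsequence criterion and dominated convergence. Therefore $\bar S(\Theta_n^\ast)\to\bar S(\mathcal{P}^\ast)$ deterministically, and by the discussion of (R7) all of these vectors lie in $\bar{\mathcal{Z}}$.

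\emph{Step 3: convergence of the objective.} Fix $\xi\in\Xi$. The map $(s,\bar s)\mapsto\mathcal{H}(\xi,s,\bar s)$ is continuous on the compact set $\mathcal{Z}\times\bar{\mathcal{Z}}$ by (R7). Combining Steps 1 and 2, $(S(\Theta_n^\ast,Y),\bar S(\Theta_n^\ast))\xrightarrow{P}(S(\mathcal{P}^\ast,Y),\bar S(\mathcal{P}^\ast))$, since the second component is a convergent deterministic sequence. The continuous mapping theorem then gives $\mathcal{H}(\xi,S(\Theta_n^\ast,Y),\bar S(\Theta_n^\ast))\xrightarrow{P}\mathcal{H}(\xi,S(\mathcal{P}^\ast,Y),\bar S(\mathcal{P}^\ast))$. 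Because $\mathcal{H}$ is bounded on $\Xi\times\mathcal{Z}\times\bar{\mathcal{Z}}$, bounded convergence yields convergence of expectations, which is exactly $V_{NN}(w_0,t_0^-;\xi,\Theta_n^\ast)\to V(w_0,t_0^-;\xi,\mathcal{P}^\ast)$.

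The step needing the most care is Step~1's treatment of the action coordinates of $S$. These involve NN maps evaluated at moving inputs, and they are handled only because Lemmas~\ref{lem: convergence of q amount}--\ref{lem: convergence of p allocation} have already absorbed the discontinuity issue via (R4). Beyond that, care is needed to apply continuity only on the compact sets where the random quantities live almost surely. This is handled by applying the continuous mapping theorem to continuous extensions of $\Psi$ and $\mathcal{H}$ (by Tietze's extension theorem), which agree with the originals on the almost-sure ranges.
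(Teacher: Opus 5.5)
Your proposal is correct and follows the same route as the paper's proof. You obtain componentwise, hence joint, convergence in probability of $S$ from Lemmas~\ref{lem: convergence of q amount}--\ref{lem: convergence of W}, then $\bar S(\Theta_n^\ast)\to\bar S(\mathcal{P}^\ast)$ by continuous mapping and boundedness of $\Psi$ on $\mathcal{Z}$, and finally convergence of $\mathbb{E}[\mathcal{H}]$ by continuous mapping plus boundedness from (R7); your subsequence/dominated-convergence argument in place of the paper's uniform-integrability phrasing, and your Tietze-extension justification for applying the continuous mapping theorem to maps continuous only on $\mathcal{Z}$ and $\mathcal{Z}\times\bar{\mathcal{Z}}$ (which the paper leaves implicit), are minor refinements rather than a different approach.
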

For a proof of Lemma~\ref{lem: convergence of objective}, 
see Appendix~\ref{app: convergence of objective}.

\subsection{Convergence of value function}
\label{ssc:conv_value_general}
For each architecture index $n\in\mathbb{N}$, we make the dependence
of the NN-parametrized value function on the network class explicit by writing
\begin{equation}
\label{eq: VNN_nnu_def}
V_{NN}^{(n)}(w_0,t_0^-)
:=
\sup_{\Theta_{n}\in\mathbb{R}^{\vartheta^{(n)}},\ \xi\in\Xi}
V_{NN}(w_0,t_0^-;\xi,\Theta_{n}),
\end{equation}
where $V_{NN}(w_0,t_0^-;\xi,\Theta_{n})$ is defined in \eqref{eq:rr_objective_NN}.
We recall its empirical counterpart (based on an i.i.d.\ dataset $\mathcal{Y}_K=\{Y^{(k)}\}_{k=1}^K$) given in \eqref{eq:rr_value_sample}
\[
\widehat{V}_{NN}(w_0,t_0^-;\mathcal{Y}_K)
:=
\sup_{\Theta_{n}\in\mathbb{R}^{\vartheta^{(n)}},\ \xi\in\Xi}
\widehat{V}_{NN}(w_0,t_0^-;\xi,\Theta_{n},\mathcal{Y}_K),
\]
where $\widehat{V}_{NN}(w_0,t_0^-;\xi,\Theta_{n},\mathcal{Y}_K)$ is defined in \eqref{eq:rr_objective_sample}.

The total error admits the decomposition
\begin{equation}
\label{eq: value_error_decomp}
\begin{aligned}
\big|\widehat{V}_{NN}(w_0,t_0^-;\mathcal{Y}_K)-V(w_0,t_0^-)\big|
&\le
\big|\widehat{V}_{NN}(w_0,t_0^-;\mathcal{Y}_K)-V_{NN}^{(n)}(w_0,t_0^-)\big|
\\
&\qquad+
\big|V_{NN}^{(n)}(w_0,t_0^-)-V(w_0,t_0^-)\big|.
\end{aligned}
\end{equation}
The first term is the \emph{estimation error} (sample average vs.\ expectation), and the second term is the \emph{approximation error} (restriction to the NN control class).

\subsubsection{Vanishing approximation error}

\begin{theorem}
\label{thm: NN first approx}
Under Assumption~\ref{ass:rr_regularity}, we have
\[
\lim_{n\to\infty}
\Big|V_{NN}^{(n)}(w_0,t_0^-)-V(w_0,t_0^-)\Big|
=0.
\]
\end{theorem}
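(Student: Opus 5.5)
The plan is to prove the limit by a sandwich argument, establishing an upper and a lower bound on $V_{NN}^{(n)}(w_0,t_0^-)$.

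\emph{Upper bound.} For every $n$ and every $\Theta_n\in\mathbb{R}^{\vartheta^{(n)}}$, the NN control pair $\widehat{\mathcal{P}}(\Theta_n)$ belongs to $\mathcal{G}$. This holds because $\widehat q(t,w;\theta_{q,n})\in\mathcal{Z}_q(w)$ and $\widehat p(t,w;\theta_{p,n})\in\mathcal{Z}_p$ pointwise by construction of the output maps \eqref{eq: psi_q}--\eqref{eq: psi_p}. Both maps are Borel measurable, being compositions of continuous functions. Moreover, the NN objective \eqref{eq:rr_objective_NN} coincides with the general objective \eqref{eq:rr_objective} evaluated at this pair:
\[
V_{NN}(w_0,t_0^-;\xi,\Theta_n)=V\big(w_0,t_0^-;\xi,\widehat{\mathcal{P}}(\Theta_n)\big).
\]
Hence $V_{NN}(w_0,t_0^-;\xi,\Theta_n)\le V(w_0,t_0^-)$ for all $(\xi,\Theta_n)$. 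Taking the supremum gives $V_{NN}^{(n)}(w_0,t_0^-)\le V(w_0,t_0^-)$ for every $n$. All quantities are finite, since $\mathcal{H}$ is bounded on $\Xi\times\mathcal{Z}\times\bar{\mathcal{Z}}$ by (R7), and (R5) together with the discussion of (R7) shows that the arguments stay in this set.

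\emph{Lower bound.} Take the optimizer $(\mathcal{P}^\ast,\xi^\ast)$ from (R3) and the parameter sequence $\Theta_n^\ast$ from Theorems~\ref{thm: convergence of q}--\ref{thm: convergence of p}, chosen as in Lemma~\ref{lem: convergence of q amount} so that the moving-input argument applies. By definition of the supremum,
\[
V_{NN}^{(n)}(w_0,t_0^-)\ge V_{NN}(w_0,t_0^-;\xi^\ast,\Theta_n^\ast).
\]
Applying Lemma~\ref{lem: convergence of objective} with the fixed value $\xi=\xi^\ast$, the right-hand side converges to $V(w_0,t_0^-;\xi^\ast,\mathcal{P}^\ast)=V(w_0,t_0^-)$. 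Therefore
\[
\liminf_{n\to\infty} V_{NN}^{(n)}(w_0,t_0^-)\ge V(w_0,t_0^-).
\]

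Combining the two bounds gives $0\le V(w_0,t_0^-)-V_{NN}^{(n)}(w_0,t_0^-)\le V(w_0,t_0^-)-V_{NN}(w_0,t_0^-;\xi^\ast,\Theta_n^\ast)\to0$. As a side remark, the nesting $\mathcal{Q}_n\subseteq\mathcal{Q}_{n+1}$ makes $V_{NN}^{(n)}$ nondecreasing, so the convergence is in fact monotone, although this is not needed.

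The main work is contained in Lemma~\ref{lem: convergence of objective}, which is assumed here. The only delicate points in this proof are the following:
\begin{enumerate}[noitemsep, topsep=2pt]
\item confirming that the NN policy class is genuinely admissible, in particular that measurability holds and that the state-dependent interval constraint is respected at the boundary cases $w\le q_{\min}$ and $w\ge q_{\max}$, where $\mathrm{range}(w)$ equals $0$ or $q_{\max}-q_{\min}$ respectively;
\item ensuring that a single sequence $\Theta_n^\ast$ serves all intervention times simultaneously. This can be arranged because the feature input includes $t$ and there are only finitely many $m$, so the approximations for the different times can be combined into one sequence.
\end{enumerate}
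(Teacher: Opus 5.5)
Your proposal is correct and follows the paper's proof essentially verbatim. The upper bound comes from $\widehat{\mathcal{G}}_n\subseteq\mathcal{G}$, which holds because the output maps enforce the constraints. The lower bound comes from $V_{NN}^{(n)}(w_0,t_0^-)\ge V_{NN}(w_0,t_0^-;\xi^\ast,\Theta_n^\ast)$ combined with Lemma~\ref{lem: convergence of objective} at $\xi=\xi^\ast$. Your extra remarks are sound details the paper leaves implicit, but they do not change the argument: the boundary cases of $\mathrm{range}(w)$, the use of a single $\Theta_n^\ast$ across the finitely many intervention times, and monotonicity via $\mathcal{Q}_n\subseteq\mathcal{Q}_{n+1}$.
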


\begin{proof}[Proof of Theorem~\ref{thm: NN first approx}]
For each $n$, the constraint-enforcing output maps ensure that every NN control pair is admissible, hence the NN control class satisfies
$\widehat{\mathcal{G}}_{n}\subseteq \mathcal{G}$.
Therefore, $V_{NN}^{(n)}(w_0,t_0^-)=\ldots$
\begin{align*}
\ldots=
\sup_{\Theta_{n},\ \xi\in\Xi}
V(w_0,t_0^-;\xi,\widehat{\mathcal{P}}(\Theta_{n}))
\le
\sup_{\mathcal{P}\in\mathcal{G},\ \xi\in\Xi}
V(w_0,t_0^-;\xi,\mathcal{P})
=
V(w_0,t_0^-),
\end{align*}
so $\limsup_{n\to\infty}V_{NN}^{(n)}\le V$.

Let $(\mathcal{P}^\ast,\xi^\ast)$ be an optimizer from Assumption~\ref{ass:rr_regularity}~(R3), and let $\Theta_{n}^\ast$ be the NN parameter sequence identified above.
Since $V_{NN}^{(n)}$ is a supremum over $(\Theta_{n},\xi)$,
\[
V_{NN}^{(n)}(w_0,t_0^-)
\ge
V_{NN}(w_0,t_0^-;\xi^\ast,\Theta_{n}^\ast).
\]
By Lemma~\ref{lem: convergence of objective} (with $\xi=\xi^\ast$),
\[
V_{NN}(w_0,t_0^-;\xi^\ast,\Theta_{n}^\ast)
\to
V(w_0,t_0^-;\xi^\ast,\mathcal{P}^\ast)
=
V(w_0,t_0^-),
\]
hence $\liminf_{n\to\infty}V_{NN}^{(n)}\ge V$. Combining limsup and liminf yields the claim.
\end{proof}

\subsubsection{Vanishing estimation error for fixed architecture}
Fix $n$ and let $Y^{(1)},\ldots,Y^{(K)}$ be i.i.d.\ copies of $Y$, with dataset
$\mathcal{Y}_K=\{Y^{(k)}:\ k=1,\ldots,K\}$. The next lemma is a uniform law of large numbers (ULLN) for the empirical objective, including the case where the objective depends on the population moment $\bar S(\Theta_{n})$ through the sample-average estimator $\widehat{\bar S}_K(\Theta_{n})$.

\begin{lemma}[Uniform law of large numbers for the  NN objective]
\label{lem: NN ULLN}
Suppose Assumption~\ref{ass:rr_regularity} holds.
Fix $n\in\mathbb{N}$.
Then
\[
\sup_{\Theta_{n}\in\mathbb{R}^{\vartheta^{(n)}},\ \xi\in\Xi}
\big|
\widehat{V}_{NN}(w_0,t_0^-;\xi,\Theta_{n},\mathcal{Y}_K)
-
V_{NN}(w_0,t_0^-;\xi,\Theta_{n})
\big|
~\xrightarrow{P}~
0,
\quad \text{as } K\to\infty.
\]
\end{lemma}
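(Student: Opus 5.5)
The plan is to split off the moment-estimation effect, reduce everything to uniform laws of large numbers over function classes indexed by $(\Theta_n,\xi)$, and verify the Glivenko--Cantelli property through a finite-complexity argument. The standard compactness-based ULLN of Jennrich type is not available here, because the parameter space $\mathbb{R}^{\vartheta^{(n)}}$ is unbounded.

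\emph{Step 1 (removing the plug-in moment).} Write $\bar{\mathcal{H}}_K(\xi,\Theta_n):=\frac1K\sum_k \mathcal{H}(\xi,S^{(k)}(\Theta_n),\bar S(\Theta_n))$. By (R5)--(R7), $\widehat{\bar S}_K(\Theta_n)$ and $\bar S(\Theta_n)$ both lie in the compact convex set $\bar{\mathcal{Z}}$. Moreover, $\mathcal{H}$ is uniformly continuous on $\Xi\times\mathcal{Z}\times\bar{\mathcal{Z}}$, with a modulus $\omega$ in the $\bar s$-argument that is uniform in $(\xi,s)$. Hence
\[
\sup_{\Theta_n,\xi}\big|\widehat V_{NN}-\bar{\mathcal{H}}_K\big|\le \omega\Big(\sup_{\Theta_n}\big\|\widehat{\bar S}_K(\Theta_n)-\bar S(\Theta_n)\big\|\Big).
\]
It therefore suffices to prove two uniform laws of large numbers:
\[
\text{(a)}\quad \sup_{\Theta_n}\Big\|\tfrac1K\sum_{k=1}^K\Psi\big(S^{(k)}(\Theta_n)\big)-\mathbb{E}\,\Psi\big(S(\Theta_n,Y)\big)\Big\|\xrightarrow{P}0,
\]
\[
\text{(b)}\quad \sup_{\Theta_n,\ \xi\in\Xi,\ \bar s\in\bar{\mathcal{Z}}}\Big|\tfrac1K\sum_{k=1}^K\mathcal{H}\big(\xi,S^{(k)}(\Theta_n),\bar s\big)-\mathbb{E}\,\mathcal{H}\big(\xi,S(\Theta_n,Y),\bar s\big)\Big|\xrightarrow{P}0.
\]
In (b) I deliberately enlarge the index set to include $\bar s$, so that the random index $\bar S(\Theta_n)$ causes no difficulty. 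All integrands are uniformly bounded by (R5) and (R7).

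\emph{Step 2 (Glivenko--Cantelli via approximation plus finite pseudo-dimension).} Fix $\delta>0$. By Stone--Weierstrass, I would replace $\mathcal{H}$ and $\Psi$ on their compact domains by polynomials $\widetilde{\mathcal{H}},\widetilde\Psi$ within $\delta$. I would also replace $U_q$ and $U_p$ by polynomials $\widetilde U_q,\widetilde U_p$ within $\delta$ on the compact sets of (R2), with $\|y\|\le r$ for $U_p$.

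With these replacements, the map $Y\mapsto \widetilde{\mathcal{H}}(\xi,\widetilde S(\Theta_n,Y),\bar s)$ is built from finitely many ingredients: sigmoid networks of fixed width $\nu^{(n)}$, exponentials and divisions (softmax), $\min/\max$ (in $\psi_q$), and polynomials. The composition depth is fixed by $M$. Such a parametric family is definable in the o-minimal structure $\mathbb{R}_{\exp}$, and more concretely it is piecewise Pfaffian. The Karpinski--Macintyre bounds then give a finite pseudo-dimension depending only on $n$, $M$, $d$ and the polynomial degrees. Since the family is uniformly bounded, finite pseudo-dimension gives uniform covering numbers and hence a uniform Glivenko--Cantelli property, with no compactness of the parameter set needed. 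This yields (a) and (b) for the surrogate recursion.

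\emph{Step 3 (controlling the surrogate error).} This is the step I expect to be the main obstacle. The surrogate recursion $\widetilde S$ differs from $S$ not only on the tail event $\{\max_m\|Y_m\|>r\}$, but also through error propagation. A perturbation of size $\delta$ in $U_q$ or $U_p$ is fed into the networks at later steps, and $\widetilde p,\widetilde q$ have no Lipschitz constant uniform in $\Theta_n$. So one cannot bound $\sup_{\Theta_n}|\widetilde S-S|$ directly.

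The tail event itself is harmless: by Markov's inequality and $\mathbb{E}\|Y_m\|<\infty$, its probability is $O(1/r)$ uniformly in $\Theta_n$, and its contribution to both the empirical and the population averages is at most $2\sup|\mathcal{H}|$ times that probability.

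For the propagation issue, I would first try to avoid approximating $U_q$ and $U_p$ at all. Instead I would treat the exact recursion as the composite class and argue complexity directly. The state at each step lies in the finite union of sets on which the continuous maps $U_q,U_p$ act. I would restrict to regimes where they are definable (for instance piecewise polynomial, as in the application), or else approximate only $\mathcal{H}$ and $\Psi$, which enter last and whose approximation errors do not propagate. If general continuous $U$'s must be handled, the fallback is a bracketing argument: apply uniform continuity of $U_q,U_p$ on compacts stepwise in $m$, while exploiting that the composite class is evaluated only on the bounded state range $[w_{\min},w_{\max}]$ given by (R1).

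Finally, letting $K\to\infty$, then $\delta\to0$ and $r\to\infty$, gives (a) and (b). Combined with Step 1 and $\omega(x)\to0$ as $x\to0$, this proves the lemma.
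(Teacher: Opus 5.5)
Your Step~1 is the paper's decomposition. The paper also passes through the oracle objective with $\bar S(\Theta_n)$ in place of $\widehat{\bar S}_K(\Theta_n)$. It controls the plug-in term by uniform continuity of $\mathcal{H}$ in $\bar s$ plus a ULLN for $\{\Psi(S(\Theta_n,\cdot))\}$. For the ULLNs themselves it only cites a result ``for fixed architecture'' after checking uniform boundedness.

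You are right that boundedness is not enough over the non-compact index set $\mathbb{R}^{\vartheta^{(n)}}$, and the gap you flag in Step~3 is genuine. Neither your bracketing fallback nor any other argument can close it, because under (R1)--(R7) alone the lemma is false. Here is a counterexample:
\begin{itemize}
\item Take $M=1$, $d_a=2$, $w_0=5$, $[q_{\min},q_{\max}]=[0,1]$, $U_q(t,w,q)=w-q$, and $U_p(t,w,p,y)=5+p_1\sin(y_1/p_1)$, with $U_p:=5$ when $p_1=0$. This $U_p$ is jointly continuous since $|p_1\sin(y_1/p_1)|\le p_1$.
\item Let the first component $\zeta$ of $Y_1$ be uniform on $[0,1]$. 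Set $S=q(t_1^-,W(t_1^-))$, $\mathcal{H}(\xi,s)=s$, $\Xi=\{0\}$, $d_\Psi=0$.
\item All states lie in $[3,6]$, and $q^\ast(t,w)=\min\{1,\max\{w,0\}\}$ with a constant $p^\ast$ is a continuous optimizer. So (R1)--(R7) all hold.
\item Let the allocation network output the constant $(c,1-c)$ (zero output weights). Let the withdrawal network have pre-activation $\widetilde z(t,w)=A\big(\sigma(a(w-5))-\tfrac{1}{2}\big)$, which is realizable at any fixed width and depth.
\item With $\omega:=1/c$ and $a:=\omega^2$, the integrand is $q_1=\sigma\big(A(\sigma(\omega\sin(\omega\zeta))-\tfrac{1}{2})\big)$. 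For large $\omega$ and $A$ this approximates $\mathbf{1}\{\sin(\omega\zeta)>0\}$.
\item Almost surely the sample values $\zeta_1,\ldots,\zeta_K$ are linearly independent over $\mathbb{Q}$. Kronecker's theorem then gives arbitrarily large $\omega$ with $\sin(\omega\zeta_k)>\tfrac{1}{2}$ for all $k\le K$.
\item For such $\omega$, and then $A$ large, the empirical objective is close to $1$. The population objective is close to $\mathbb{P}(\sin(\omega\zeta)>0)\to\tfrac{1}{2}$.
\end{itemize}
Hence the supremum in the lemma is at least $\tfrac{1}{2}-\epsilon$ for every $K$, almost surely. Uniform continuity of $U_p$ on compacts, which is all your fallback would use, holds in this example. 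What breaks the ULLN is that unbounded weights let the networks resolve arbitrarily fine oscillations of $U_p$.

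Your argument does prove the lemma under an extra hypothesis, and you should state it as a hypothesis rather than as an optional route. Suppose $U_q$ and $U_p$ are definable in $\mathbb{R}_{\exp}$, for instance semialgebraic as in the decumulation example. Then approximate only $\mathcal{H}$ and $\Psi$, whose errors do not propagate, and keep the exact recursion. The composite family indexed by $(\Theta_n,\xi,\bar s)$ is definable, hence has finite pseudo-dimension, and boundedness yields your (a) and (b). The polynomial surrogates for $U_q,U_p$ and the tail truncation are then unnecessary.

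Your enlargement to a free $\bar s$ is exactly what keeps this family definable, since $\Theta_n\mapsto\bar S(\Theta_n)$ is an integral. Note that $\bar S(\Theta_n)$ is deterministic, not ``random'' as you wrote.

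A repair that needs only (R2) is to restrict $\Theta_n$ to a compact ball whose radius grows with $n$ and contains $\Theta_n^\ast$. A Jennrich-type ULLN then applies for each fixed $n$, because $\Theta_n\mapsto\mathcal{H}(\xi,S(\Theta_n,y),\bar s)$ is continuous for almost every $y$. Theorem~\ref{thm: NN first approx} is unaffected by this restriction.
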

For a proof of Lemma~\ref{lem: NN ULLN}, see Appendix~\ref{app: NN ULLN}.

\begin{theorem}
\label{thm: NN second approx v1}
Fix $n\in\mathbb{N}$.
Under the conditions of Lemma~\ref{lem: NN ULLN},
\[
\Big|
\widehat{V}_{NN}(w_0,t_0^-;\mathcal{Y}_K)
-
V_{NN}^{(n)}(w_0,t_0^-)
\Big|
~\xrightarrow{P}~
0,
\qquad \text{as } K\to\infty.
\]
\end{theorem}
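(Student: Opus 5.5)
The plan is to deduce the statement directly from Lemma~\ref{lem: NN ULLN} via the elementary inequality
\[
\Big|\sup_{x\in A} f(x)-\sup_{x\in A} g(x)\Big|\le \sup_{x\in A}|f(x)-g(x)|,
\]
valid for any index set $A$ and real functions $f,g$ on $A$ whose suprema are finite. Take $A:=\mathbb{R}^{\vartheta^{(n)}}\times\Xi$, $f(\Theta_n,\xi):=\widehat{V}_{NN}(w_0,t_0^-;\xi,\Theta_n,\mathcal{Y}_K)$ and $g(\Theta_n,\xi):=V_{NN}(w_0,t_0^-;\xi,\Theta_n)$. By \eqref{eq:rr_value_sample} and \eqref{eq: VNN_nnu_def}, $\sup_A f=\widehat{V}_{NN}(w_0,t_0^-;\mathcal{Y}_K)$ and $\sup_A g=V_{NN}^{(n)}(w_0,t_0^-)$.

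First I will check finiteness, so that the inequality is not of the form $\infty-\infty$. By (R1) and the output constraints, every NN control pair lies in $\mathcal{G}$, so by (R5) $S(\Theta_n,Y^{(k)})\in\mathcal{Z}$ a.s. for all $k$. The empirical moment $\widehat{\bar S}_K(\Theta_n)$ is a convex combination of points of $\Psi(\mathcal{Z})$, hence lies in $\bar{\mathcal{Z}}$; similarly $\bar S(\Theta_n)\in\bar{\mathcal{Z}}$, as in the discussion of (R7). Then (R7) gives $|\mathcal{H}|\le C$ on $\Xi\times\mathcal{Z}\times\bar{\mathcal{Z}}$, so both $f$ and $g$ are bounded by $C$ uniformly on $A$ (almost surely for $f$). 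Consequently both suprema are finite and the inequality applies.

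The inequality itself I will verify in one line. For every $x\in A$ we have $f(x)\le g(x)+\sup_A|f-g|\le \sup_A g+\sup_A|f-g|$; taking the supremum over $x$ gives $\sup_A f-\sup_A g\le\sup_A|f-g|$, and the same argument with $f$ and $g$ swapped gives the reverse bound. Hence, almost surely,
\[
\big|\widehat{V}_{NN}(w_0,t_0^-;\mathcal{Y}_K)-V_{NN}^{(n)}(w_0,t_0^-)\big|\le \Delta_K:=\sup_{\Theta_n,\xi}\big|\widehat{V}_{NN}(w_0,t_0^-;\xi,\Theta_n,\mathcal{Y}_K)-V_{NN}(w_0,t_0^-;\xi,\Theta_n)\big|.
\]
For any $\varepsilon>0$ this gives $\mathbb{P}(|\widehat V_{NN}-V^{(n)}_{NN}|>\varepsilon)\le\mathbb{P}(\Delta_K>\varepsilon)\to0$ by Lemma~\ref{lem: NN ULLN}, which is the claim.

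The only delicate point I expect is measurability: $\Delta_K$ and the empirical supremum are suprema over an uncountable set. I would handle this either by working with outer probability, which is standard in ULLN statements and is implicitly how Lemma~\ref{lem: NN ULLN} is read, or by noting continuity in $(\Theta_n,\xi)$ for fixed data so that the suprema reduce to countable dense subsets. This continuity holds because the network is continuous in its parameters and, under (R2) and (R7), $\mathcal{H}$ is continuous; the exception is at discontinuities of $\mathrm{range}(\cdot)$, which the piecewise-linear form \eqref{eq: psi_q} avoids. Beyond this point the argument is routine.
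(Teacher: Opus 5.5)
Your proposal is correct and is essentially the paper's proof. The paper bounds the difference of suprema by $\sup_{\Theta_n,\xi}\big|\widehat V_{NN}-V_{NN}\big|$ via $|\sup f-\sup g|\le\sup|f-g|$ and concludes from Lemma~\ref{lem: NN ULLN}; your finiteness check (boundedness of $\mathcal{H}$ on $\Xi\times\mathcal{Z}\times\bar{\mathcal{Z}}$) and your measurability remarks are extra rigor that the paper leaves implicit.
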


\begin{proof}[Proof of Theorem~\ref{thm: NN second approx v1}]
Using the inequality $\big|\sup f-\sup g\big|\le \sup|f-g|$, we have
\begin{align*}
\big|
\widehat{V}_{NN}(w_0,t_0^-;\mathcal{Y}_K)
&-
V_{NN}^{(n)}(w_0,t_0^-)
\big|
\\
&\quad \le
\sup_{\Theta_{n},\ \xi\in\Xi}
\big|
\widehat{V}_{NN}(w_0,t_0^-;\xi,\Theta_{n},\mathcal{Y}_K)
-
V_{NN}(w_0,t_0^-;\xi,\Theta_{n})
\big|.
\end{align*}
The right-hand side converges to $0$ in probability by Lemma~\ref{lem: NN ULLN}.
\end{proof}

\subsubsection{Consistency of the two-step approximation}
We now state the main convergence result in the next theorem.
\begin{theorem}
\label{thm: NN second approx}
Suppose Assumption~\ref{ass:rr_regularity} holds and Lemma~\ref{lem: NN ULLN} holds for each fixed~$n$.
Then the empirical NN value function is consistent for the true value function in the following sense:
for every $\varepsilon>0$ and $\delta>0$, there exists $\bar n\in\mathbb{N}$ such that for every $n\ge \bar n$
there exists $K_0(n)\in\mathbb{N}$ with
\[
\mathbb{P}\!\left(
\Big|\widehat{V}_{NN}(w_0,t_0^-;\mathcal{Y}_K)-V(w_0,t_0^-)\Big|>\varepsilon
\right)
<\delta,
\qquad \forall\,K\ge K_0(n).
\]
\end{theorem}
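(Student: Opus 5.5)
The plan is to combine the error decomposition \eqref{eq: value_error_decomp} with the two vanishing-error results already established. Theorem~\ref{thm: NN first approx} handles the approximation term deterministically. Theorem~\ref{thm: NN second approx v1}, which rests on Lemma~\ref{lem: NN ULLN} for each fixed $n$, handles the estimation term in probability. The quantifier order in the statement ($\bar n$ first, then $K_0(n)$ depending on $n$) is exactly what these two results deliver, so no uniformity of the ULLN in $n$ is needed.

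Fix $\varepsilon>0$ and $\delta>0$.

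\emph{Step 1.} By Theorem~\ref{thm: NN first approx}, $|V_{NN}^{(n)}(w_0,t_0^-)-V(w_0,t_0^-)|\to0$ as $n\to\infty$. This is a sequence of real numbers, since both quantities are finite: $\mathcal{H}$ is bounded on the compact set $\Xi\times\mathcal{Z}\times\bar{\mathcal{Z}}$ by (R7). Hence there is $\bar n$ such that $|V_{NN}^{(n)}-V|\le\varepsilon/2$ for all $n\ge\bar n$.

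\emph{Step 2.} Fix any $n\ge\bar n$. By Theorem~\ref{thm: NN second approx v1}, $|\widehat V_{NN}(w_0,t_0^-;\mathcal{Y}_K)-V^{(n)}_{NN}(w_0,t_0^-)|\xrightarrow{P}0$ as $K\to\infty$. So there exists $K_0(n)$ with
\[
\mathbb{P}\big(|\widehat V_{NN}(w_0,t_0^-;\mathcal{Y}_K)-V_{NN}^{(n)}(w_0,t_0^-)|>\varepsilon/2\big)<\delta \quad \text{for all } K\ge K_0(n).
\]

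\emph{Step 3.} By \eqref{eq: value_error_decomp}, on the event $\{|\widehat V_{NN}-V|>\varepsilon\}$ the deterministic term is at most $\varepsilon/2$, so the estimation term must exceed $\varepsilon/2$. The event is therefore contained in the event from Step~2, and its probability is below $\delta$ for all $K\ge K_0(n)$.

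There is essentially no analytic obstacle here. The only points needing care are measurability and finiteness. The empirical supremum $\widehat V_{NN}(w_0,t_0^-;\mathcal{Y}_K)$ is taken over uncountably many $(\Theta_n,\xi)$, so it should be checked to be a random variable. It is one, because the objective is continuous in $(\Theta_n,\xi)$ for fixed data, which means the supremum can be taken over a countable dense set. If that continuity is not available, one should read probabilities as outer probabilities, which is consistent with how Lemma~\ref{lem: NN ULLN} is stated. The argument also uses Remark~\ref{rm: NN training algorithm}, that the trained value equals the exact empirical supremum.
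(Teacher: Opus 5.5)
Your proposal is correct and follows the paper's proof: choose $\bar n$ from Theorem~\ref{thm: NN first approx} so that the approximation error is at most $\varepsilon/2$, then, for each fixed $n\ge\bar n$, take $K_0(n)$ from Theorem~\ref{thm: NN second approx v1}, and conclude with the decomposition \eqref{eq: value_error_decomp}. Your remarks on measurability of the empirical supremum and finiteness of the values are not in the paper, but they do not affect this argument.
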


\begin{proof}[Proof of Theorem~\ref{thm: NN second approx}]
Fix $\varepsilon>0$ and $\delta>0$. By Theorem~\ref{thm: NN first approx}, there exists $\bar n$ such that
\[
\big|V_{NN}^{(n)}(w_0,t_0^-)-V(w_0,t_0^-)\big|\le \varepsilon/2
\qquad \text{for all } n\ge \bar n.
\]
Fix any such $n$. By Theorem~\ref{thm: NN second approx v1}, there exists $K_0(n)$ such that for all $K\ge K_0(n)$,
\[
\mathbb{P}\!\left(
\Big|\widehat{V}_{NN}(w_0,t_0^-;\mathcal{Y}_K)-V_{NN}^{(n)}(w_0,t_0^-)\Big|>\varepsilon/2
\right)
<\delta.
\]
Using \eqref{eq: value_error_decomp} then yields
\[
\mathbb{P}\!\left(
\Big|\widehat{V}_{NN}(w_0,t_0^-;\mathcal{Y}_K)-V(w_0,t_0^-)\Big|>\varepsilon
\right)
<\delta,
\qquad \forall\,K\ge K_0(n),
\]
which proves the claim.
\end{proof}

\section{Numerical experiments}
\label{sec:num}
This section illustrates the convergence-in-probability results of Section~\ref{sec:conv_analysis} on a representative discrete-intervention risk--reward control problem.
Numerically, we study value-function convergence by
(i) increasing NN capacity (architecture index $n$, hence enriching the NN policy class $\widehat{\mathcal{G}}_{n}$ in \eqref{eq: G_hat NN}) at a fixed, large training sample size $K$, and (ii) increasing the training sample size $K$ at a fixed, sufficiently rich NN architecture $n$.

\subsection{A decumulation optimization problem}
We illustrate the full convergence pipeline using a stylized Defined Contribution (DC) retirement decumulation setting. This example is included only as an application-level illustration with interpretable interventions and controls; the analysis is not specific to finance.

\paragraph{Intervention times.}
We take yearly intervention times $t_m=m$, $m=0,\ldots,M$, with $M=30$ (so $\Delta t=1$ year). For this illustration, we abstract from mortality risk by working conditional on the retiree being alive through the fixed horizon $T$,
consistent with the practitioner ``plan-to-live, not to die'' convention (see, e.g.\ \cite{pfau2018overview}).

\paragraph{State and exogenous input process.}
To obtain a concrete, nontrivial input law while keeping the state dimension low enough to compute a high-accuracy reference value via a grid-based method, we set $d_a=2$ and consider two investable assets: a risky asset and a risk-free asset.

In this setting, the scalar controlled state is the inflation-adjusted portfolio balance (i.e.\ the ``wealth'' process) $\{W(t)\}_{0\le t\le T}$.
Between intervention times, the post-decision update \eqref{eq:rr_skeleton_evolution}
propagates the wealth process using a one-period gross return vector $Y_{m+1}$,
$m=0,\ldots,M-1$. Accordingly, $Y_{m+1}\in\Rbb^{2}$ represents the (inflation-adjusted) gross returns of the risky and risk-free assets over the interval $[t_m^+,t_{m+1}^-]$ and is observed at time $t_{m+1}^-$.

We model $Y_{m+1}$ using unit-investment (growth) processes.
Let $\{G^S_t\}_{t\in[0,T]}$ and $\{G^B_t\}_{t\in[0,T]}$ denote the inflation-adjusted values of one unit of currency invested in the risky and risk-free assets, respectively, normalized by $G^S_0=G^B_0=1$.
Between intervention times $t\in[t_m^+,t_{m+1}^-]$, the risky growth process follows a one-factor Kou jump--diffusion \cite{kou01}
and the risk-free growth process has deterministic accumulation:
\begin{equation}
\label{eq:dGS}
\frac{dG^S_t}{G^S_{t^-}}
=
(\mu-\lambda\kappa)\,dt+\sigma\,dZ_t
+
d\Big(\sum_{i=1}^{\pi_t}(\chi_i-1)\Big),
\qquad
dG^B_t=r_f\,G^B_t\,dt.
\end{equation}
Here, $\{Z_t\}$ is Brownian motion, $\{\pi_t\}$ is a Poisson process with constant intensity $\lambda>0$, the jump multipliers $\{\chi_i\}$ are i.i.d., $\kappa=\Ebb[\chi-1]$, and $\{Z_t\}$, $\{\pi_t\}$, and $\{\chi_i\}$ are mutually independent; in the Kou specification, $\log(\chi)$ has an asymmetric double-exponential law \cite{kou01}. We then define
\[
Y_{m+1}
=
\left(
G^S_{t_{m+1}^-}/G^S_{t_m^+},
\;
G^B_{t_{m+1}^-}/G^B_{t_m^+}
\right),
\qquad m=0,\ldots,M-1,
\]
and let $Y=\{Y_m\}_{m=1}^{M}$. Calibration of \eqref{eq:dGS} is discussed in Subsection~\ref{ssc:num_calibration}.
\paragraph{Admissible two-step controls.}
%In this setting, the scalar controlled state process is the inflation-adjusted account balance $\{W(t)\}_{0\le t\le T}$, referred to as the ``wealth process'' hereafter.
At each time $t_m$, the control is a two-step feedback policy $\mathcal{P}=(q,p)$: given the current wealth level, the retiree first (i) chooses a withdrawal $q(t_m^-,W(t_m^-;\cdot))$ at $t_m^-$ for $m=0,\ldots,M$, producing post-withdrawal wealth $W(t_m^+;\cdot)$, and then (ii) chooses portfolio weights $p(t_m^+,W(t_m^+;\cdot))$ at $t_m^+$ for $m=0,\ldots,M-1$;
\mbox{no allocation occurs at $t_M^+$.}
Withdrawals satisfy the state-dependent constraint $q(t,w)\in \mathcal{Z}_q(w)$ from \eqref{eq: Z_q}, and allocations satisfy the simplex constraint $p(t,w)\in\mathcal{Z}_p$ from \eqref{eq: Z_p}.

\paragraph{Controlled recursion.}
We use the recursion \eqref{eq:rr_skeleton_updates}--\eqref{eq:rr_skeleton_evolution} with the updates map
\begin{equation}
\label{eq:ew_cvar_map}
U_q(t,w,q)=w-q, \quad \text{and} \quad U_p(t,w,p,y)=
\begin{cases}
w\,\langle p,y\rangle, & \text{if } w>0,\\
w\,y_2, & \text{if } w\le 0,
\end{cases}
\end{equation}
where $y_2$ denotes the risk-free gross return component of $y = (y_1, y_2) \in\Rbb^2$.
Thus, with $q_m(\cdot):=q(t_m^-,W(t_m^-))$ and
$p_m(\cdot):=p(t_m^+,W(t_m^+))$, we have, \mbox{for $m=0,\ldots,M-1$},
\[
W(t_m^+)=W(t_m^-)-q_m(\cdot),
\qquad
W(t_{m+1}^-)=
\begin{cases}
W(t_m^+)\,\big\langle p_m(\cdot),\,Y_{m+1}\big\rangle, & \text{if } W(t_m^+)>0,\\[2pt]
W(t_m^+)\,Y_{m+1}^{(2)}, & \text{if } W(t_m^+)\le 0.
\end{cases}
%\qquad
\]
The treatment of $W(t_m^+)\le 0$ reflects liquidation at depletion, a practical constraint in DC decumulation plans: if withdrawals render wealth negative, allocation is halted and the wealth process thereafter follows deterministic debt accrual at the risk-free rate. The resulting recursion is piecewise-defined but continuous, consistent with the regime-switching structure in Subsection~\ref{ssc:rr_regularity_discussion} (R2).
%The treatment of  $W(t_m^+)\le 0$ reflects liquidation at depletion, a practical constraint in DC decumulation plans: if withdrawals render wealth negative, allocation is halted and the wealth process thereafter follows deterministic debt accrual at the risk-free rate. The resulting recursion is piecewise-defined but continuous, consistent with the regime-switching structure in Subsection~\ref{ssc:rr_regularity_discussion} (R2).

\noindent \paragraph{Performance vector.}
We take a performance vector to be
\[
S(\mathcal{P},Y)
:=
\Big(
\big(q(t_m^-,W(t_m^-;\mathcal{P},Y))\big)_{m=0}^{M},
\ \ W(T^+;\mathcal{P},Y)
\Big)\in\Rbb^{M+2}.
\]

\paragraph{Reward (expected cumulative withdrawal).}
Using \eqref{eq:rr_reward}, we set
\[
\mathcal{R}\big(S(\mathcal{P},Y)\big)
:=
\sum_{m=0}^{M} q\!\left(t_m^-,W(t_m^-;\mathcal{P},Y)\right).
\]

\paragraph{Risk (CVaR of terminal wealth).}
Using the template \eqref{eq:rr_risk} with $d_\Psi=0$,
we take $\alpha=0.05$ and represent $\mathrm{CVaR}_\alpha(W(T^+))$ via the Rockafellar--Uryasev form \cite{RT2000}:
\[
\mathcal{L}(\xi,s)
:=
\xi+\frac{1}{\alpha}\min\{\,s_{M+1}-\xi,\,0\},
\qquad (\xi,s)\in\Xi\times\Rbb^{M+2},
\]
where $s_{M+1}$ denotes the terminal-wealth component of $s= (s_0, \ldots, s_{M+1})$. Then
\[
\mathcal{J}_{\mathcal{L}}(w_0,t_0^-;\mathcal{P})
=
\sup_{\xi\in\Xi}\,
\mathbb{E}^{w_0,t_0^-}_{\mathcal{P}}\!\left[\mathcal{L}\big(\xi,S(\mathcal{P},Y)\big)\right]
=
\mathrm{CVaR}_{\alpha}\!\big(W(T^+)\big)
\quad\text{(gain-based convention)}.
\]

\paragraph{Scalarized objective.}
The scalarized criterion function \eqref{eq:rr_objective} becomes
\[
\mathcal{H}(\xi,s)
:=
\mathcal{R}(s)+\gamma\,\mathcal{L}(\xi,s), \qquad \gamma>0.
\]
The objective and value function are then
\begin{equation}
\label{eq:ew_cvar_new}
V(w_0,t_0^-;\xi,\mathcal{P})
=
\mathbb{E}^{w_0,t_0^-}_{\mathcal{P}}\!\big[
\mathcal{H}\big(\xi,S(\mathcal{P},Y)\big)
\big],
\quad
V(w_0,t_0^-)
\!=\!\!\!
\sup_{\mathcal{P}\in\mathcal{G},\ \xi\in\Xi} V(w_0,t_0^-;\xi,\mathcal{P}).
\end{equation}
We emphasize that the mean--CVaR objective is not dynamically separable in general; accordingly, the computed NN policy is interpreted as a pre-commitment solution (optimized at $t_0^-$ and then implemented via fixed feedback maps thereafter).
\subsection{Calibration}
\label{ssc:num_calibration}
We follow the calibration approach described in \cite{DM2016semi, PMVS2021c} using long-horizon Australian data.
The risky-asset inputs are based on monthly total returns of the ASX All Ordinaries Index (1935:01--2024:06), obtained from Bloomberg and supplemented with historical dividend-yield data where necessary \cite{mathews2019history}.
The risk-free rate is proxied by Australian 10-year government bond yields (Bloomberg), extended with historical sources to align horizons \cite{butlin2014statistical}.
All series are converted to real (inflation-adjusted) terms using CPI data from the Australian Bureau of Statistics.
The resulting annualized parameter values used in our experiments are reported in Table~\ref{tab:num_kou_params}.
\begin{table}[!hbt]
\vspace*{-0.4cm}
%\caption{Calibrated parameters for the exogenous-input model (annualized, inflation adjusted).}
\caption{Calibrated parameters (annualized, inflation adjusted).}
\label{tab:num_kou_params}
\vspace*{-0.2cm}
\centering{}
\begin{tabular}{c c c c c c c }
\hline
$\mu$ & $\sigma$ & $\lambda$ & $p_{up}$ & $\eta_{1}$ & $\eta_{2}$ & {{$r_f$}} \\
\hline
0.0774 & 0.1202 & 0.3243 & 0.3793 & 7.7209 & 5.9989 & 0.0126\\
\hline
\end{tabular}
\vspace*{-0.2cm}
\end{table}
Given these parameters, we generate i.i.d.\ scenario paths $Y^{(1)},\ldots,Y^{(K)}$ by Monte Carlo (MC) simulation of \eqref{eq:dGS}  and set $\mathcal{Y}_K=\{Y^{(k)}\}_{k=1}^K$ as in \eqref{eq: Y data}.

\subsection{Retirement scenario}
\label{ssc:num_scenario}
The retirement scenario is summarized in Table~\ref{tab:num_scenario}. The retiree selects annual withdrawals $q_m(\cdot)\in[q_{\min},q_{\max}]$, as in \eqref{eq: Z_q},  and portfolio weights satisfy the simplex constraint \eqref{eq: Z_p}.
Economically, $q_{\max}$ may be interpreted as a desired annual real spending level, while $q_{\min}$ is a contingency floor adopted to mitigate depletion risk; the state-dependent constraint \eqref{eq: Z_q} then permits reductions toward $q_{\min}$ in adverse wealth states while preserving higher withdrawals when the account remains well funded.
\begin{table}[h]
\vspace*{-0.4cm}
\caption{Retirement scenario. Monetary units are in thousands of real
dollar.}
\label{tab:num_scenario}
\vspace*{-0.2cm}
\centering
\begin{tabular}{l|c}
\hline
Retiree                 & 65-year-old Australian male\\
Investment horizon $T$ & $30$ years \\
Initial wealth $w_{0}$ & $1000$ \\
Intervention times & $t_m=m$, $m=0,1,\ldots,30$ \\
Annual withdrawal range & $\left[q_{\min}, q_{\max}\right] = \left[35, 60\right]$ \\
Withdrawal decision times & $m=0,1,\ldots,30$ (at $t_m^-$) \\
Allocation decision times & $m=0,1,\ldots,29$ (at $t_m^+$) \\
CVaR confidence level & $\alpha=0.05$ \\
Scalarization parameter & $\gamma=1.0$ \\
\hline
\end{tabular}
\vspace*{-0.2cm}
\end{table}

\subsection{Reference value}
\label{ssc:num_reference}
To assess numerical accuracy, we compare the NN-based empirical optimum
$\widehat V_{NN}(w_0,t_0^-;\mathcal{Y}_K)$ in \eqref{eq:rr_value_sample} against a
high-accuracy grid-based reference approximation $V_{\mathrm{ref}}$, computed for the same model \eqref{eq:dGS}
and risk--reward objective \eqref{eq:ew_cvar_new} using a provably convergent numerical integration method adapted from
\cite{dang2026multi} and specialized to include withdrawal controls.

For each fixed $\xi$, we solve the inner $(q,p)$ control problem by backward recursion on a truncated state domain.
One-period conditional expectations are evaluated by a monotone quadrature rule based on a nonnegative series
representation of the 1D Kou transition density (cf.\ \cite{dang2026multi}).
{\myblue{To handle intervention-time maximizations efficiently, we use the fact that admissible controls depend only on $(t,\text{total wealth})$: at each time step, optimal withdrawals and allocations are obtained by discrete search on a 1D wealth grid, rather than by introducing additional 2D control grids at every spatial node. This substantially reduces computational cost.}}
Finally, we maximize over $\xi\in\Xi$ by a 1D search.

The truncation choices are:
(i) a risky log-domain $y\in[\log(10^{2})-8,\ \log(10^{2})+8]$ with extended quadrature domain $y^\dagger\in[\log(10^{2})-16,\ \log(10^{2})+16]$;
(ii) a risk-free component truncated to $b\in[b_{\min},b_{\max}]=[-5\times 10^{5},\,5\times 10^{5}]$;
(iii) a log-wealth grid $w'=\log(w)\in[w'_{\min},w'_{\max}]
= [-10+\log(10^2),\,10+\log(10^2)]$ {\myblue{used for the 1D searches when $w>0$;
(for $w\le 0$ the recursion follows the liquidation/debt convention \eqref{eq:ew_cvar_map});}}
and (iv)  $\Xi=[-5\times 10^{5},\,5\times 10^{5}]$.

We report three refinement levels: $(N_y\times N_b)\in\{(512 \times512),(1024 \times1024),(2048\times2048)\}$.
{\myblue{At each refinement level, the intervention-time searches use a wealth grid with $N_w=4N_y$ nodes, and the outer $\xi$-search uses $N_\xi=N_b$ nodes.}}
The resulting approximate values at $(w_0,t_0^-)=(1000,0^-)$ are
$V_{(512\times512)} = 1600.08$, $V_{(1024 \times1024)} =1604.19$, and $V_{(2048\times2048)} =1605.22$. A convergence table 
is presented in Appendix~\ref{app:num_reference}.
We take the finest-grid value as
\[
V_{\mathrm{ref}}= 1605.22.
\]
Note that $|V_{\mathrm{ref}} - V_{(1024 \times1024)}|$ is about $0.06\%$ of $V_{\mathrm{ref}}$), which indicates that the discretization error is negligible at the scale of the NN errors reported in Subsection~\ref{ssc:num_convergence}.

\subsection{Training}
\label{ssc:num_training}
We parameterize the two-step feedback policy $\mathcal{P}=(q,p)$ by the NN class
$\widehat{\mathcal{G}}_n$ defined in \eqref{eq: G_hat NN}.
%In particular, we use the constraint-enforcing output maps \eqref{eq: psi_q} (interval constraint for withdrawals) and \eqref{eq: psi_p} (simplex constraint for allocations).
For a chose architecture index $n$ and dataset $\mathcal{Y}_K$, training maximizes the empirical objective \eqref{eq:rr_value_sample} jointly over NN parameters and the auxiliary variable $\xi$.
Each training run initializes $(\theta_{q,n},\theta_{p,n},\xi)$ randomly and applies Adam to maximize $\widehat V_{NN}(w_0,t_0^-;\xi,\Theta_n,\mathcal{Y}_K)$ using minibatches of scenario paths from $\mathcal{Y}_K$. We use separate learning rates for $(\theta_{q,n},\theta_{p,n})$ and for $\xi$, and a fixed iteration budget, consistent with the empirical objective studied in Section~\ref{sec:conv_analysis}.
Representative hyperparameters are given in Table~\ref{tab: NN hyper-parameters}.
\begin{table}[!hbt]
\vspace*{-0.4cm}
\caption{NN hyper-parameters}
\label{tab: NN hyper-parameters}
\vspace*{-0.2cm}
\centering
\begin{tabular}{l|l}
\hline
Iterations & $50{,}000$ \\
Minibatch size & $1{,}000$ \\
Initial Adam learning rate (NN parameters) & $0.05$ \\
Initial Adam learning rate (auxiliary $\xi$) & $0.04$ \\
Weight decay (L2 penalty) & $10^{-4}$ \\
\hline
\end{tabular}
\vspace*{-0.2cm}
\end{table}
In these experiments, we do not split $\mathcal{Y}_K$ into separate training and validation/test sets: for each run, $\mathcal{Y}_K$ is exactly the dataset defining the empirical objective \eqref{eq:rr_value_sample}, and we report the corresponding empirical optimum $\widehat V_{NN}(w_0,t_0^-;\mathcal{Y}_K)$. This matches the objects in the convergence in probability analysis (Theorem~\ref{thm: NN second approx}).
Out-of-sample robustness is discussed in Subsection~\ref{ssc:num_oos}.
\subsection{Convergence}
\label{ssc:num_convergence}
We empirically illustrate convergence in probability by repeating the full training-and-optimization procedure under independent randomization (scenario generation and random network's parameters initialization) and estimating tail probabilities of the error, in the spirit of Theorem~\ref{thm: NN second approx}.
Fix an experimental setting (architecture index $n$ and sample size $K$) and perform $N_{\mathrm{run}}$ independent runs, indexed by $j=1,\ldots,N_{\mathrm{run}}$.
In run $j$, we train the networks on a dataset $\mathcal{Y}_K^{(j)}$ and record the resulting empirical optimum
\[
\widehat V^{(j)}_{n,K}
\;:=\;
\widehat V_{NN}\!\left(w_0,t_0^-;\mathcal{Y}_K^{(j)}\right),
\]
%$\widehat V^{(j)}_{n,K} :=\widehat V_{NN}\!\left(w_0,t_0^-;\mathcal{Y}_K^{(j)}\right)$,
where $\widehat V_{NN}(w_0,t_0^-;\mathcal{Y}_K)$ is the maximized empirical objective in \eqref{eq:rr_value_sample}
(i.e.\ maximized over $\xi\in\Xi$ and NN parameters in $\Theta_n$)
for the policy class $\widehat{\mathcal{G}}_n$.
For a tolerance $\varepsilon>0$, we estimate the tail probability
$\mathbb{P}\!\left(\left|\widehat V_{n,K}-V_{\mathrm{ref}}\right|>\varepsilon\,|V_{\mathrm{ref}}|\right)$
by the empirical frequency
\begin{equation}
\label{eq:num_p_tail}
p_{\varepsilon}
:=
\frac{1}{N_{\mathrm{run}}}\sum_{j=1}^{N_{\mathrm{run}}}
\mathbf{1}\Big\{\big|\widehat V^{(j)}_{n,K}-V_{\mathrm{ref}}\big|>\varepsilon\,|V_{\mathrm{ref}}|\Big\}.
% \quad
%\text{where }
%\widehat V^{(j)}_{n,K} :=\widehat V_{NN}\!\left(w_0,t_0^-;\mathcal{Y}_K^{(j)}\right).
\end{equation}
%(When reporting tables below, $p_{\varepsilon}$ is computed separately for each row's setting.)
\subsubsection{Increasing NN capacity}
We first vary NN capacity while keeping the training sample size large, to reduce sampling noise and emphasize approximation effects.
In the notation of Section~\ref{sec:conv_analysis}, NN capacity is indexed by $n$; for each $n$ we take both networks in the two-step policy class $\widehat{\mathcal{G}}_n$ to have $L^{(n)}$ hidden layers and width $\nu^{(n)}$ (nodes per hidden layer).
We fix $K=2.56\times 10^5$ and consider
\[
(L^{(n)},\nu^{(n)})\in\{(1,2),\ (1,5),\ (2,5)\}.
\]
For each capacity level, we fix one dataset $\mathcal{Y}_K$ and run $N_{\mathrm{run}}=100$ independent trainings with different random initializations
(so the reported dispersion is conditional on this fixed scenario set).

Figure~\ref{fig:num_boxplot_capacity} summarizes the resulting $\widehat V^{(j)}_{n,K}$ across runs, and
Table~\ref{tab:num_capacity_conv} reports summary statistics and the empirical tail probabilities \eqref{eq:num_p_tail}. Key observations are:
\begin{itemize}[noitemsep, topsep=2pt, leftmargin=*]
\item As capacity increases, the distribution shifts toward $V_{\mathrm{ref}}$ (median/mean increase), consistent with improved approximation within richer classes (cf.\ Theorem~\ref{thm: NN first approx}).
\item For fixed $\varepsilon$, the estimated tail probabilities $p_{\varepsilon}$ decrease sharply with capacity (e.g.\ for $\varepsilon=1\%$ from $0.64$ to $0.02$), illustrating concentration of the empirical optimum around $V_{\mathrm{ref}}$ as approximation error decreases.
\end{itemize}
\begin{table}[!hbt]
%\vspace*{-0.5cm}
\caption{Empirical convergence as NN capacity increases (fixed $K=2.56\times 10^{5}$, $N_{\mathrm{run}}=100$ runs; $V_{\mathrm{ref}}=1605.22$); $p_{\varepsilon}$ is defined in \eqref{eq:num_p_tail}.}
\label{tab:num_capacity_conv}
\vspace*{-0.2cm}
\centering
\setlength{\tabcolsep}{6pt}
\begin{tabular}{c|c|ccccc}
\hline
$(L^{(n)},\nu^{(n)})$ & Mean (Std) & $p_{0.5\%}$ & $p_{1\%}$ & $p_{1.5\%}$ & $p_{2\%}$ & $p_{2.5\%}$ \\
%\hline
%$(1,2)$ & $1589.32\ (5.48)$ & $0.73$ & $0.55$ & $0.02$ & $0.00$ & $0.00$ \\
%\hline
%$(1,5)$ & $1593.68\ (6.07)$ & $0.62$ & $0.32$ & $0.00$ & $0.00$ & $0.00$ \\
%\hline
%$(2,5)$ & $1602.50\ (4.61)$ & $0.05$ & $0.01$ & $0.00$ & $0.00$ & $0.00$ \\
%\hline
\hline
$(1,2)$ & $1589.32\ (5.48)$ & $0.97$ & $0.64$ & $0.06$  & $0.00$ & $0.00$  \\
\hline
$(1,5)$ & $1593.68\ (6.07)$ & $0.77$ & $0.32$ & $0.00$  & $0.00$ & $0.00$  \\
\hline
$(2,5)$ & $1602.50\ (4.61)$ & $0.06$ & $0.02$ & $0.00$  & $0.00$ & $0.00$  \\
\hline
\end{tabular}
%\vspace*{-0.5cm}
\end{table}

\begin{figure}[htbp]
\centering
\begin{subfigure}[t]{0.48\linewidth}
\centering
\includegraphics[width=0.95\linewidth]{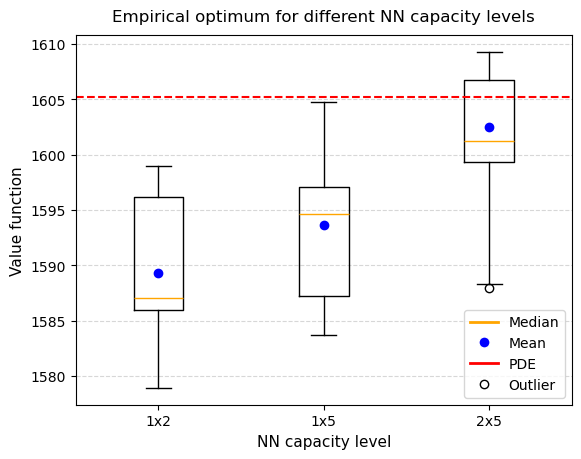}
\caption{Increasing NN capacity (fixed $K=2.56\times 10^5$).}
\label{fig:num_boxplot_capacity}
\end{subfigure}
\hfill
\begin{subfigure}[t]{0.48\linewidth}
\centering
\includegraphics[width=0.95\linewidth]{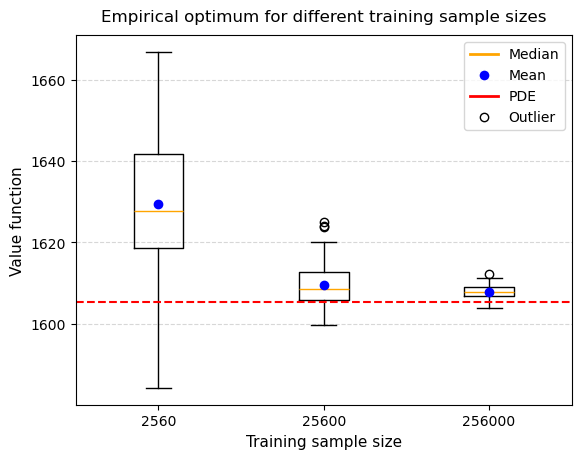}
\caption{Increasing training sample size (fixed capacity $(2,5)$).}
\label{fig:num_boxplot_sample}
\end{subfigure}
%\vspace*{-0.5cm}
\caption{Empirical optima $\widehat V^{(j)}_{n,K}$ across $N_{\mathrm{run}}=100$ runs. Boxes show the interquartile range (25\%--75\%) with median line; whiskers extend to $1.5\times\mathrm{IQR}$ and points beyond are plotted as outliers. The dashed line indicates the reference value $V_{\mathrm{ref}}=1605.22$.}
\label{fig:num_boxplots}
%\vspace*{-0.8cm}
\end{figure}
\subsubsection{Increasing training sample size}
Next, we fix a sufficiently rich architecture at $(L^{(n^\star)},\nu^{(n^\star)})=(2,5)$ and vary $K\in 2.56\times\{10^3,10^4,10^5\}$ to emphasize estimation effects. For each $K$, we perform $N_{\mathrm{run}}=100$ independent runs, each with a freshly simulated dataset $\mathcal{Y}_K^{(j)}$ and a fresh random initialization.

\begin{table}[!hbt]
\vspace*{-0.4cm}
\caption{Empirical convergence as training sample size $K$ increases  (fixed capacity $(L^{(n^\star)},\nu^{(n^\star)})=(2,5)$, $N_{\mathrm{run}}=100$ runs; $V_{\mathrm{ref}}=1605.22$); $p_{\varepsilon}$ is defined in \eqref{eq:num_p_tail}.}
\label{tab:num_sample_conv}
\vspace*{-0.2cm}
\centering
\setlength{\tabcolsep}{6pt}
\begin{tabular}{c|c|ccccc}
\hline
$K$ & Mean (Std) & $p_{0.5\%}$ & $p_{1\%}$ & $p_{1.5\%}$ & $p_{2\%}$ & $p_{2.5\%}$ \\
%\hline
%$2.56\times 10^{3}$ & $1629.41\ (18.27)$ & $0.90$ & $0.70$ & $0.49$ & $0.34$ & $0.21$ \\
%\hline
%$2.56\times 10^{4}$ & $1609.38\ (5.20)$ & $0.27$ & $0.03$ & $0.00$ & $0.00$ & $0.00$ \\
%\hline
%$2.56\times 10^{5}$ & $1607.87\ (1.58)$ & $0.01$ & $0.00$ & $0.00$ & $0.00$ & $0.00$ \\
%\hline
\hline
$2.56 \times 10^{3}$ & $1629.41\ (18.27)$ & $0.88$ & $0.67$ & $0.45$  & $0.32$ & $0.20$  \\
\hline
$2.56 \times 10^{4}$ & $1609.38\ (5.20)$ & $0.21$ & $0.03$ & $0.00$  & $0.00$ & $0.00$  \\
\hline
$2.56 \times 10^{5}$ & $1607.87\ (1.58)$ & $0.00$ & $0.00$ & $0.00$  & $0.00$ & $0.00$  \\
\hline
\end{tabular}
%\vspace*{-0.2cm}
\end{table}
Figure~\ref{fig:num_boxplot_sample} shows that the dispersion of $\widehat V^{(j)}_{n^\star,K}$ decreases substantially as $K$ increases.
Table~\ref{tab:num_sample_conv} reports the corresponding tail probabilities \eqref{eq:num_p_tail}. \mbox{Key observations are:}
\begin{itemize}[noitemsep, topsep=2pt, leftmargin=*]
\item As $K$ increases, the interquartile range and whisker length shrink markedly, and the empirical optima concentrate near $V_{\mathrm{ref}}$, consistent with diminishing estimation error (cf.\ Theorem~\ref{thm: NN second approx}).
\item The tail probabilities $p_{\varepsilon}$ decrease monotonically with $K$ (e.g.\ $p_{1\%}$ drops from $0.67$ to $0.00$), providing a direct empirical proxy for convergence in probability of the empirical optimum.
\end{itemize}

\begin{figure}[hbt]
\centering
\begin{subfigure}[c]{0.48\textwidth}
\centering
\includegraphics[width=0.8\linewidth]{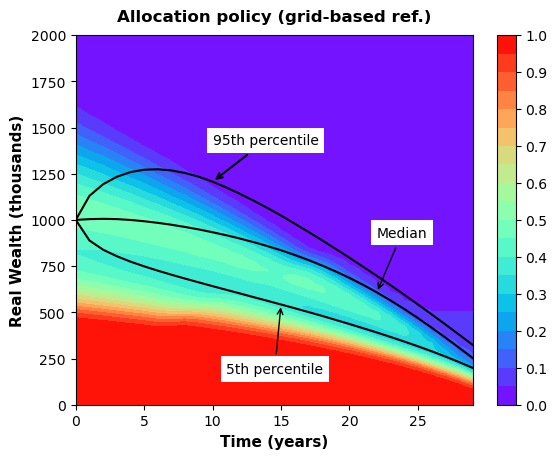}
\caption{Allocation policy (grid-based reference)}
\label{fig:num_pde_P_heatmap}
\end{subfigure}
\hfill
\begin{subfigure}[c]{0.48\textwidth}
\centering
\includegraphics[width=0.8\linewidth]{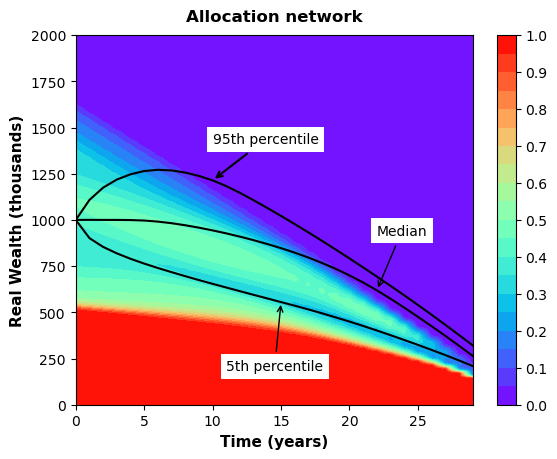}
\caption{Allocation policy (NN)}
\label{fig:num_nn_P_heatmap}
\end{subfigure}
%\vspace{0.4em}
\begin{subfigure}[c]{0.48\textwidth}
\centering
\includegraphics[width=0.8\linewidth]{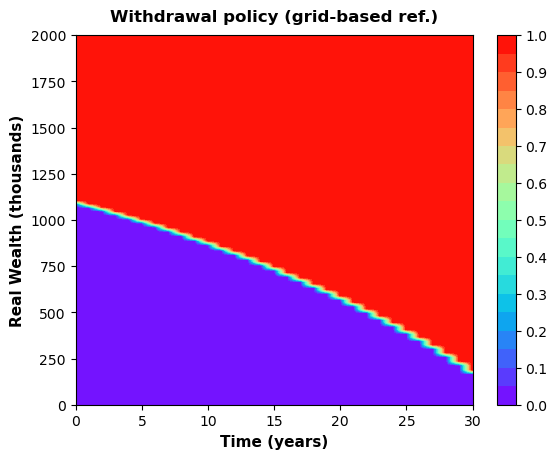}
\caption{Withdrawal policy (grid-based reference)}
\label{fig:num_pde_Q_heatmap}
\end{subfigure}
\hfill
\begin{subfigure}[c]{0.48\textwidth}
\centering
\includegraphics[width=0.8\linewidth]{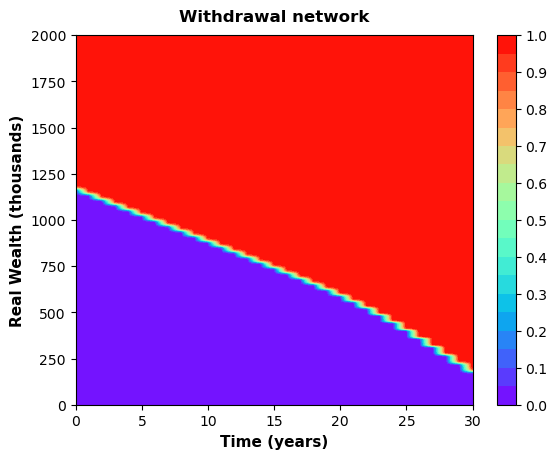}
\caption{Withdrawal policy (NN)}
\label{fig:num_nn_Q_heatmap}
\end{subfigure}
%\vspace*{-0.4cm}
\caption{Policy heatmap comparison.}
\label{fig:num_policy_heatmaps}
%\vspace*{-0.5cm}
\end{figure}

\subsection{Policy structure}
\label{ssc:num_policy_struct}
To complement the value function convergence results in Subsection~\ref{ssc:num_convergence}, we compare the learned feedback maps against the grid-based reference policies.
For a representative trained network (capacity $(L^{(n^\star)},\nu^{(n^\star)})=(2,5)$ and $K=2.56\times 10^5$), we evaluate the NN withdrawal map $q(t_m^-,w)$ and the risky-asset weight (first component) of the NN allocation map $p(t_m^+,w)$ over a uniform $(t,w)$ plotting grid. We then compare these surfaces to the corresponding grid-based reference policies computed in Subsection~\ref{ssc:num_reference}.

\begin{wrapfigure}{r}{0.5\textwidth}
%\vspace*{-0.4cm}
\centering
\includegraphics[width=0.8\linewidth]{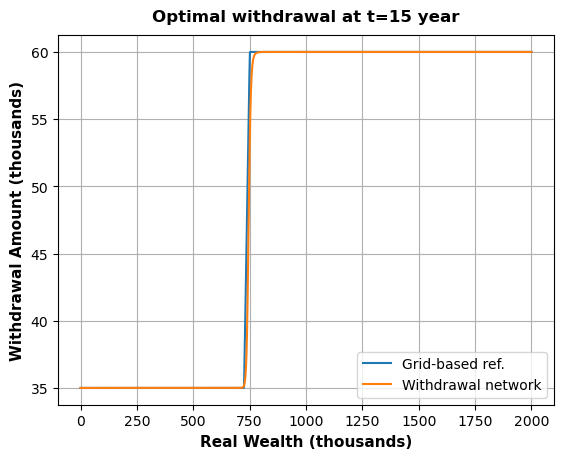}
\caption{Withdrawal slice at $t=15$ years.}
\label{fig:num_withdrawal_slice}
%\vspace*{-0.5cm}
\end{wrapfigure}
Figure~\ref{fig:num_policy_heatmaps} (Panels (a) and (b)) shows that the NN allocation heatmap closely tracks the reference allocation structure across the relevant wealth region, including along typical realized wealth percentiles. {\myblue{More importantly, the withdrawal maps in Figure~\ref{fig:num_policy_heatmaps} (Panels (c) and (d)) exhibit a pronounced quasi--bang--bang pattern: withdrawals concentrate near the bounds $q_{\min}$ and $q_{\max}$, with a narrow transition region along a time-dependent boundary. The NN approximation captures this transition boundary region well, while slightly smoothing the change over a thin band, as expected when a continuous function class approximates a reference map with steep gradients or possible discontinuities.}}

{\myblue{This qualitative structure is compatible with Assumption~\ref{ass:rr_regularity}~(R4): even if the true optimal withdrawal map is not globally continuous in wealth, any nonsmoothness would be concentrated on a lower-dimensional boundary in $(t,w)$. In diffusion-with-jumps return models such as \eqref{eq:dGS}, the one-period return law admits a density, making it plausible (and assumed in (R4)) that the controlled state hits such boundaries with probability zero at intervention times.}}
The slice plot in Figure~\ref{fig:num_withdrawal_slice} provides a 1D view of the same phenomenon at a fixed time. The NN policy closely matches the grid-based threshold structure, with a narrow transition region reflecting {\myblue{approximation of a steep transition in the reference map by a continuous network.}}\footnote{{\myblue{A similar quasi--bang--bang pattern for constrained withdrawal controls has been reported in related numerical studies (see, e.g.\ \cite{forsyth2022stochastic}); here our emphasis is on convergence diagnostics for NN-parametrized policies and on out-of-sample robustness.}}}

\subsection{Out-of-sample robustness}
\label{ssc:num_oos}
The convergence theorems in Section~\ref{sec:conv_analysis} are stated for the empirical optimum
$\widehat V_{NN}(w_0,t_0^-;\mathcal{Y}_K)$ (computed on the same dataset entering the empirical objective).
As a robustness check against potential overfitting concerns in finite samples, we also evaluate each trained policy on a single common independent out-of-sample dataset of size $K_{\mathrm{test}}=2.56\times 10^6$ scenario paths, generated under the same calibrated input law. Concretely, for run $j$ with trained parameters $(\xi^{(j)},\theta^{(j)}_{q,n},\theta^{(j)}_{p,n})$, we compute an out-of-sample value estimate by plugging the trained optimizer into the sample-average functional on the test set:
\[
\widehat V^{(j)}_{n,K_{\mathrm{test}}}
:=
\frac{1}{K_{\mathrm{test}}}\sum_{k=1}^{K_{\mathrm{test}}}
\mathcal{H}\!\left(\xi^{(j)},\,S\!\left(\widehat{\mathcal{P}}^{(j)}_{n},Y_{\mathrm{test}}^{(k)}\right)\right),
\qquad
\widehat{\mathcal{P}}^{(j)}_{n}:=\big(\widehat q_n(\cdot;\theta^{(j)}_{q,n}),\widehat p_n(\cdot;\theta^{(j)}_{p,n})\big),
\]
and summarize errors relative to $V_{\mathrm{ref}}$ in the same way as \eqref{eq:num_p_tail}, i.e.
\begin{equation}
\label{eq:ptest_var}
p^{\mathrm{test}}_{\varepsilon}
:=
\frac{1}{N_{\mathrm{run}}}\sum_{j=1}^{N_{\mathrm{run}}}
\mathbf{1}\Big\{\big|\widehat V^{(j)}_{n,K_{\mathrm{test}}}-V_{\mathrm{ref}}\big|>\varepsilon\,|V_{\mathrm{ref}}|\Big\}.
\end{equation}
Because $K_{\mathrm{test}}$ is large and common across runs, the MC noise in $\widehat V^{(j)}_{n,K_{\mathrm{test}}}$ is small and the cross-run dispersion is dominated by training variability rather \mbox{than test-set noise.}

Table~\ref{tab:num_capacity_conv_oos} reports out-of-sample results as NN capacity increases (with $K=2.56\times 10^5$ fixed), and Table~\ref{tab:num_sample_conv_oos} reports out-of-sample results as $K$ increases (with capacity fixed at $(L^{(n^\star)},\nu^{(n^\star)})=(2,5)$). The qualitative conclusions mirror the in-sample convergence results in Subsection~\ref{ssc:num_convergence}: larger capacity reduces approximation error, and larger $K$ reduces estimation/training variability, with the out-of-sample tail probabilities rapidly collapsing toward zero at moderate sample sizes.

\begin{table}[!hbt]
%\vspace*{-0.35cm}
\caption{Out-of-sample evaluation as NN capacity increases (fixed $K=2.56\times 10^{5}$, $K_{\mathrm{test}}=2.56\times 10^{6}$, $N_{\mathrm{run}}=100$; $V_{\mathrm{ref}}=1605.22$);  $p^{\mathrm{test}}_{\varepsilon}$
defined in \eqref{eq:ptest_var}.}
\label{tab:num_capacity_conv_oos}
%\vspace*{-0.2cm}
\centering
\setlength{\tabcolsep}{6pt}
%\small
\begin{tabular}{c|c|ccccc}
\hline
$(L^{(n)},\nu^{(n)})$ & Mean (Std) & $p^{\mathrm{test}}_{0.5\%}$ & $p^{\mathrm{test}}_{1\%}$ & $p^{\mathrm{test}}_{1.5\%}$ & $p^{\mathrm{test}}_{2\%}$ & $p^{\mathrm{test}}_{2.5\%}$ \\
\hline
$(1,2)$ & $1587.50\ (5.17)$ & $1.00$ & $0.65$ & $0.06$ & $0.00$ & $0.00$ \\
\hline
$(1,5)$ & $1591.86\ (5.81)$ & $0.82$ & $0.32$ & $0.00$ & $0.00$ & $0.00$ \\
\hline
$(2,5)$ & $1600.76\ (4.58)$ & $0.28$ & $0.02$ & $0.00$ & $0.00$ & $0.00$ \\
\hline
\end{tabular}
%\vspace*{-0.25cm}
\end{table}

\begin{table}[!hbt]
%\vspace*{-0.35cm}
\caption{Out-of-sample evaluation as training sample size $K$ increases (fixed capacity $(L^{(n^\star)},\nu^{(n^\star)})=(2,5)$, $K_{\mathrm{test}}=2.56\times 10^{6}$, $N_{\mathrm{run}}=100$; $V_{\mathrm{ref}}=1605.22$); $p^{\mathrm{test}}_{\varepsilon}$ is defined in \eqref{eq:ptest_var}.}
\label{tab:num_sample_conv_oos}
%\vspace*{-0.2cm}
\centering
\setlength{\tabcolsep}{6pt}
%\small
\begin{tabular}{c|c|ccccc}
\hline
$K$ & Mean (Std) & $p^{\mathrm{test}}_{0.5\%}$ & $p^{\mathrm{test}}_{1\%}$ & $p^{\mathrm{test}}_{1.5\%}$ & $p^{\mathrm{test}}_{2\%}$ & $p^{\mathrm{test}}_{2.5\%}$ \\
\hline
$2.56\times 10^{3}$ & $1589.35\ (7.02)$ & $0.91$ & $0.37$ & $0.11$ & $0.03$ & $0.02$ \\
\hline
$2.56\times 10^{4}$ & $1604.41\ (0.78)$ & $0.00$ & $0.00$ & $0.00$ & $0.00$ & $0.00$ \\
\hline
$2.56\times 10^{5}$ & $1606.06\ (0.23)$ & $0.00$ & $0.00$ & $0.00$ & $0.00$ & $0.00$ \\
\hline
\end{tabular}
%\vspace*{-0.25cm}
\end{table} 
\section{Conclusion and future work}
\label{sec:conclusion}
We developed a neural-network approximation framework for discrete-intervention risk--reward stochastic control with constrained two-step feedback policies, where the optimal feedback maps may be discontinuous in the state.
By combining constraint-enforcing output maps with a moving-input stability argument (based on $\mathbb{P}$-null discontinuity sets), we established an end-to-end convergence pipeline: NN approximation of admissible controls propagates through the controlled recursion and a broad class of scalarized risk--reward objectives, and the resulting sample-average training problem converges.
In particular, the empirical optimum converges in probability to the true optimal value as NN capacity and the training sample size increases.
Numerical experiments on a representative decumulation problem corroborate the predicted convergence-in-probability behavior, show excellent agreement between NN and grid-based reference policies (including threshold structure), and indicate robust out-of-sample performance.

Future work includes relaxing bounded-state/compact-domain assumptions, extending the analysis beyond pre-commitment to time-consistent dynamic risk criteria, and studying richer state features and higher-dimensional action spaces.

\appendix
\section{Proof of Lemma~\ref{lem: NN composition}}
\label{app: NN composition}
By Theorem~\ref{thm: convergence of F}, we have $F_{n}(X)\xrightarrow{P} f(X)$ as $n\to\infty$.
By assumption, $\mathbb{P}\!\left(f(X)\in D_\psi\right)=0$, i.e.\ $\psi$ is continuous at $f(X)$ almost surely.
Hence, by the (extended) continuous mapping theorem \cite{Billingsley1995,kallenberg2002},
$\psi(F_{n}(X))\xrightarrow{P}\psi(f(X))$.
\section{Proof of Lemma~\ref{lem: activation function}}
\label{app: activation function}
Fix $\delta\in(0,(b-a)/2)$ and define
$\Pi_\delta:[a,b]^N\to(a,b)^N$ by
\[
(\Pi_\delta(y))_j := \min\{b-\delta,\max\{a+\delta,y_j\}\}, \qquad j=1,\ldots,N.
\]
Then $\Pi_\delta\circ g$ takes values in $[a+\delta,b-\delta]^N\subset(a,b)^N$ and
$\Pi_\delta(g(X))\to g(X)$ a.s.\ as $\delta\downarrow0$. Using the measurable right inverse, set
$h_\delta := \psi^{-1}_r\circ\Pi_\delta\circ g:\mathbb{R}^{\nu_0}\to\mathbb{R}^N$,
which is Borel measurable. By Theorem~\ref{thm: convergence of F}, for each fixed $\delta$ there exists a sequence
$\{F_{n,\delta}\}_{n\in\mathbb{N}}$ with $F_{n,\delta}\in \mathcal{Q}_{n}$ such that
$F_{n,\delta}(X)\xrightarrow{P} h_\delta(X)$ as $n\to\infty$.
By continuity of $\psi$ and the continuous mapping theorem,
\[
\psi\!\left(F_{n,\delta}(X)\right)\xrightarrow{P}\psi\!\left(h_\delta(X)\right)
=\Pi_\delta(g(X)),
\qquad\text{as }n\to\infty.
\]
Now choose a deterministic sequence $\{\delta_j\}_{j\in\mathbb{N}}$ with $\delta_j\downarrow0$.
For each $j$, by the above convergence with $\delta=\delta_j$, there exists an index $n(j)$ such that for all
$n\ge n(j)$,
\[
\mathbb{P}\!\left(\big\|\psi(F_{n,\delta_j}(X))-\Pi_{\delta_j}(g(X))\big\|>\tfrac{1}{j}\right)<\tfrac{1}{j}.
\]
Choose $n(j)$ increasing in $j$ and define $j(n):=\max\{j:\ n(j)\le n\}$, so $j(n)\to\infty$ as
$n\to\infty$. Set
\[
F_{n}:=F_{n,\delta_{j(n)}}\in\mathcal{Q}_{n}.
\]
Then $\psi(F_{n}(X))-\Pi_{\delta_{j(n)}}(g(X))\xrightarrow{P}0$ as $n\to\infty$.
Finally, by the triangle inequality,
\[
\big\|\psi(F_{n}(X))-g(X)\big\|
\le
\big\|\psi(F_{n}(X))-\Pi_{\delta_{j(n)}}(g(X))\big\|
+
\big\|\Pi_{\delta_{j(n)}}(g(X))-g(X)\big\|.
\]
The first term converges to $0$ in probability by construction;
the second converges to $0$ a.s.\ (hence in probability) since $\delta_{j(n)}\downarrow0$.
This completes the proof.

\section{Proof of Theorem~\ref{thm: convergence of q}}
\label{app: convergence of q}
Recall $\mathrm{range}(w)=\max(\min(q_{\max},w)-q_{\min},0)$ from \eqref{eq: psi_q}.
Since $q^\ast$ is admissible, $q^\ast(t,w)\in[q_{\min},\,q_{\min}+\mathrm{range}(w)]$ for all $(t,w)$.
Define the normalized target
\[
u^\ast(t,w)
:=
\begin{cases}
\dfrac{q^\ast(t,w)-q_{\min}}{\mathrm{range}(w)}, & \mathrm{range}(w)>0,\\[6pt]
0, & \mathrm{range}(w)=0.
\end{cases}
\]
Then $u^\ast$ is measurable and $u^\ast(t,w)\in[0,1]$.

Apply Lemma~\ref{lem: activation function} with $N=1$, $[a,b]=[0,1]$, $g:=u^\ast$, and $\psi:=\sigma$.
This yields a sequence of scalar FNN outputs $\{z_{n}(\cdot)\}_{n\in\mathbb{N}}$ such that
$\sigma(z_{n}(X))\xrightarrow{P}u^\ast(X)$ as $n\to\infty$.
Define
\[
\widehat q(t',w';\theta_{q,n}^\ast)
:=
q_{\min}+\mathrm{range}(w')\,\sigma(z_{n}(t',w'))
\equiv
\psi_q\!\big(w', z_{n}(t',w')\big).
\]
Since $(w,u)\mapsto q_{\min}+\mathrm{range}(w)\,u$ is continuous and $w$ is part of the input,
the continuous mapping theorem implies
\[
\widehat q(X;\theta_{q,n}^\ast)
\xrightarrow{P}
q_{\min}+\mathrm{range}(w(X))\,u^\ast(X)
=
q^\ast(X),
\qquad \text{as } n\to\infty.
\]

\section{Proof of Theorem~\ref{thm: convergence of p}}
\label{app: convergence of p}
We first note that $p^\ast(X)\in\mathcal{Z}_p$.
For $\delta>0$, define the interiorized simplex map $\Pi_\delta:\mathcal{Z}_p\to\mathcal{Z}_p$ by
\[
\Pi_\delta(y):=\frac{y+\delta \mathbf{1}}{1+d_a\delta},
\]
so that $\Pi_\delta(y)\in\mathcal{Z}_p$ and each component of $\Pi_\delta(y)$ is strictly positive.
In particular, $\Pi_\delta(p^\ast(X))$ lies in the open simplex and
$\Pi_\delta(p^\ast(X))\to p^\ast(X)$ almost surely as $\delta\downarrow 0$.

On the open simplex, the softmax map $\psi_p$ in \eqref{eq: psi_p} admits the measurable right inverse
\[
\psi_{p,r}^{-1}(y):=(\log y_i)_{i=1}^{d_a},
\qquad y\in\mathcal{Z}_p,\ y_i>0,
\]
since $\sum_{i=1}^{d_a} y_i=1$ implies $\psi_p(\log y)=y$.
Fix $\delta>0$ and define the measurable target
\[
h_\delta(\varphi):=\psi_{p,r}^{-1}\!\big(\Pi_\delta(p^\ast(\varphi))\big)\in\mathbb{R}^{d_a},
\qquad \varphi\in\mathcal{D}_\phi.
\]
By Theorem~\ref{thm: convergence of F} there exists a sequence of FNNs
$\widetilde p_{n,\delta}(\cdot)\in \mathcal{Q}_{n}$ such that
$\widetilde p_{n,\delta}(X)\xrightarrow{P} h_\delta(X)$ as $n\to\infty$.
Since $\psi_p$ is continuous, the continuous mapping theorem yields
\[
\psi_p\!\left(\widetilde p_{n,\delta}(X)\right)
\xrightarrow{P}
\psi_p\!\left(h_\delta(X)\right)
=
\Pi_\delta(p^\ast(X)),
\qquad \text{as } n\to\infty.
\]

To pass to the boundary, fix a deterministic sequence $\{\delta_j\}_{j\in\mathbb{N}}$ with $\delta_j\downarrow 0$.
For each $j$, since $\psi_p(\widetilde p_{n,\delta_j}(X))\xrightarrow{P}\Pi_{\delta_j}(p^\ast(X))$ as $n\to\infty$,
we can choose an index $n(j)$ such that
\[
\mathbb{P}\!\left(
\big\|\psi_p(\widetilde p_{n(j),\delta_j}(X))-\Pi_{\delta_j}(p^\ast(X))\big\|>\tfrac{1}{j}
\right)<\tfrac{1}{j}.
\]
Define a full sequence $\{\widetilde p_{n}\}_{n\in\mathbb{N}}$ by setting, for each $j$,
\[
\widetilde p_{n}:=\widetilde p_{n(j),\delta_j}
\quad\text{for all } n(j)\le n < n(j+1),
\]
and define $\widehat p(\varphi;\theta_{p,n}^\ast):=\psi_p(\widetilde p_{n}(\varphi))$.
By nesting, $\widetilde p_{n}\in\mathcal{Q}_{n}$ for all $n$. Moreover,
\[
\widehat p(X;\theta_{p,n}^\ast)
=
\psi_p(\widetilde p_{n}(X))
\xrightarrow{P}
\Pi_{\delta_{j(n)}}(p^\ast(X)),
\qquad \text{as } n\to\infty,
\]
with $j(n)\to\infty$.
Finally, since $\Pi_{\delta_{j(n)}}(p^\ast(X))\to p^\ast(X)$ almost surely (hence in probability),
the triangle inequality yields $\widehat p(X;\theta_{p,n}^\ast)\xrightarrow{P}p^\ast(X)$.

\section{Proof of Lemma~\ref{lem: convergence of objective}}
\label{app: convergence of objective}
Write $\mathbb{E}[\cdot]$ for $\mathbb{E}^{w_0,t_0^-}_{\mathcal{P}}[\cdot]$ when the control is clear from context.
Define the NN-induced and optimal performance vectors
\[
S^{(n)}:=S(\Theta_{n}^\ast,Y),
\qquad
S^\ast:=S(\mathcal{P}^\ast,Y).
\]
Since $S(\cdot,Y)$ is a finite-dimensional vector built from the intervention-time sequence
\\
$\{W(t_m^\pm),q(t_m^-,W(t_m^-)),p(t_m^+,W(t_m^+))\}$, 
Lemmas~\ref{lem: convergence of q amount}--\ref{lem: convergence of W} imply componentwise convergence in probability, hence
\[
S^{(n)}\xrightarrow{P} S^\ast.
\]
For moment dependence, define
\[
\bar S^{(n)}:=\mathbb{E}\!\left[\Psi(S^{(n)})\right],
\qquad
\bar S^\ast:=\mathbb{E}\!\left[\Psi(S^\ast)\right],
\]
where $\Psi$ is as in \eqref{eq:rr_moment_vector}. By Assumption~\ref{ass:rr_regularity}~(R5), $\Psi$ is continuous on the compact set $\mathcal{Z}$ containing $S(\mathcal{P},Y)$ a.s.\ for all admissible $\mathcal{P}$, hence $\Psi$ is bounded on $\mathcal{Z}$.
By the continuous mapping theorem, $\Psi(S^{(n)})\xrightarrow{P}\Psi(S^\ast)$, and boundedness implies uniform integrability, hence
\[
\bar S^{(n)}=\mathbb{E}[\Psi(S^{(n)})]\longrightarrow \mathbb{E}[\Psi(S^\ast)]=\bar S^\ast.
\]
(If $d_\Psi=0$, the $\bar S$ terms are absent and this step can be ignored.)

Now define the scalarized criterion random variables
\[
\mathcal{H}^{(n)}:=\mathcal{H}\left(\xi, S^{(n)}, \bar S^{(n)}\right),
\qquad
\mathcal{H}^\ast:=\mathcal{H}\left(\xi, S^\ast, \bar S^\ast\right).
\]
Since $(S^{(n)},\bar S^{(n)})\xrightarrow{P}(S^\ast,\bar S^\ast)$ and $\mathcal{H}$ is continuous on
$\Xi\times\mathcal{Z}\times\bar{\mathcal{Z}}$ by Assumption~\ref{ass:rr_regularity}~(R7), the continuous mapping theorem yields
$\mathcal{H}^{(n)}\xrightarrow{P}\mathcal{H}^\ast$.
Moreover, by Assumption~\ref{ass:rr_regularity}~(R7), the integrand $\mathcal{H}$ is bounded on $\Xi\times\mathcal{Z}\times\bar{\mathcal{Z}}$, so $\{\mathcal{H}^{(n)}\}_{n}$ is uniformly integrable. Therefore,
$ \mathbb{E}[\mathcal{H}^{(n)}]\longrightarrow \mathbb{E}[\mathcal{H}^\ast]$,
which is
$V_{NN}(w_0,t_0^-;\xi,\Theta_{n}^\ast)\to V(w_0,t_0^-;\xi,\mathcal{P}^\ast)$.
This concludes the proof.

\section{Proof of Lemma~\ref{lem: NN ULLN}}
\label{app: NN ULLN}
Write $\mathbb{E}[\cdot]$ for expectation under $\mathbb{P}$ conditional on $W(t_0^-)=w_0$. The dependence on the control parameter $\Theta_{n}$ enters only through the measurable map $Y\mapsto S(\Theta_{n},Y)$.
Fix $n$.
Recall from \eqref{eq:barZ}--\eqref{eq:widehat_barZ}
%the moment vector
\[
\bar S(\Theta_{n})
:=
\mathbb{E}\!\big[\Psi(S(\Theta_{n},Y))\big]\in\mathbb{R}^{d_\Psi},
\]
and its sample-average estimator based on i.i.d.\ copies $Y^{(1)},\ldots,Y^{(K)}$,
\[
\widehat{\bar S}_K(\Theta_{n})
:=
\frac{1}{K}\sum_{k=1}^K \Psi\!\left(S^{(k)}(\Theta_{n})\right)\in\mathbb{R}^{d_\Psi}.
\]
%We introduce the intermediate (oracle) empirical objective which uses
We introduce the auxiliary empirical objective which uses
$\bar S(\Theta_{n})$ instead of $\widehat{\bar S}_K(\Theta_{n})$:
\[
\widetilde{V}_{NN}(w_0,t_0^-;\xi,\Theta_{n},\mathcal{Y}_K)
:=
\frac{1}{K}\sum_{k=1}^{K}
\mathcal{H}\!\left(\xi,\ S^{(k)}(\Theta_{n}),\ \bar S(\Theta_{n})\right).
\]
Then, by the triangle inequality,
\begin{equation}
\label{eq:decom}
\sup_{\Theta_{n},\xi\in\Xi}\big|\widehat{V}_{NN}-V_{NN}\big|
\le
\sup_{\Theta_{n},\xi\in\Xi}\big|\widehat{V}_{NN}-\widetilde{V}_{NN}\big|
+
\sup_{\Theta_{n},\xi\in\Xi}\big|\widetilde{V}_{NN}-V_{NN}\big|.
\end{equation}
%\medskip
%\noindent\emph{Step 1: Uniform LLN for the oracle objective.}
\noindent\emph{Step 1: The second term on the rhs of \eqref{eq:decom}}
For fixed $n$, the function class
\[
\Big\{y\mapsto \mathcal{H}\big(\xi,S(\Theta_{n},y),\bar S(\Theta_{n})\big)
:\ (\xi,\Theta_{n})\in\Xi\times\mathbb{R}^{\vartheta^{(n)}}\Big\}
\]
is uniformly bounded by Assumption~\ref{ass:rr_regularity}~(R5)/(R7), since $S(\Theta_{n},Y)\in\mathcal{Z}$ a.s.\ for all admissible controls and $\bar S(\Theta_{n})\in\bar{\mathcal{Z}} =\mathrm{conv}(\Psi(\mathcal{Z}))$.
Therefore a suitable ULLN (e.g.\ \cite[Thm.~4.3]{tsang2020deep} for fixed architecture) yields
\[
\sup_{\Theta_{n},\xi\in\Xi}\big|\widetilde{V}_{NN}-V_{NN}\big|\xrightarrow{P}0,
\qquad \text{as }K\to\infty.
\]
%\medskip
%\noindent\emph{Step 2: Uniform control of the plug-in effect.}
\noindent\emph{Step 2: The first term on the rhs of \eqref{eq:decom}}
By (R5)/(R7), $\Psi(\mathcal{Z})$ is compact and $\bar{\mathcal{Z}}=\mathrm{conv}(\Psi(\mathcal{Z}))$ is compact.
Moreover, $\widehat{\bar S}_K(\Theta_{n})\in\bar{\mathcal{Z}} =\mathrm{conv}(\Psi(\mathcal{Z}))$ for every $(K,\Theta_{n})$.
Since $\mathcal{H}$ is continuous on $\Xi\times\mathcal{Z}\times\bar{\mathcal{Z}}$ and the latter is compact whenever $\Xi$ is compact
(or more generally if $\mathcal{H}$ is uniformly continuous on $\Xi\times\mathcal{Z}\times\bar{\mathcal{Z}}$), $\mathcal{H}$ is uniformly continuous in its third argument on this domain. Hence, for any $\varepsilon>0$ there exists \mbox{$\delta>0$ such that}
\[
\| \bar s-\bar s'\|\le\delta
\quad\Longrightarrow\quad
\sup_{(\xi,s)\in\Xi\times\mathcal{Z}}
\big|\mathcal{H}(\xi,s,\bar s)-\mathcal{H}(\xi,s,\bar s')\big|
\le \varepsilon.
\]
Therefore, on the event $\{\sup_{\Theta_{n}}\|\widehat{\bar S}_K(\Theta_{n})-\bar S(\Theta_{n})\|\le \delta\}$,
\[
\sup_{\Theta_{n},\xi\in\Xi}\big|\widehat{V}_{NN}-\widetilde{V}_{NN}\big|
\le \varepsilon.
\]
Finally, since $\Psi(S(\Theta_{n},Y))$ is uniformly bounded and the architecture is fixed, a ULLN applied to the class
$\{y\mapsto \Psi(S(\Theta_{n},y))\}$ implies
\[
\sup_{\Theta_{n}}\big\|\widehat{\bar S}_K(\Theta_{n})-\bar S(\Theta_{n})\big\|\xrightarrow{P}0.
\]
Combining the preceding displays yields
$\sup_{\Theta_{n},\xi\in\Xi}\big|\widehat{V}_{NN}-\widetilde{V}_{NN}\big|\xrightarrow{P}0$.
Together with Step 1, this proves the claimed ULLN for $\widehat V_{NN}$.

\section{Convergence table for grid-based method in Subsection~\ref{ssc:num_reference}}
\label{app:num_reference}
\begin{table}[hbt]
\vspace*{-0.4cm}
\caption{Convergence test for EW--CVaR problem with $\alpha=0.05$ and $\gamma=1.0$.
Grid size denotes the discretization used in the numerical scheme $\left( N_{y} \times N_{b} \right)$, where $N_{y}$ is the number of nodes in the risky log-domain and $N_{b}$ is the number of nodes in the risk-free domain; 
at each refinement level, the intervention-time searches use a wealth grid with $N_w=4N_y$ nodes, and the outer $\xi$-search uses $N_\xi=N_b$ nodes.
Units: thousands of dollars (real).}
\label{tab: PDE results table}
% \vspace{0.2cm}
\centering
\setlength{\tabcolsep}{6pt}
\begin{tabular}{c|c|c|c|c}
%\begin{tabular}{|>{\raggedright}p{7cm}|>{\centering}p{1.6cm}|}
\hline
Grid size & Value function & $E\left[ \sum_{m}q_{m}\right]/\left(M+1\right)$ & $\text{CVaR}_{5\%}\left[W_{T}\right]$ & $\xi^{\ast}$ \\
% \hline
% $\left(256 \times 167\right)$ & $1598.59$ & $50.52$ & $32.61$ & $123.36$ \\
% \hline
% $\left(512 \times 333\right)$ & $1601.33$ & $50.94$ & $22.33$ & $124.32$ \\
% \hline
% $\left(1024 \times 665\right)$ & $1604.04$ & $50.76$ & $30.47$ & $130.50$ \\
% \hline
% PDE $\left(1024 \times 665\right)$ & $1600.64$ & $52.28$ & $-19.96$ & $40.80$ \\
% \hline

\hline
$\left(512 \times 512\right)$ & $1600.08$ & $50.96$ & $20.41$ & $128.10$ \\
\hline
$\left(1024 \times 1024\right)$ & $1604.19$ & $50.80$ & $29.52$ & $129.40$ \\
\hline
$\left(2048 \times 2048\right)$ & $1605.22$ & $50.76$ & $31.60$ & $129.90$ \\
\hline
\end{tabular}
% \vspace{+0.2cm}
\end{table}

%\bibliographystyle{plain}
%\bibliography{paperbib_cc}
%\setlength{\bibsep}{0pt plus 0.3ex}

\vfill
\end{document}